\theoremstyle{plain}
\newtheorem{lemma}{Lemma}[section]
\newtheorem{theorem}[lemma]{Theorem}
\newtheorem{corollary}[lemma]{Corollary}
\newtheorem{proposition}[lemma]{Proposition}
\theoremstyle{definition}
\newtheorem{definition}[lemma]{Definition}
\newtheorem{example}[lemma]{Example}
\newtheorem{remark}[lemma]{Remark}
\numberwithin{equation}{section}
\begin{document}

%
%
%
%

\title{Lax functors and coalgebraic weak bisimulation}
\author[T. Brengos]{Tomasz Brengos}
\email{t.brengos@mini.pw.edu.pl}
\keywords{bisimulation, coalgebra, epsilon transition, labelled transition system, tau transition, internal transition, logic, monad, quantaloid, lax functor, presheaf, saturation,  weak bisimulation}
\thanks{}

\address{Faculty of Mathematics and Information Science\\
         Warsaw University of Technology\\ 
         ul. Koszykowa 75 \\       
         00-662 Warszawa, Poland}

\maketitle
\begin{abstract}
We generalize the work by Soboci\'nski on relational presheaves and their connection with weak (bi)simulation for labelled transistion systems to a coalgebraic setting. We show that the coalgebraic notion of saturation studied in our previous work can be expressed in the language of lax functors in terms of existence of a certain adjunction between categories of lax functors. This observation allows us to generalize the notion of the coalgebraic weak bisimulation to lax functors. We instantiate this definition on two examples of timed systems and show that it coincides with their time-abstract behavioural equivalence. 
\end{abstract}

\tableofcontents

\section{Introduction}

We have witnessed a rapid development of the theory of coalgebras as a unifying theory for state-based systems \cite{gumm:elements,rutten:universal}.  A coalgebra can be thought of as an abstract representation of a single step of computation of a given process. The theory of coalgebras provides a good setting for the study of bisimulation \cite{rutten:universal,staton11}. The notion of a strong bisimulation for different transition systems plays an important role in theoretical computer science. Weak bisimulation is a relaxation of this notion by allowing the so-called $\tau$-moves, i.e. silent, unobservable transitions.  One of several (equivalent) ways to define Milner's weak bisimulation \cite{milner:cc,sangiorgi2011:bis} on a labelled transition system $\alpha$ is to consider it as a strong bisimulation on its closure $\alpha^{*}$.  Labelled transition systems closure is reduced to finding the smallest LTS containing all transitions of the original structure and satisfying the rules \cite{sangiorgi2011:bis}: 

\begin{align}
\infer{x\stackrel{\tau}{\to} x}{}\qquad \infer{x\stackrel{a}{\to} x''}{x\stackrel{a}{\to} x' & x'\stackrel{\tau}{\to} x''} \qquad \infer{x\stackrel{a}{\to} x''}{x\stackrel{\tau}{\to} x' & x'\stackrel{a}{\to} x''} \label{rules_saturated_coalgebra}
\end{align}

It has been shown in \cite{brengos2014:cmcs,brengos2015:lmcs,brengos2015:jlamp} that from the point of view of the theory of coalgebra the systems with silent moves should be considered as coalgebras over a monadic type. This allows us to abstract away from a specific structure on labels and consider systems of the type $X\to TX$ for a monad $T$. A coalgebra $X\to TX$ is an endomorphism in the so-called Kleisli category for the monad $T$. The rules presented above that describe a closed LTS structure can be restated in terms of the composition in an order enriched Kleisli category by the following two axioms~\cite{brengos2015:lmcs}:
\begin{align}
id\leq \alpha \text{ and } \alpha \circ \alpha \leq \alpha. \label{ineq:monads_in}
\end{align}
Intuitively, these two rules say that a coalgebra satisfying them is reflexive and transitive. 
Hence, coalgebraically, Milner's weak bisimulation on a labelled transition system $\alpha$ is, in fact, a strong bisimulation on the reflexive and transitive closure of $\alpha$ taken in the suitable Kleisli category.\footnote{Categorically, any endomorphism from an order enriched category which satisfies (\ref{ineq:monads_in}) is a monad \emph{in} the underlying category. Thus, labelled transition systems saturation may be also viewed as assigning to a given LTS the free monad generated by it \cite{brengos2015:lmcs,brengos2015:jlamp}.} However, there are several examples of systems with silent moves (e.g. fully probabilistic systems \cite{baier97:cav}) for which the definition of weak bisimulation is more subtle and is not a strong bisimulation on the reflexive and transitive closure of a system. 

In \cite{brengos2015:jlamp} we presented a general coalgebraic setting where we introduce the notion of weak bisimulation. It encompasses an extensive list of transition systems among which we find labelled transition systems, Segala systems \cite{segalalynch94:concur,brengos2015:lmcs} and fully probabilistic systems. Moreover, we identified the condition under which weak bisimulation can be defined as a strong bisimulation on a reflexive and transitive closure. The condition is \emph{saturation admittance} of the underlying category. Saturation can be intuitively understood as a reflexive and transitive closure which additionally preserves weak homomorphisms. Although all examples of coalgebras considered in \cite{brengos2015:jlamp} admit reflexive and transitive closure, not all admit {saturation}.  Saturation  is the key ingredient in the definition of weak bisimulation. Therefore, in order to define weak bisimulation one has to consider saturation admittance first. In case of its absence, we move from the underlying category to a new category with the desired properties and perform saturation there. 
 
More formally, a given category admits (coalgebraic) saturation if there is a left adjoint to the inclusion functor from the category of reflexive and transitive endomorphisms to the category of all endomorphisms \cite{brengos2015:jlamp}. The link between (not necessarily coalgebraic) weak bisimulation saturation and existence of a certain adjunction has emerged several times in the literature before. To our knowledge, this goes back to \cite{fiore99} where the authors present the definition of weak bisimilarity for presheaves. The main component of their construction is an adjunction between certain slice categories. Nevertheless, the motivations for our paper stem mainly from more recent work on relational presheaves, i.e. lax functors $\mathbb{D}\to \mathsf{Rel}$ \cite{sobocinski:jcss}. Soboci\'nski in \emph{loc. cit.} shows that different types of systems, including labelled transition systems, may be viewed as relational presheaves and that  labelled transition system saturation may be encoded in terms of an adjunction between certain categories of relational presheaves, where the right adjoint is the so-called change-of-base functor. The category $\mathsf{Rel}$ of sets as objects and binary relations as morphisms is isomorphic to the Kleisli category for the powerset monad $\mathcal{P}$. If we generalize Soboci\'nski's approach to lax functors whose codomain is the Keisli category for an arbitrary monad $T$ we obtain results which are consistent with our previous work on coalgebraic saturation and weak bisimulation \cite{brengos2015:lmcs,brengos2015:jlamp}. 
 
\subsection*{Content and organization of the paper} The main contributions of the paper are the following:
\begin{itemize}
\item we generalize the results by Soboci\'nski on existence of a left adjoint to the change-of-base functor to the setting of arbitrary lax functors whose codomain is an order enriched category;
\item  we show that coalgebraic saturation and weak bisimulation from our previous work \cite{brengos2015:lmcs,brengos2015:jlamp} and Soboci\'nski's work on relational presheaves have a common denominator. Coalgebraic saturation can, in fact, be understood as a consequence of existence of the left adjoint to the change-of-base functor between certain categories of lax functors;
\item we define weak bisimulation for lax functors. This relation takes into account a cumulative behaviour of a lax functor. We show that there are examples of timed systems (e.g. timed processes semantics \cite{Larsen199775} or continuous time Markov chains \cite{books/daglib/0095301}) which can be naturally modelled as lax functors, for which weak bisimulation turns out to be the so-called time-abstract bisimulation.
\end{itemize}

The paper is organized as follows. In Section \ref{section:basics} we recall basic definitions and properties required in the remainder of the paper. In Section \ref{section:lax_functors} we consider the notion of a lax functor and present several examples of lax functor categories. We relate some of them to certain categories of coalgebras. Here, we also focus on the change-of-base functor and existence of its left adjoint. Section \ref{section:weak_bisimulation_lax_functors} is devoted to the presentation of the
notion of weak bisimulation for lax functors. We instantiate this definition on timed processes semantics and continuous time Markov chains and show that it models their time-abstract behavioural equivalence. Moreover, we observe that the lax functorial weak bisimulation extends the coalgebraic weak bisimulation. 

\section{Basic notions}\label{section:basics}
We assume the reader is familar with the following basic category theory notions: a category, a functor, a monad and an adjunction (see e.g. \cite{maclane:cwm}  for an introduction to category theory). We will now briefly recall some of them here and also present other basics needed in this paper.

 For a family  $\{X_i\}_{i\in I}$ of objects and a family $\{f_i:X_i\to Y\}_{i\in I}$ of morphisms in a given category if the coproduct of  $\{X_i\}$ exists then we denote it by $\sum_{i}X_i$ and denote the cotuple from $\sum_i X_i$ to $Y$ by $[\{f_i\}]:\sum_i X_i \to Y$ or simply by $[f_i]$. In this case, $\mathsf{in}_i:X_i\to \sum_i X_i$ is the coprojection into the $i$-th component of $\sum_i X_i$.

\subsection{Coalgebras}
Let $\mathsf{C}$ be a category and $F \colon \mathsf{C}\rightarrow \mathsf{C}$ a functor.  An \emph{$F$-coalgebra} is a morphism $\alpha:X\to FX$ in $\mathsf{C}$.  The domain $X$ of $\alpha$ is called \emph{carrier} and the morphism  $\alpha$ is sometimes also called \emph{structure}. A \emph{homomorphism} from an $F$-coalgebra $\alpha:X\to FX$ to an $F$-coalgebra $\beta:Y\to FY$  is an arrow $f \colon X\rightarrow Y$ in $\mathsf{C}$ such that $ F(f)\circ \alpha =  \beta \circ f$. The category of all $F$-coalgebras and homomorphisms between them is denoted by $\mathsf{C}_F$. Many transition systems can be captured by the notion of coalgebra. The most important from our perspective are listed below. 

Let $\Sigma$ be a fixed set and put $\Sigma_\tau = \Sigma+\{\tau\}$. The label $\tau$ is considered a special label called \emph{silent} or \emph{invisible} label. 

\subsubsection{Labelled transition systems} $\mathcal{P}(\Sigma_\tau\times \mathcal{I}d)$-coalgebras are  labelled transition systems over the alphabet $\Sigma_\tau$ \cite{milner:cc,rutten:universal, sangiorgi2011:bis}. Here, $\mathcal{P}$ denotes the powerset functor. In this paper we also consider labelled transition systems with a monoid structure on labels, i.e. coalgebras of the type $\mathcal{P}(M\times \mathcal{I}d)$,  or even more generally, as coalgebras of the type $\mathcal{P}(\Sigma_\tau\times M\times \mathcal{I}d)$ for a monoid $(M,\cdot,1)$. 


\subsubsection{Fully probabilistic systems} Originally, fully probabilistic systems \cite{baier97:cav} were modelled as $\mathcal{D}(\Sigma_\tau\times \mathcal{I}d)$-coalgebras \cite{sokolova11}, where $\mathcal{D}$ denotes the distribution functor. However, following the guidelines of \cite{mp2013:weak-arxiv,gp:icalp2014,brengos2015:jlamp}, in this paper we extend the type of these systems and  consider them as $\mathbb{F}_{[0,\infty]}(\Sigma_\tau\times \mathcal{I}d)$-coalgebras. Here, $\mathbb{F}_{[0,\infty]}$ is the $\mathsf{Set}$-endofunctor defined for any  set $X$ and map $f:X\to Y$ by:
\begin{align*}
&\mathbb{F}_{[0,\infty]}X = \{ \phi:X\to [0,\infty] \mid \text{supp } \phi \text{ is at most countable}\},\\
&\mathbb{F}_{[0,\infty]}f(\phi)(y) = \sum_{x:y=f(x)}\phi(x) \text{ for }\phi \in \mathbb{F}_{[0,\infty]}X,
\end{align*}
where $[0,\infty]$ is the semiring $([0,\infty],+,\cdot)$ of non-negative real numbers with infinity with standard addition and multiplication. Note that the distribution functor $\mathcal{D}$ is a subfunctor of $\mathbb{F}_{[0,\infty]}$ and, hence, any coalgebra $X\to \mathcal{D}(\Sigma_\tau\times X)$ can be naturally translated into $X\to \mathbb{F}_{[0,\infty]}(\Sigma_\tau\times X)$ \cite{brengos2015:jlamp}. 

In order to simplify our notation we will sometimes denote $\phi \in \mathbb{F}_{[0,\infty]}X$ by $\sum_{x\in X}\phi(x)\cdot x$ or $\sum_{i=1}^\infty \phi(x_i)\cdot x_i$ if $\text{supp }\phi =\{x_1,x_2,\ldots\}$.

\subsubsection{Filter coalgebras} These are systems of the type $\mathcal{F}$, where $\mathcal{F}:\mathsf{Set}\to\mathsf{Set}$ denotes the filter functor which assigns to any set $X$ the set $\mathcal{F}X$ consisting of all filters on $X$ and to $f:X\to Y$ the map
$
\mathcal{F}f:\mathcal{F}X\to \mathcal{F}Y$ assigning to any filter $\mathcal{G}$ the filter $\{Y'\subseteq Y\mid f^{-1}(Y')\in \mathcal{G}\}$.
See e.g. \cite{gumm:elements} for details.
\subsection{Monads and their Kleisli categories} 
A \emph{monad} on $\mathsf{C}$ is a triple $(T,\mu,\eta)$, where $T:\mathsf{C}\to \mathsf{C}$ is an endofunctor and $\mu:T^2\implies T$, $\eta:\mathcal{I}d\implies T$ are two natural transformations for which the following diagrams commute:
$$
\xymatrix{
T^3\ar[d]_{T\mu} \ar[r]^{\mu_T} & T^2\ar[d]^{\mu} & & T\ar[d]_{\eta_T}\ar[r]^{T\eta}\ar@{=}[dr] & T^2\ar[d]^{\mu} \\
T^2 \ar[r]_{\mu} & T & &T^2 \ar[r]_\mu & T
}
$$
The transformation $\mu$ is called  \emph{multiplication} and $\eta$ \emph{unit}.
Any monad gives rise to the Kleisli category for $T$. To be more precise, if $(T,\mu,\eta)$ is a monad on a category $\mathsf{C}$ then the \emph{Klesli category} $\mathcal{K}l(T)$ for $T$ has the class of objects equal to the class of objects of $\mathsf{C}$ and for two objects $X,Y$ in $\mathcal{K}l(T)$ we have ${\mathcal{K}l(T)}(X,Y) = {\mathsf{C}}(X,TY)$
with the composition in $\mathcal{K}l(T)$ defined between two morphisms $f:X\to TY$ and $g:Y\to TZ$ by $\mu_Z \circ T(g) \circ f$. The category $\mathsf{C}$ is a subcategory of $\mathcal{K}l(T)$ where the inclusion functor $^{\sharp}$ sends each object $X\in \mathsf{C}$ to itself and each morphism $f:X\to Y$ in $\mathsf{C}$ to the morphism $f^{\sharp}:X\to TY; f^{\sharp} = \eta_Y\circ f$. 

\subsubsection{Powerset monad}
The powerset endofunctor $\mathcal{P}:\mathsf{Set}\to \mathsf{Set}$ is a monad whose multiplication $\mu:\mathcal{P}^2\implies\mathcal{P}$ and unit $\eta:\mathcal{I}d\implies \mathcal{P}$ are given on their $X$-components by: $\mu_X:\mathcal{P}\mathcal{P}X\to \mathcal{P}X; S \mapsto \bigcup S \text{ and }\eta_X:X\to \mathcal{P}X; x\mapsto \{x\}$.
The category $\mathcal{K}l(\mathcal{P})$ consists of sets as objects and maps of the form $X\to \mathcal{P}Y$ as morphisms. For $f:X\to \mathcal{P}Y$ and $g:Y\to \mathcal{P}Z$ the composition $g\circ f:X\to \mathcal{P}Z$ is as follows:
\begin{align*}
& g\circ  f(x) = \bigcup g(f(x))= \{z \mid z\in g(y) \text{ \& }y\in f(x) \text{ for some }y\in Y\}. 
\end{align*} 
For any two sets $X,Y$ there is a bijective correspondence between maps $X\to \mathcal{P}Y$ and binary relations between elements of $X$ and $Y$. 
Indeed, for $f:X\to \mathcal{P}Y$ we put $R_f\subseteq X\times Y$, $(x,y)\in R_f \iff y\in f(x)$ and for $R\subseteq X\times Y$ we define $f_R:X\to \mathcal{P}Y;x\mapsto \{y\mid x R y\}$. It is now easy to see that the category $\mathcal{K}l(\mathcal{P})$ is isomorphic to the category $\mathsf{Rel}$ of sets as objects, binary relations as morphisms and relation composition as morphism composition.

\subsubsection{LTS monad(s)}\label{subsection:lts_monads}
Labelled transition systems functor $\mathcal{P}(\Sigma_\tau\times \mathcal{I}d)$ carries a monadic structure 
$(\mathcal{P}(\Sigma_\tau\times \mathcal{I}d),\mu,\eta)$ \cite{brengos2015:lmcs}, where the $X$-components of $\eta$ and $\mu$ are:
\begin{align*}
&\eta_X(x) = \{(\tau,x)\} \text{ and } \mu_X(S) = \hspace{-0.3cm} \bigcup_{(\sigma,S')\in S}\hspace{-0.3cm} \{(\sigma,x)\mid (\tau,x)\in S'\}\cup \hspace{-0.3cm} \bigcup_{(\tau,S')\in S} \hspace{-0.3cm} S'.
\end{align*}
For $f:X\to \mathcal{P}(\Sigma_\tau\times Y)$ and $g:Y\to \mathcal{P}(\Sigma_\tau\times Z)$ the composition $g\circ f$ in $\mathcal{K}l(\mathcal{P}(\Sigma_\tau\times \mathcal{I}d))$ is
$
g\circ f(x)= \{(\sigma,z)\mid x\stackrel{\sigma}{\to}_f y \stackrel{\tau}{\to}_g z \text{ or } x\stackrel{\tau}{\to}_f y \stackrel{\sigma}{\to}_g z\},
$
where $x\stackrel{\sigma}{\to}_f y$ denotes $(\sigma,y)\in f(x)$.

As mentioned above, we will sometimes consider labelled transition systems as coalgebras of the type $\mathcal{P}(M\times \mathcal{I}d)$, or even more generally as coalgebras of the type $\mathcal{P}(\Sigma_\tau\times M\times \mathcal{I}d)$ for a monoid $M=(M,\cdot,1)$.  The latter functor carries a monadic structure which is a consequence of an application of the writer monad transformer to the LTS monad defined above with $M$ as the argument \cite{liang95}. Its unit and multiplication are given on their components as follows:
\begin{align*}
&\eta_X(x)= \{(\tau,1,x)\},\\
 & \mu_X(S)= \hspace{-0.5cm}\bigcup_{(\sigma,m,S')\in S} \hspace{-0.5cm} \{(\sigma,m\cdot n,x)\mid (\tau,n,x)\in S'\}\cup \hspace{-0.5cm} \bigcup_{(\tau,m,S')\in S}\hspace{-0.5cm} \{(\sigma,m\cdot n, x) \mid (\sigma,n,x)\in S'\}. 
\end{align*}
The composition $g\circ f$  of $f:X\to \mathcal{P}(\Sigma_\tau \times M\times Y)$ and $g:Y\to \mathcal{P}(\Sigma_\tau \times M\times Z)$ in the Kleisli category for this monad is:
$$
g\circ f(x) =  \{(\sigma,m\cdot n,z) \mid  x\stackrel{(\sigma,m)}{\to}_f y \stackrel{(\tau,n)}{\to}_g z \text{ or } x\stackrel{(\tau,m)}{\to}_f y \stackrel{(\sigma,n)}{\to}_g z \}.
$$
We see that if $\Sigma =\varnothing$ then $\mathcal{P}(\Sigma_\tau \times M\times \mathcal{I}d) \cong \mathcal{P}(M\times \mathcal{I}d)$ and if $M=1$ is the one-element monoid then $\mathcal{P}(\Sigma_\tau \times M\times \mathcal{I}d)\cong \mathcal{P}( \Sigma_\tau\times \mathcal{I}d)$. Whenever $\Sigma=\varnothing$ and $M=1$ then this monad becomes (isomorphic to) $\mathcal{P}$. 

In order to simplify our notation we will often denote $\mathcal{P}(\Sigma_\tau\times M\times \mathcal{I}d)$ by $\mathcal{P}^{\Sigma,M}$.

\subsubsection{Quantale valued monad} 
Let $\mathcal{Q}=(\mathcal{Q},\cdot ,1,\leq)$ be a \emph{unital quantale}, i.e. a relational structure for which 
\begin{enumerate}
\item $(\mathcal{Q},\cdot,1)$ is a monoid,
\item $(\mathcal{Q},\leq)$ is a complete lattice,
\item arbitrary suprema are preserved by the monoid multiplication.
\end{enumerate}
An arbitrary unital quantale $\mathcal{Q}$ gives rise to the $\mathsf{Set}$-based monad $\mathcal{Q}^{(-)}$, called \emph{quantale valued monad}, which assigns to any set $X$ the set of all functions $\mathcal{Q}^X$ from $X$ to $\mathcal{Q}$ and to any map $f:X\to Y$ the map $\mathcal{Q}^f:\mathcal{Q}^X\to \mathcal{Q}^Y$ given by:
\begin{align*}
\mathcal{Q}^f(\phi)(y) = \bigvee_{x:f(x)=y} \phi(x).
\end{align*}
The monadic structure $(\mathcal{Q}^{(-)},\mu,\eta)$ is given by the $X$-components of $\eta$ and $\mu$:
\begin{align*}
\eta_X(x)(x') = \left\{ \begin{array}{cc} 1 & \text{ if } x=x' \\ \perp & \text{ otherwise.} \end{array} \right . \text{ and }\mu_X(\phi)(x) = \bigvee_{\psi\in \mathcal{Q}^X} \phi(\psi)\cdot \psi(x).
\end{align*}
The Kleisli category $\mathcal{K}l(\mathcal{Q}^{(-)})$ is isomorphic to the category $\mathsf{Mat}(\mathcal{Q})$ of sets as objects and $\mathcal{Q}$-matrices, i.e. functions $X\times Y\to \mathcal{Q}$, as morphisms between $X$ and $Y$ \cite{rosen:quanta96}.

Interestingly, the LTS monad $\mathcal{P}(\Sigma_\tau \times M\times \mathcal{I}d)$ for a monoid $M=(M,\cdot,1)$ can be viewed as an example of a quantale valued monad. Indeed, it is isomorphic to a quantale valued monad for the quantale $(\mathcal{P}({\Sigma_\tau\times M}),\cdot,\{(\tau,1)\},\subseteq)$, where $\cdot$ is given by:
$$
A\cdot B = \{ (\sigma,m\cdot n) \mid (\tau,m)\in A \text{ and }(\sigma,n)\in B \text{ or }(\sigma,m)\in A \text{ and }(\tau,n)\in B\}.
$$ 
\subsubsection{Fully probabilistic systems monad} Before we elaborate more on the fully probabilistic systems functor we focus on describing the monadic structure of $\mathbb{F}_{[0,\infty]}$ first. Multiplication $\mu$ and unit $\eta$ of this monad are given on their $X$-components by \cite{brengos2015:jlamp,gp:icalp2014,mp2013:weak-arxiv}:
\begin{align*}
\eta_X(x) = 1\cdot x \text{ and }
\mu_X(\phi) = \sum_{\psi \in \mathbb{F}_{[0,\infty]}X} \phi(\psi) \cdot \psi.
\end{align*}

The fully probabilistic systems functor $\mathbb{F}_{[0,\infty]}(\Sigma_\tau\times \mathcal{I}d)$ is a monad which arises as a consequence of a general construction of a monadic structure on the functor $T(\Sigma_\tau \times \mathcal{I}d)$ for a $\mathsf{Set}$-based monad $T$ presented in \cite{brengos2015:lmcs} and applied to the fully probabilistic systems functor case in \cite{brengos2015:jlamp}. The composition in $\mathcal{K}l(\mathbb{F}_{[0,\infty]}(\Sigma_\tau\times \mathcal{I}d))$ is given for $f:X\to \mathbb{F}_{[0,\infty]}(\Sigma_\tau\times Y)$ and $g:Y\to \mathbb{F}_{[0,\infty]}(\Sigma_\tau\times Z)$ by \cite{brengos2015:jlamp}:
\begin{align*}
&g\circ f(x)(\sigma,z) =
\left \{ \begin{array}{cc} \sum_{y\in Y}g(y)(\tau,z)\cdot f(x)(\tau,y) & \text{if }\sigma = \tau,\\ \sum_{y\in Y}g(y)(\sigma,z)\cdot f(x)(\tau,y) + g(y)(\tau,z)\cdot f(x)(\sigma,y) & \text{otherwise.} \end{array} \right. 
\end{align*}

\subsubsection{Filter monad}
 The filter functor $\mathcal{F}$ carries a monadic structure $(\mathcal{F},\mu,\eta)$ given by (see e.g. \cite{gahler91}):
\begin{align*}
&\mu_X:\mathcal{F}\mathcal{F}X\to \mathcal{F}X; \mathcal{G}\mapsto \mu_X(\mathcal{G}),\text{ and }\eta_X:X\to \mathcal{F}X; x\mapsto \{U\subseteq X\mid x\in U\},
\end{align*}
where $\mu_X(\mathcal{G}) = \{A\subseteq X\mid A^\mathcal{F} \in \mathcal{G}\}$ with $A^\mathcal{F} = \{\mathcal{H}\in\mathcal{F}X\mid A\in \mathcal{H} \}$ defined as the set of all filters on $X$ containing $A$.

\subsection{Coalgebras with internal moves}\label{subsection:coalgebras_with_internal} Originally \cite{hasuojacobssokolova2006:jssst,silvawesterbaan2013:calco}, coalgebras with internal moves were introduced in the context of coalgebraic trace semantics as coalgebras of the type $T(F+\mathcal{I}d)$ for a monad $T$ and an endofunctor $F$ on a common category. In \cite{brengos2015:lmcs} we showed that given some mild assumptions on $T$ and $F$ we may either introduce a monadic structure on $T(F+\mathcal{I}d)$ or embed it into the monad $TF^{*}$, where $F^{*}$ is the free monad over $F$. It is worth noting here that, both, the LTS monad $\mathcal{P}(\Sigma_\tau\times \mathcal{I}d)$ and the fully probabilistic systems monad $\mathbb{F}_{[0,\infty]}(\Sigma_\tau\times \mathcal{I}d)$ arise by the application of the first construction \cite{brengos2015:lmcs,brengos2015:jlamp}.

The trick of modelling  the silent steps via a monad allows us \emph{not} to specify the internal moves explicitly. Instead of considering $T(F+\mathcal{I}d)$-coalgebras we consider $T'$-coalgebras for a monad $T'$. Hence, the term ``coalgebras with internal moves" becomes synonymous to ``coalgebras whose type is a monad".  Since coalgebras with silent transitions are of primary interest to this paper, we assume, unless stated otherwise, that all coalgebras considered here are systems whose type functor $T$ carries a monadic structure. 

To give a $T$-coalgebra is to give an endomorphism in $\mathcal{K}l(T)$. We use this observation and present our results in as general setting as possible. Hence, we will often replace $\mathcal{K}l(T)$ with an arbitrary  category $\mathsf{K}$ and work in the context of endomorphisms of $\mathsf{K}$ bearing in mind our prototypical example of $\mathsf{K}=\mathcal{K}l(T)$. 
%

\subsection{Order enriched categories}
In this paper we work with standard $2$-category and enriched category notions. However, since the only type of enrichment we consider is (several types of) order enrichment, we recall basic definitions only from the point of view of the structures we are interested in. The reader is referred to e.g. \cite{kelly82,lack:09} for a more general perspective.
 A category is \emph{order enriched} if it is $\mathsf{Pos}$-enriched, where $\mathsf{Pos}$ denotes the category of all posets and monotonic maps between them. In other words, a category is order enriched if each hom-set is a poset with the order preserved by the composition.  A  functor between two order enriched categories is \emph{locally monotonic} if it preserves the order. 
  
    We will be often interested in stronger types of enrichment: $\mathsf{Sup}$, $\omega\mathsf{Cpo}$-, $\mathsf{DCpo}$-, $\omega\mathsf{Cpo}^\vee$- and $\mathsf{DCpo}^\vee$-enrichment. Here, $\mathsf{Sup}$ denotes the category of posets which admit arbitrary suprema as objects and maps that preserve suprema as morphisms. The category $\omega\mathsf{Cpo}$ consists of partially ordered sets that admit suprema of ascending $\omega$-chains as objects and maps that preserve them as morphisms. $\mathsf{DCpo}$ is the category of posets that admit suprema of arbitrary directed sets and Scott-continuous maps.  Any $\mathsf{DCpo}$-enriched category  is also $\omega\mathsf{Cpo}$-enriched. Finally, $\omega\mathsf{Cpo}^\vee$ and $\mathsf{DCpo}^\vee$ are full subcategories of $\omega\mathsf{Cpo}$ and $\mathsf{DCpo}$ respectively whose objects admit binary joins. Note that all of these five categories are enriched over themselves.
    
    The order enrichment in a $\omega\mathsf{Cpo}^\vee$- or a $\mathsf{DCpo}^\vee$-enriched category only guarantees that for any morphisms with suitable domain and codomain we have:
$$
	f\circ h \vee g\circ h \leq (f\vee g) \circ h  
	\text{ and } h\circ f \vee  h\circ g \leq h\circ (f\vee g). 
$$ 
If the first (the second) inequality becomes an equality then we say that the given category is \emph{right} (resp. \emph{left}) \emph{distributive}. 


  A functor-like assignment $\pi$ from a category $\mathbb{D}$ to an order enriched category $\mathsf{K}$  is called \emph{lax functor} if:
\begin{itemize}
\item $id_{\pi D}\leq \pi(id_D)$ for any object $D\in \mathbb{D}$,
\item $\pi(d_1)\circ \pi(d_2) \leq \pi(d_1\circ d_2)$ for any two composable morphisms $d_1,d_2\in \mathbb{D}$.
\end{itemize} 
Let $\pi,\pi':\mathbb{D}\to \mathsf{K}$ be two lax functors. A family $f=\{f_D:\pi D\to\pi'D\}_{D\in \mathbb{D}}$ of morphisms in $\mathsf{K}$ is called \emph{lax natural transformation} if  for any $d:D\to D'$ in $\mathbb{D}$ we have
$f_{D'}\circ \pi(d)\geq \pi'(d) \circ f_D$. \emph{Oplax functors} and \emph{oplax transformations} are defined by reversing the order in the above. Note that in the more general 2-categorical setting an (op)lax functor and an (op)lax natural transformation are assumed to additionally satisfy extra coherence conditions \cite{lack:09}.  In our setting of order enriched categories these conditions are vacuously true, hence we do not list them here.

Let $\mathsf{K}$ and $\mathsf{K}'$ be two order enriched categories. Given two locally monotonic functors $F:\mathsf{K}\to \mathsf{K}'$ and $U:\mathsf{K}'\to \mathsf{K}$ a \emph{$2$-adjunction} is a family of isomorphisms of posets $\{\phi_{X,Y}:\mathsf{K}'(FX,Y) \cong \mathsf{K}(X,UY)\}_{X\in \mathsf{K}, Y\in \mathsf{K}'}$ natural in $X$ and $Y$. In this case $F$ and $U$ are called \emph{left-} and \emph{right 2-adjoint} respectively. Finally, a locally monotonic faithful functor $F:\mathsf{K}\to \mathsf{K}'$  is said to be \emph{locally reflective} provided that for any objects $X,Y\in \mathsf{K}$ the restriction $F_{X,Y}:\mathsf{K}(X,Y)\to \mathsf{K}'(FX,FY)$ is a functor between posets $\mathsf{K}(X,Y)$ and $\mathsf{K}'(FX,FY)$ viewed as categories which additionally admits a left adjoint. In this case the order enriched category $\mathsf{K}$ is called \emph{locally reflective subcategory} of $\mathsf{K}'$. 

The Kleisli category for monads considered in the previous subsection is order enriched with the order on hom-sets imposed by a natural pointwise order on $TY$, whose definition and properties are summarized in the table below. Since whenever $\Sigma=\varnothing$ we have $\mathbb{F}_{[0,\infty]}\cong \mathbb{F}_{[0,\infty]}(\Sigma_\tau~\times~\mathcal{I}d)$ the monad $\mathbb{F}_{[0,\infty]}$ is not mentioned below explicitly.  For $f,g:X\to TY$ in $\mathcal{K}l(T)$ for a suitable monad we have:
\begin{savenotes}
\begin{table}[h]
\begin{tabular}{|p{2.3cm}|p{4cm}||p{1cm}|p{0.7cm}|p{0.7cm}|p{1cm}|}
\hline 
Monads & $f\leq g$ if and only if &  $\mathsf{DCpo}^\vee$-enr.& left dist. & right dist. &Ref.  \\\hline \hline
$\mathcal{Q}^{(-)}$ & $\forall x\in X$, $\forall y\in Y$ $f(x)(y)~\leq~g(x)(y),$ &\checked &\checked  &\checked &  \footnote{Any quantale valued monad $\mathcal{Q}^{(-)}$ (hence, also the LTS monad $\mathcal{P}(\Sigma_\tau \times M\times \mathcal{I}d)$)  yields a $\mathsf{Sup}$-enriched  Kleisli category $\mathcal{K}l(\mathcal{Q}^{(-)})\cong \mathsf{Mat}(\mathcal{Q})$ \cite{rosen:quanta96}. } \\ \hline 
$\mathbb{F}_{[0,\infty]}(\Sigma_\tau\times~\mathcal{I}d)$ & $\forall x\in X,  y\in Y, \sigma\in \Sigma_\tau$  $f(x)(\sigma,y) \leq g(x)(\sigma,y)$, &\checked &$\times$  &$\times$ & \cite{gp:icalp2014,mp2013:weak-arxiv,brengos2015:jlamp} \\\hline
$\mathcal{F}$ & $\forall x\in X, f(x)\supseteq g(x)$ &\checked &\checked  &\checked &\cite{gahler91,seal09} \\\hline
\end{tabular}
\end{table}
\end{savenotes}

\subsection{Coalgebras and functional simulations}
Assume that a monad $(T,\mu,\eta)$ on $\mathsf{C}$ gives rise to an order-enriched category $\mathcal{K}l(T)$. By $\mathsf{C}_{T,\leq}$ we denote the category whose objects are exactly the objects from $\mathsf{C}_T$ and whose morphisms are oplax homomorphisms. A morphism $f:X\to Y$ in $\mathsf{C}$ is a \emph{oplax homomorphism} between $T$-coalgebras $\alpha:X\to TX$ and $\beta:Y\to TY$ if:

$$
\xymatrix{
X\ar[d]_\alpha \ar[r]^f \ar@{}[dr]|\leq & Y\ar[d]^\beta \\
TX \ar[r]_{Tf} & TY 
}
$$
The inequality in the above diagram can be restated in terms of the composition in $\mathcal{K}l(T)$ by $f^\sharp\circ \alpha \leq \beta\circ f^\sharp$.

 Coalgebras with morphisms satisfying a similar condition  were studied in e.g. \cite{hasuo06} in the context of forward simulations. However, in \emph{loc. cit.} these morphisms are taken from the Kleisli category not the base category. 
 
 These morphisms can be intuitively understood as functional morphisms preserving (and not necessarily reflecting) transitions. In the case of labelled transition systems, the category $\mathsf{Set}_{\mathcal{P}(\Sigma_\tau\times \mathcal{I}d),\leq}$ has all LTS as objects and as morphisms maps between the carriers satisfying the following implication:
$$x\stackrel{a}{\to}_\alpha x' \implies f(x)\stackrel{a}{\to}_\beta f(x') \text{ for any }a\in \Sigma_\tau.$$

Since, as mentioned before, we treat coalgebras with silent moves as endomorphisms in suitable Kleisli categories it is natural to replace $\mathsf{C}_{T,\leq}$ with its endomorphism generalization. Let $J$ be a subcategory of $\mathsf{K}$ with all objects from $\mathsf{K}$. We define $\mathsf{End}_J^\leq (\mathsf{K})$ to be the category of all endomorphisms of $\mathsf{K}$ as objects whose morphisms are given as follows. An arrow $f:X\to Y$ in $J$ is a morphism between $\alpha:X\to X$ and $\beta:Y\to Y$ in $\mathsf{End}_J^\leq (\mathsf{K})$ whenever $f\circ \alpha \leq \beta\circ f$. Whenever $J=\mathsf{K}$ we drop the subscript and write $\mathsf{End}^\leq (\mathsf{K})$ instead of $\mathsf{End}_\mathsf{K}^\leq (\mathsf{K})$. The category 
$\mathsf{End}_J^\leq (\mathsf{K})$ is order enriched with the order on hom-sets directly imposed by the order from $\mathsf{K}$.
\begin{example} If $J=\mathsf{C}$, $\mathsf{K}=\mathcal{K}l(T)$ then $\mathsf{End}_J^\leq (\mathsf{K}) = \mathsf{C}_{T,\leq}$. 
\end{example}

The category $\mathsf{End}_J^\leq (\mathsf{K})$, and therefore also $\mathsf{C}_{T,\leq}$, is a basic ingredient for defining coalgebraic weak bisimilarity. This fact will be justified in the subsection below.

\subsection{Coalgebraic (weak) bisimulation and saturation}\label{subsection:weak_bis_coal} The notions of  strong bisimulation have been well captured coalgebraically \cite{rutten:universal,staton11}. In this paper, we consider Staton's kernel bisimulation \cite{staton11} and instantiate it on single systems only. For a coalgebra $\alpha:X\to TX$  a relation $R\rightrightarrows X$ (i.e. a jointly monic span) in $\mathsf{C}$ is \emph{kernel bisimulation} (or simply \emph{bisimulation})  on $\alpha$ provided that it is a kernel pair of a coalgebraic homomorphism whose domain is $\alpha$. In other words, if there is $\beta:Y\to TY$ and an arrow $f:X\to Y\in \mathsf{C}$ such that $Tf\circ \alpha = \beta \circ f$ for which $R \rightrightarrows X$ is the kernel pair. Since this identity can be restated in terms of the composition in $\mathcal{K}l(T)$ as
$f^\sharp \circ \alpha = \beta \circ f^\sharp$,
we can generalize the definition of bisimulation to the setting of endomorphisms as follows. We say that a relation on $X$ in $J$ is \emph{(strong) bisimulation} on an endomorphism $\alpha:X\to X\in \mathsf{K}$ if it is a kernel pair of an arrow $f:X\to Y \in J$ for which there is $\beta:Y\to Y\in \mathsf{K}$ satisfying $f\circ \alpha = \beta \circ f$. If we take $\mathsf{K}=\mathcal{K}l(T)$ and $J=\mathsf{C}$ then Staton's kernel bisimulation and endomorphism bisimulation coincide.

In \cite{brengos2015:jlamp} we presented a common framework for defining weak bisimulation for coalgebras with internal moves which encompasses several well known instances of this notion for systems among which we find labelled transition systems and fully probabilistic systems. We will now show the basic components of this setting. 

As above, we work in the context of endomorphisms of a category $\mathsf{K}$. However, we additionally assume the following:
\begin{itemize}
\item $\mathsf{K}$ is small,
\item $\mathsf{K}$ is $\omega\mathsf{Cpo}^\vee$-enriched.
\end{itemize}
\begin{remark} \label{remark:smallness_problems} The assumption about smallness is crucial in the construction of a supercategory $\widehat{\mathsf{K}}$ of $\mathsf{K}$ which is $\omega\mathsf{Cpo}^\vee$- enriched and, additionally, left distributive\footnote{See Subsection \ref{subsection:local_embedding} for a detailed description of this construction in the context of $\mathsf{DCpo}^\vee$-enrichment. }. These properties guarantee that, although $\mathsf{K}$ does not always admit saturation, the new category does. However, although, the category $\mathsf{K}=\mathcal{K}l(T)$ is  $\omega\mathsf{Cpo}^\vee$-enriched for all examples of monads considered in this paper, it is never small.
Hence, seemingly this assumption renders the setting useless in our context. As noted in \cite[Rem. 3.2]{brengos2015:jlamp}, there are two potential solutions to the problem. The first solution is to rewrite the whole theory so that not necessarily locally small categories would fit it. Indeed, if the assumption about $\mathsf{K}$ being small is dropped then the hom-objects of $\widehat{\mathsf{K}}$ can form proper partially ordered \emph{classes}. Although the hom-objects of $\widehat{\mathsf{K}}$ would exhibit a $\omega\mathsf{Cpo}^\vee$-like enrichment, formally, this category would not be $\omega\mathsf{Cpo}^\vee$-enriched. The second solution is to take $\mathsf{K}$ to be a suitable full subcategory of $\mathcal{K}l(T)$. For instance, if we focus on a $\mathsf{Set}$-based monad $T$ and $T$-coalgebras whose carrier is of cardinality below $\kappa$ then we can put $\mathsf{K}$ to be the full subcategory of $\mathcal{K}l(T)$ consisting of exactly one set of cardinality $\lambda$ for every $\lambda<\kappa$. For $\kappa=\omega$ this category is dual to the Lawvere theory for $T$ (e.g. \cite{hyland:power:2007}). For the sake of brevity and clarity of the paper we adopt the second solution. Once the conditions to define weak bisimulation are established we will implicitly drop the assumption about smallness of $\mathsf{K}$.
\end{remark}
\begin{definition}\cite{brengos2015:jlamp} We say that a relation $R\rightrightarrows X$ in $J$ is \emph{weak bisimulation} on an endomorphism $\alpha:X\to X$ in $\mathsf{K}$ if it is a kernel pair of a weak behavioural morphism on $\alpha$.
\end{definition}
\noindent In order to complete the above definition we have to present the concept of a weak behavioural morphism. We say that an arrow $f:X\to Y$ in $J$ is \emph{weak behavioural morphism} on $\alpha:X\to X\in \mathsf{K}$ provided that there is an endomorphism $\beta:Y\to Y\in\mathsf{K}$ such that:
\begin{align}
\Theta(\widehat{f}\circ \widehat{\alpha}^\ast) = \Theta ( \widehat{\beta}\circ \widehat{f}). \label{id:WB}
\end{align}
There are several new symbols in the above equation that require an explanation. First of all, $\widehat{(-)}:\mathsf{K}\to \widehat{\mathsf{K}}$ is a locally reflective embedding of $\mathsf{K}$ into a left distributive $\omega\mathsf{Cpo}^\vee$-enriched category $\widehat{\mathsf{K}}$. In \cite{brengos2015:jlamp} we show that such an embedding always exists for $\mathsf{K}$. 
Secondly,  $(-)^\ast:\mathsf{End}^{\leq}(\widehat{\mathsf{K}})\to \mathsf{End}^{\leq\ast}(\widehat{\mathsf{K}})$ arises as the left adjoint in:
\begin{align}
\xymatrix{
\mathsf{End}^{\leq}(\widehat{\mathsf{K}})\ar@/^1pc/[r]^{(-)^\ast} \ar@{}[r]|\perp & \mathsf{End}^{\leq\ast}(\widehat{\mathsf{K}}),
 \ar@/^1pc/[l]^{}
}
\label{adjunction_coalgebraic_saturation} 
\end{align}
where the right adjoint is the inclusion functor from the full subcategory $\mathsf{End}^{\leq\ast}(\widehat{\mathsf{K}})$ of $\mathsf{End}^{\leq}(\widehat{\mathsf{K}})$ whose objects are endomorphisms $\alpha$ additionally satisfying $id\leq \alpha$ and $\alpha \circ \alpha \leq \alpha$.
The adjunction (\ref{adjunction_coalgebraic_saturation}), we refer to as \emph{coalgebraic saturation}, always exists whenever $\widehat{\mathsf{K}}$ is left distributive and $\omega\mathsf{Cpo}^\vee$-enriched \cite[Th. 3.12]{brengos2015:jlamp}. Its left adjoint maps $\widehat{\alpha}:\widehat{X}\to \widehat{X}$ in $\widehat{\mathsf{K}}$ to an endomorphism $\widehat{\alpha}^\ast:\widehat{X}\to \widehat{X}\in \widehat{\mathsf{K}}$ given by $\widehat{\alpha}^\ast = \bigvee_{n\in \mathbb{N}} (id\vee \widehat{\alpha})^n$. The endomorphism $\widehat{\alpha}^\ast$ is the least arrow satisfying $\widehat{\alpha}\leq \widehat{\alpha}^\ast$, $id\leq \widehat{\alpha}^\ast$ and $\widehat{\alpha}^\ast\circ \widehat{\alpha}^\ast\leq \widehat{\alpha}^\ast$. 

The final ingredient in (\ref{id:WB}) that requires an explanation is $\Theta$. Recall that $\widehat{(-)}:\mathsf{K}\to \widehat{\mathsf{K}}$ is locally reflective. This means that for any objects $X,Y\in \mathsf{K}$ the order preserving assignment
$
\widehat{(-)}:\mathsf{K}(X,Y)\to \widehat{\mathsf{K}}(\widehat{X},\widehat{Y})
$
admits a left adjoint. Here, it is denoted by $\Theta_{X,Y}:\widehat{\mathsf{K}}(\widehat{X},\widehat{Y})\to \mathsf{K}(X,Y)$ or simply by $\Theta$ if the subscript objects can be deduced from the context.

Weak behavioural morphisms can be characterized as follows. A given arrow $f:X\to Y$ in $J$ is a weak behavioural morphism on $\alpha:X\to X\in \mathsf{K}$ if and only if there exists $\beta:Y\to Y$ such that:
\begin{align}
\alpha^\ast_f = \beta\circ f, \label{equation:ps}
\end{align}
where $\alpha^\ast_f = \mu x. (f\vee x\circ \alpha)$ is the least fixed point of  $x\mapsto f\vee x\circ \alpha$ \cite{brengos2015:jlamp}.
We will now instantiate this setting on two most prominent examples from \emph{loc. cit.}

\subsubsection{LTS weak bisimulation}\label{subsub:lts_weak_bis}
As mentioned before, the Kleisli category for the LTS monad $\mathcal{P}(\Sigma_\tau\times \mathcal{I}d)$ is $\mathsf{Sup}$-enriched. As a consequence (see \cite[Th. 3.20]{brengos2015:jlamp}):
$$\alpha^\ast_f = f\circ \alpha^\ast,$$ for any labelled transition system  $\alpha:X\to \mathcal{P}(\Sigma_\tau\times X)$ and an arrow $f$ whose domain is $X$ with the composition computed in $\mathcal{K}l(\mathcal{P}(\Sigma_\tau\times \mathcal{I}d))$. Hence, the equation (\ref{equation:ps}) stated in terms of commutativity of a diagram in $\mathsf{Set}$ becomes: 
$$
\xymatrix{
X\ar[d]_{\alpha^\ast} \ar[r]^f & Y \ar[d]^\beta \\
\mathcal{P}(\Sigma_\tau \times X) \ar[r]_{\mathcal{P}(\Sigma_\tau\times f)} & \mathcal{P}(\Sigma_\tau \times Y)
}
$$
for $\alpha:X\to \mathcal{P}(\Sigma_\tau \times X)$, $\beta:Y\to \mathcal{P}(\Sigma_\tau \times Y)$ and $f:X\to Y$, where the transitions of $\alpha^\ast$ are given as follows:
\begin{align*}
& x\stackrel{\tau}{\to}_{\alpha^\ast} x' \iff x (\stackrel{\tau}{\to}_\alpha)^\ast x',\\
& x\stackrel{a}{\to}_{\alpha^\ast} x'\iff x(\stackrel{\tau}{\to}_\alpha)^\ast \circ \stackrel{a}{\to}_\alpha\circ (\stackrel{\tau}{\to}_\alpha)^\ast x'  \text{ for }a\in \Sigma.
\end{align*}
In the above, $S^\ast$ denotes the reflexive and transitive closure of a binary relation $S$.
An equivalence relation $R$ on  $X$ is a weak bisimulation on  $\alpha$ provided that whenever $(x,y)\in R$ we have \cite{brengos2015:lmcs,brengos2015:jlamp}:
\begin{align*}
& x\stackrel{\sigma}{\to}_{\alpha^\ast} x' \implies y\stackrel{\sigma}{\to}_{\alpha^\ast} y' \text{ and }(x',y')\in R,\text{ for any }\sigma\in \Sigma_\tau.
\end{align*}
This coincides with the classical notion of labelled transistion systems weak bisimulation \cite{milner:cc,sangiorgi2011:bis}.
\subsubsection{Fully probabilistic systems weak bisimulation}\label{subsub:fully_prob_weak_bis}
The monad $\mathbb{F}_{[0,\infty]}(\Sigma_\tau\times \mathcal{I}d)$, unlike the LTS monad, does not yield a Kleisli category which is $\mathsf{Sup}$-enriched, or even left distributive. Hence, we cannot simplify (\ref{equation:ps}) as we did for labelled transition systems.  An equivalence relation $R$ on a set $X$ is a weak bisimulation on a system $\alpha:X\to \mathbb{F}_{[0,\infty]}(\Sigma_\tau \times X)$ if and only if the following is satisfied  for any pair $(x,x')\in R$ \cite{brengos2015:jlamp}:
$$
\alpha^\ast_R(x)(\sigma,C)=\alpha^\ast_R(x')(\sigma,C) \text{ for any } \sigma\in \Sigma_\tau \text{ and any abstract class } C \text{ of }R,
$$
where $\alpha^\ast_R:X\to \mathbb{F}_{[0,\infty]} (\Sigma_\tau\times X_{/R})$ is the least solution to:
\begin{align*}
	\alpha^\ast_R(x)(\tau,C) & = 
		f^\sharp(x)(\tau,C)
		\vee
		\sum_{z \in X} \alpha(x)(\tau,z)\cdot \alpha^\ast_R(z)(\tau,C),\\[-3pt]
	\alpha^\ast_R(x)(a,C) & = 
		f^\sharp(x)(a,C)
		\vee
		\sum_{z \in X} \alpha(x)(\tau,z)\cdot \alpha^\ast_R(z)(a,C) + \alpha(x)(a,z)\cdot \alpha^\ast_R(z)(\tau,C).
\end{align*}
In the above, $f:X\to X_{/R}; x\mapsto x_{/R}$. Whenever  $\alpha:X\to \mathbb{F}_{[0,\infty]}(\Sigma_\tau\times X)$ satisfies $\sum_{(\sigma,y)} \alpha(x)(\sigma,y)=~1$ for any $x\in X$ then the above equations reduce to:
\begin{align*}
	\alpha^\ast_R(x)(\tau,C) & = \left \{ \begin{array}{cc} 1 & \text{ if } x\in C,\\
		\sum_{z \in X} \alpha(x)(\tau,z)\cdot \alpha^\ast_R(z)(\tau,C) &\text{ otherwise,} \end{array} \right. 
		\\
	\alpha^\ast_R(x)(a,C) & = 
		\sum_{z \in X} \alpha(x)(\tau,z)\cdot \alpha^\ast_R(z)(a,C) + \alpha(x)(a,z)\cdot \alpha^\ast_R(z)(\tau,C).
\end{align*}
These are exactly the equations considered in \cite{baier97:cav} to define weak bisimulation for fully probabilistic systems. Hence, our coalgebraic notion of weak  bisimulation and Baier and Hermanns' weak bisimulation \cite{baier97:cav} coincide.

\subsection{Relational presheaves}\label{subsection:relational_presheaves}
Lax functors $\mathbb{D}\to \mathsf{Rel} \cong \mathcal{K}l(\mathcal{P})$, known under the name of \emph{relational presheaves}, have been studied in e.g. \cite{niefield2004,sobocinski:jcss}. The motivation for our paper stems from \cite{sobocinski:jcss}, where Soboci\'nski  shows that several examples of systems, among which we find labelled transition systems, tile systems \cite{Gadducci96thetile} and reactive systems \cite{jensen2006} can be modelled as relational presheaves. It is worth noting that the latter two examples are given in terms of relational presheaves whose domain category is not necessarily a one-object category. We refer a curious reader to \cite{sobocinski:jcss}. Here, we only recall the idea proposed by Soboci\'nski to represent labelled transition systems as relational presheaves and encode their saturation in terms of an adjunction.

Any labelled transition system $\alpha: X\to \mathcal{P}(\Sigma_\tau\times X)$ can be viewed as a lax functor $\underline{\alpha}:(\Sigma_\tau)^\ast \to \mathcal{K}l(\mathcal{P})$ given by \cite{sobocinski:jcss}:
\begin{align*}
&\underline{\alpha}(\varepsilon)(x) = \{x\} \text{, }\underline{\alpha}(\sigma)(x) = \{ x' \mid (a,x')\in \alpha(x)\} \text{ for } \sigma \in \Sigma_\tau,\\
&\underline{\alpha}(\sigma_1 \sigma_2\ldots \sigma_n) = \underline{\alpha}(\sigma_1)\circ \underline{\alpha}(\sigma_2)\circ \ldots \circ \underline{\alpha}(\sigma_n) \text{ for }\sigma_i\in \Sigma_\tau.
\end{align*}
Let $[(\Sigma_\tau)^\ast,\mathcal{K}l(\mathcal{P})]$ and $[\Sigma^\ast,\mathcal{K}l(\mathcal{P})]$ denote the categories of relational presheaves on monoid categories $\Sigma_\tau^\ast$ and $\Sigma^\ast$ respectively as objects and oplax transformations as morphisms. In this case, labelled transition systems saturation is \cite{sobocinski:jcss}:
\begin{align}
\xymatrix{
[\Sigma^\ast_\tau,\mathcal{K}l(\mathcal{P})]\ar@/^1pc/[r]^{} \ar@{}[r]|\perp & [\Sigma^\ast,\mathcal{K}l(\mathcal{P})] \ar@/^1pc/[l]^{[p,\mathcal{K}l(\mathcal{P})]}.
}\label{adjunction_saturation} 
\end{align}
The right adjoint is the change-of-base functor:
$$
[p,\mathcal{K}l(\mathcal{P})]:[\Sigma^\ast,\mathcal{K}l(\mathcal{P})]\to [(\Sigma_\tau)^\ast,\mathcal{K}l(\mathcal{P})];\pi \mapsto \pi \circ p,
$$
where $p:(\Sigma_\tau)^\ast \to \Sigma^\ast$ removes all occurences of the letter $\tau$ in words from $(\Sigma_\tau)^\ast$.

 This very nice observation has some limitation. First of all,  although labelled transition systems $X\to \mathcal{P}(M\times X)$ with a monoid structure on labels can be seen as relational presheaves $M\to \mathcal{K}l(\mathcal{P})$ \cite{sobocinski:jcss}, it is not instantly clear how to express their saturation in terms of an adjunction as above. Secondly, although, the choice of $p:(\Sigma_\tau)^\ast\to \Sigma^\ast$ is, to some extent, natural, it does not allow us to see the canonicity in the notion of LTS saturation we see in our approach \cite{brengos2015:lmcs} (i.e. as a reflexive and transitive closure). 
In our opinion, it is only when we encode the label structure inside a monad (cf. Subsection \ref{subsection:coalgebras_with_internal}) and generalize the theory we see the whole picture in which the main role is played by adjunctions similar to (\ref{adjunction_saturation}). Moreover, the new theory becomes consistent with our previous work on weak bisimulation \cite{brengos2015:jlamp,brengos2015:lmcs,brengos2014:cmcs}. 
 

\section{Lax functors}\label{section:lax_functors}
The purpose of this section is to give the definition of a lax functor category and study its properties in the coalgebraic  context generalizing the notion of a relational presheaf. At first, we show that many objects known in mathematics and computer science may be modelled as certain lax functors. Our results generalize both \cite{brengos2015:jlamp} and \cite{sobocinski:jcss} where one can find an extensive list of other examples we do not discuss in this paper.  Secondly, in Subsection \ref{subsection:change_of_base} we study the change-of-base functor between categories of lax functors and its left adjoint. The adjunction forms foundation to the concept of saturation used in the next section to define weak bisimulation for lax functors. 

Throughout this section we assume that:
\begin{itemize}
\item $\mathbb{D}$ is a small category,
\item $\mathsf{K}$ is an order enriched category, 
\item $J$ is a subcategory of $\mathsf{K}$ with all objects from $\mathsf{K}$.
\end{itemize}
Our prototypical example for $J$ and $\mathsf{K}$ are $\mathsf{C}$ and $\mathcal{K}l(T)$ respectively, for a monad $T$ on $\mathsf{C}$. Let $[\mathbb{D},\mathsf{K}]^J$ be the category  whose objects are lax functors from $\mathbb{D}$ to $\mathsf{K}$ and whose morphisms are oplax transformations with components from $J$. Whenever $J=\mathsf{K}$ we will often drop the superscript and write $[\mathbb{D},\mathsf{K}]$ instead of $[\mathbb{D},\mathsf{K}]^\mathsf{K}$. 

The category  $[\mathbb{D},\mathsf{K}]^J$ is order enriched with the order on hom-sets given as follows. For $\pi,\pi'\in [\mathbb{D},\mathsf{K}]^J$  and two oplax transformations $f,f':\pi \to \pi'$ we define:
$$
f\leq f'\iff f_{D} \leq f'_D \text{ in }\mathsf{K} \text{ for any }D\in \mathbb{D}. 
$$

\subsection{Examples and their properties}

We will now describe several examples of the category $[\mathbb{D},\mathsf{K}]^J$ focusing on $\mathbb{D}$ being a monoid category (i.e. a one-object category). In this case, a monoid $M=(M,\cdot,1)$ will be often associated with the one-object category it induces. The only object of the category $M$ will be denoted by $\ast$ and the composition $\circ$ of morphisms $m_1,m_2:\ast\to\ast$ for $m_1,m_2\in M$ given by:
$m_1\circ m_2 = m_1\cdot m_2$.

The first two examples of lax functor categories we consider are categories for which the monoid $M$ in $[M,\mathsf{K}]^J$ is given by:
\begin{itemize}
\item the monoid $\mathbb{N}=(\mathbb{N},+,0)$ of natural numbers with ordinary addition,
\item the one-element monoid $1=(\{0\},+,0)$. 
\end{itemize} 
As will be seen in Section \ref{section:weak_bisimulation_lax_functors}, these two examples play a fundamental role in coalgebraic weak bisimulation.

\subsubsection{The category $[\mathbb{N},\mathsf{K}]^J$}
The purpose of this subsection is to describe the category $[\mathbb{N},\mathsf{K}]^J$ and show the relation between it and the category $\mathsf{End}_J^\leq (\mathsf{K})$. We will show that, intuitively, the objects of the category $[\mathbb{N},\mathsf{K}]^J$ can be understood as (approximations of) iteration of endomorphisms. Indeed, the monoid $\mathbb{N}$ plays the role of a discrete time domain as a lax functor in $[\mathbb{N},\mathsf{K}]^J$ assigns to a given natural number $n$ an approximation of $n$-th power of the given endomorphism.  Although, below we present only one example of this category, it should be noted here that by Corollary \ref{coro:embedding_coalgebras} all coalgebras with internal moves can be seen as lax functors whose domain is $\mathbb{N}$. Apart from labelled transition systems, fully probabilistic systems and filter coalgebras defined in this paper, the reader is referred to e.g. \cite{brengos2015:jlamp} for a long list of other examples of such coalgebras.


For a lax functor  $\pi \in [\mathbb{N},\mathsf{K}]^J$  define $\pi_n = \pi(n):\pi(*)\to \pi(*)$.
Note that any $\pi$ in $[\mathbb{N},\mathsf{K}]^J$ is determined by its sequence $(\pi_n)_{n\in \mathbb{N}}$ and any transformation $f:\pi\to \pi'$ between  $\pi$ and $\pi'$ in $[\mathbb{N},\mathsf{K}]^J$ is determined by the component  $$f_{*}:\pi(*)\to \pi'(*)$$ in $J$. Therefore, for the sake of simplicity of notation, lax functors in $[\mathbb{N},\mathsf{K}]^J$ will be considered as sequences of endomorphisms of $\mathsf{K}$ with a common carrier, and morphisms between lax functors as morphisms between the given carriers in $J$. 

The following three propositions are straightforward to verify and, hence, are left without proofs.
\begin{proposition}\label{proposition_1_T_omega}
A sequence $\pi = (\pi_n)_{n\in \mathbb{N}}$ of endomorphisms in $\mathsf{K}$ with a common carrier is an object of $[\mathbb{N},\mathsf{K}]^J$ if and only if the following conditions are satisfied:
{
\begin{align}
\label{presheaf_property:1} &id\leq \pi_0,\\
\label{presheaf_property:2} &\pi_m\circ \pi_n\leq \pi_{m+n} \text{ for any }m,n\in \mathbb{N}.
\end{align}
}
\end{proposition}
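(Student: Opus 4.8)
The plan is to unwind the definition of a lax functor in the special case where the domain is the one-object monoid category $\mathbb{N} = (\mathbb{N},+,0)$, since the objects of $[\mathbb{N},\mathsf{K}]^J$ are by definition exactly the lax functors $\mathbb{N}\to\mathsf{K}$ (the subcategory $J$ constrains only the morphisms, i.e.\ the oplax transformations, and is irrelevant to the object level). Recall that in this category there is a single object $\ast$, the morphisms $\ast\to\ast$ are exactly the natural numbers, the composite of $m$ and $n$ is $m\circ n = m+n$, and the identity on $\ast$ is the monoid unit $id_\ast = 0$. A functor-like assignment $\pi\colon \mathbb{N}\to\mathsf{K}$ is therefore precisely the choice of a carrier $X:=\pi(\ast)\in\mathsf{K}$ together with an endomorphism $\pi_n:=\pi(n)\colon X\to X$ for each $n\in\mathbb{N}$; conversely, any sequence $(\pi_n)_{n\in\mathbb{N}}$ of endomorphisms sharing a common carrier $X$ determines such an assignment. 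The source--target conditions of a functor-like assignment hold automatically here, because all the $\pi_n$ are endomorphisms of the single object $X$.

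It then remains to read off the two defining inequalities of a lax functor for this assignment. The identity condition $id_{\pi(\ast)}\leq \pi(id_\ast)$ becomes, upon substituting $id_\ast = 0$, exactly $id\leq \pi_0$, which is condition (\ref{presheaf_property:1}). The laxity condition $\pi(d_1)\circ\pi(d_2)\leq \pi(d_1\circ d_2)$ must be checked for every pair of composable morphisms; since every pair of morphisms of $\mathbb{N}$ is composable and composition is addition, it ranges over all pairs $m,n\in\mathbb{N}$ and reads $\pi_m\circ\pi_n\leq \pi_{m+n}$, which is condition (\ref{presheaf_property:2}). As already noted after the definition of a lax functor, in the order enriched setting the usual $2$-categorical coherence axioms are vacuously satisfied, so no further conditions arise.

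This yields both implications simultaneously: the sequence $(\pi_n)_{n\in\mathbb{N}}$ with common carrier is a lax functor, hence an object of $[\mathbb{N},\mathsf{K}]^J$, if and only if (\ref{presheaf_property:1}) and (\ref{presheaf_property:2}) hold. There is no genuine obstacle here; the statement is a direct translation of the lax functor axioms through the dictionary between the monoid $\mathbb{N}$ and the one-object category it induces, which is exactly why it is grouped among the propositions that are \emph{straightforward to verify}. The only point deserving a moment's care is confirming that quantifying the laxity inequality over all composable pairs in $\mathbb{N}$ coincides with quantifying (\ref{presheaf_property:2}) over all $m,n\in\mathbb{N}$; this is immediate, since any two endomorphisms $\pi_m,\pi_n$ of the common carrier $X$ are composable.
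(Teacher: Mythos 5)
Your proof is correct: the paper explicitly omits the argument as ``straightforward to verify,'' and your unwinding of the lax functor axioms for the one-object category $\mathbb{N}=(\mathbb{N},+,0)$ --- identifying $id_\ast$ with $0$ and composition with addition, and noting that $J$ plays no role at the object level --- is exactly the intended routine verification.
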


\begin{example}[Labelled transtion systems]\label{example:LTS_presheaves} 
Whenever $J=\mathsf{C}$ and $\mathsf{K}=\mathcal{K}l(T)$ for a monad $T$ on $\mathsf{C}$ then the lax functors of $[\mathbb{N},\mathsf{K}]^J = [\mathbb{N},\mathcal{K}l(T)]^\mathsf{C}$ are sequences of $T$-coalgebras. Let $T= \mathcal{P}^{\Sigma,M}$. A sequence $(\pi_n:X\to \mathcal{P}(\Sigma_\tau \times M \times X))_{n\in \mathbb{N}}$ of labelled transition systems is an object of this category if and only if it satisfies:
\begin{align*}
\infer{x\stackrel{(\tau,1)}{\to}_{\pi_0} x}{}\qquad \infer{x\stackrel{(\sigma,k\cdot l)}{\to}_{\pi_{n+m}} x''}{x\stackrel{(\sigma,k)}{\to}_{\pi_m} x' & x'\stackrel{(\tau,l)}{\to}_{\pi_n} x''} \qquad \infer{x\stackrel{(\sigma,k\cdot l)}{\to}_{\pi_{n+m}} x''}{x\stackrel{(\tau,k)}{\to}_{\pi_m} x' & x'\stackrel{(\sigma,l)}{\to}_{\pi_n} x''} 
\end{align*}

\end{example}

\begin{proposition}\label{proposition_2_T_omega}
Given two lax functors $\pi,\pi'$ in $[\mathbb{N},\mathsf{K}]^J$ an arrow $f:\pi(*)\to \pi'(*)$ in $J$ is a morphism between $\pi$ and $\pi'$  in $[\mathbb{N},\mathsf{K}]^J$ if and only if the following condition holds for all $n\in \mathbb{N}$:
$$
\xymatrix{
\pi(*) \ar[r]^f\ar[d]_{\pi_n}\ar@{}[dr]|\leq & \pi'(*)\ar[d]^{\pi'_n} \\
\pi(*) \ar[r]_{f} & \pi'(*)
}
$$
\end{proposition}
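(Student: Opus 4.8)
The plan is to observe that the statement is a direct unwinding of the definition of an oplax transformation specialized to the one-object domain $\mathbb{N}$, so the proof needs nothing more than translating the general definition into the sequence notation adopted for $[\mathbb{N},\mathsf{K}]^J$. First I would recall that, by definition, a morphism of $[\mathbb{D},\mathsf{K}]^J$ from $\pi$ to $\pi'$ is an oplax transformation whose components lie in $J$, and that an oplax transformation $f=\{f_D\}_{D\in\mathbb{D}}$ is required to satisfy $f_{D'}\circ\pi(d)\leq \pi'(d)\circ f_D$ for every morphism $d\colon D\to D'$ of $\mathbb{D}$ (this being the order-reversal of the lax-transformation inequality, with the additional 2-categorical coherence conditions holding vacuously in the order-enriched setting).

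Next I would exploit the fact that $\mathbb{N}$ has a single object $\ast$, which has two consequences. A transformation $f$ therefore possesses exactly one component $f_\ast\colon\pi(\ast)\to\pi'(\ast)$, which is precisely the arrow we abbreviate to $f$ and which must belong to $J$; this matches the hypothesis that $f$ is an arrow of $J$. Moreover every morphism of $\mathbb{N}$ is an endomorphism $n\colon\ast\to\ast$, so in the oplax inequality one necessarily has $D=D'=\ast$ and $d=n$. Substituting $\pi(n)=\pi_n$ and $\pi'(n)=\pi'_n$ turns the defining inequality into $f\circ\pi_n\leq \pi'_n\circ f$, demanded for every $n\in\mathbb{N}$.

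Finally I would check that this inequality is exactly the content of the displayed square, reading $\leq$ with the same convention as in the functional-simulation diagram recalled earlier: the lower-left composite $f\circ\pi_n$ is bounded above by the upper-right composite $\pi'_n\circ f$. Both directions of the equivalence are then immediate. If $f$ is a morphism of $[\mathbb{N},\mathsf{K}]^J$, the inequality holds for all $n$ by the argument above; conversely, any $f\in J(\pi(\ast),\pi'(\ast))$ satisfying it for all $n$ is, by the same translation, an oplax transformation with component in $J$, hence a morphism. The only point that demands care — and the closest thing to an obstacle in an otherwise purely definitional argument — is keeping the orientation of the inequality consistent between the oplax convention and the diagrammatic $\leq$; once that orientation is fixed, there is nothing further to verify, which is why the statement is legitimately left as ``straightforward''.
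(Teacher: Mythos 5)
Your argument is correct and is exactly the definitional unwinding the paper has in mind when it declares this proposition "straightforward to verify" and omits the proof: a morphism of $[\mathbb{N},\mathsf{K}]^J$ is an oplax transformation with its single component $f_\ast$ in $J$, and the oplax inequality $f_{D'}\circ\pi(d)\leq\pi'(d)\circ f_D$ specialized to $d=n\colon\ast\to\ast$ is precisely the displayed square $f\circ\pi_n\leq\pi'_n\circ f$. You also correctly identify the one genuine point of care, namely the orientation of the inequality under the oplax convention, and you resolve it consistently with the paper's definitions.
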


For any endomorphism $\alpha:X\to X\in \mathsf{K}$ define the sequence $\underline{\alpha}=(\underline{\alpha}_n)\in [\mathbb{N},\mathsf{K}]^J$ by  $\underline{\alpha}_n = \alpha^n$. For  $f:X\to Y$ in $J$ which is a morphism  between endomorphisms $\alpha:X\to X$ and $\beta:Y\to Y$ in $\mathsf{End}_{J}^\leq (\mathsf{K})$ put $
\underline{f} = f$.
Clearly, the assignment $\underline{(-)}: \mathsf{End}_{J}^\leq (\mathsf{K})\to [\mathbb{N},\mathsf{K}]^J$ is functorial. Moreover, we have the following.
\begin{proposition}
The functor $\underline{(-)}$ is a full and faithful embedding of the category $ \mathsf{End}_{J}^\leq (\mathsf{K})$ into  $[\mathbb{N},\mathsf{K}]^J$ which preserves the order.
\end{proposition}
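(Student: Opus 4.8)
The cleanest route I would take is to show that for any two endomorphisms $\alpha:X\to X$ and $\beta:Y\to Y$ in $\mathsf{K}$, the functor $\underline{(-)}$ identifies the hom-poset $\mathsf{End}_J^\leq(\mathsf{K})(\alpha,\beta)$ with $[\mathbb{N},\mathsf{K}]^J(\underline{\alpha},\underline{\beta})$ as \emph{the same} sub-poset of $J(X,Y)$ (ordered by the restriction of the order of $\mathsf{K}$). Since $\underline{(-)}$ acts as the identity on the underlying $J$-arrows, $\underline{f}=f$, establishing this single equality of hom-sets delivers faithfulness, fullness and order-preservation simultaneously; injectivity on objects is then a separate and trivial matter.

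The substance is thus the claim that the two hom-sets agree, and here the only genuinely informative observation is that the defining inequality of $\mathsf{End}_J^\leq(\mathsf{K})$ is precisely the degree-$1$ instance of the morphism condition for $[\mathbb{N},\mathsf{K}]^J$. Unfolding definitions, $\mathsf{End}_J^\leq(\mathsf{K})(\alpha,\beta)$ consists of the $J$-arrows $f:X\to Y$ with $f\circ\alpha\leq\beta\circ f$, whereas by Proposition \ref{proposition_2_T_omega} the set $[\mathbb{N},\mathsf{K}]^J(\underline{\alpha},\underline{\beta})$ consists of those $f$ with $f\circ\alpha^n\leq\beta^n\circ f$ for all $n\in\mathbb{N}$. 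For the inclusion giving fullness I would simply specialize the family of inequalities to $n=1$, recovering $f\circ\alpha\leq\beta\circ f$. For the reverse inclusion, which is the content of well-definedness on morphisms and is already subsumed by the asserted functoriality of $\underline{(-)}$, I would argue by induction on $n$ using only monotonicity of composition in the order-enriched category $\mathsf{K}$: from $f\circ\alpha\leq\beta\circ f$ one obtains $f\circ\alpha^{n+1}=(f\circ\alpha^n)\circ\alpha\leq(\beta^n\circ f)\circ\alpha=\beta^n\circ(f\circ\alpha)\leq\beta^{n+1}\circ f$. Hence the two hom-sets coincide and the hom-map is the identity, which is in particular a bijection.

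It then remains to treat injectivity on objects and the order, both of which are immediate. Injectivity on objects follows by reading off the degree-$1$ component: $\underline{\alpha}=\underline{\beta}$ forces $\alpha=\underline{\alpha}_1=\underline{\beta}_1=\beta$. For the order, since $\mathbb{N}$ is a one-object category the componentwise order on $[\mathbb{N},\mathsf{K}]^J$ reduces to comparison of the sole components $f_{\ast}=f$, which is exactly the order inherited from $\mathsf{K}$ on $\mathsf{End}_J^\leq(\mathsf{K})$; thus $\underline{(-)}$ is in fact an order-isomorphism on each hom-poset. I do not anticipate any real obstacle: the statement is essentially bookkeeping, and the single load-bearing step is the remark that restricting the family of lax-naturality inequalities to the monoid generator $n=1$ collapses the morphism condition for $[\mathbb{N},\mathsf{K}]^J$ onto that for $\mathsf{End}_J^\leq(\mathsf{K})$.
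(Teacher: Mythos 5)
Your proof is correct, and it is essentially the argument the paper intends: the paper omits the proof as ``straightforward to verify,'' and your one load-bearing step --- the induction showing $f\circ\alpha\leq\beta\circ f$ implies $f\circ\alpha^{n}\leq\beta^{n}\circ f$ for all $n$, combined with the characterization of morphisms in Proposition \ref{proposition_2_T_omega} --- is exactly the computation the paper itself carries out in the proof of Proposition \ref{proposition:T_omega_adjunction}. Your packaging of fullness, faithfulness and order-preservation as a single identification of hom-posets is a clean way to organize the same bookkeeping.
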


Now, consider the functor $(-)_1:[\mathbb{N},\mathsf{K}]^J\to \mathsf{End}_{J}^\leq (\mathsf{K})$ which assigns to any lax functor $\pi=(\pi_n)_{n\in \mathbb{N}}$ the endomorphism $\pi_1$ and  any  morphism $f:\pi\to \pi'$ in $[\mathbb{N},\mathsf{K}]^J$ is assigned to itself. It is clear that this functor preserves the order.
Note that the composition of $\underline{(-)}$ and $(-)_1$ is the identity functor on $\mathsf{End}_{J}^\leq (\mathsf{K})$. 
\begin{proposition}\label{proposition:T_omega_adjunction}
We have the following 2-adjunction:
$$
\xymatrix{
\mathsf{End}_{J}^\leq (\mathsf{K})\ar@/^1pc/[r]^{\underline{(-)}} \ar@{}[r]|\perp & [\mathbb{N},\mathsf{K}]^J \ar@/^1pc/[l]^{{(-)}_1}
}.
$$

\end{proposition}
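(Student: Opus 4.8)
The plan is to construct, for each $\alpha\in\mathsf{End}_J^\leq(\mathsf{K})$ and each $\pi\in[\mathbb{N},\mathsf{K}]^J$, a poset isomorphism
$$\phi_{\alpha,\pi}\colon\ [\mathbb{N},\mathsf{K}]^J(\underline{\alpha},\pi)\ \cong\ \mathsf{End}_{J}^\leq(\mathsf{K})(\alpha,\pi_1),$$
natural in both arguments, as demanded by the definition of a $2$-adjunction (here $\underline{(-)}$ plays the role of $F$ and $(-)_1$ that of $U$). Writing $\alpha\colon X\to X$, Proposition~\ref{proposition_2_T_omega} identifies an element of the left-hand hom-set with an arrow $f\colon X\to\pi(*)$ in $J$ satisfying $f\circ\alpha^{n}\leq\pi_n\circ f$ for every $n\in\mathbb{N}$, while an element of the right-hand hom-set is an arrow $f\colon X\to\pi(*)$ in $J$ satisfying $f\circ\alpha\leq\pi_1\circ f$ (recall $(-)_1(\pi)=\pi_1$). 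Both hom-sets are therefore subsets of $J(X,\pi(*))$, and I would define $\phi_{\alpha,\pi}$ to be the identity assignment on the underlying arrow.

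The only substantive content is that these two subsets in fact coincide, i.e.\ that the single inequality $f\circ\alpha\leq\pi_1\circ f$ already forces $f\circ\alpha^{n}\leq\pi_n\circ f$ for all $n$; the converse is immediate by taking $n=1$. I would prove this by induction on $n$. For $n=0$ the inequality $f=f\circ\alpha^{0}\leq\pi_0\circ f$ follows from $id\leq\pi_0$ (property \eqref{presheaf_property:1}) together with monotonicity of composition in the order-enriched category $\mathsf{K}$. For the inductive step, assuming $f\circ\alpha^{n}\leq\pi_n\circ f$, I would compute
$$f\circ\alpha^{n+1}=(f\circ\alpha^{n})\circ\alpha\leq(\pi_n\circ f)\circ\alpha=\pi_n\circ(f\circ\alpha)\leq\pi_n\circ(\pi_1\circ f)=(\pi_n\circ\pi_1)\circ f\leq\pi_{n+1}\circ f,$$
where the three inequalities use, respectively, the inductive hypothesis, the base assumption $f\circ\alpha\leq\pi_1\circ f$, and the lax-functor property \eqref{presheaf_property:2} (namely $\pi_n\circ\pi_1\leq\pi_{n+1}$), each combined with monotonicity of composition. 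This induction is the crux of the argument; everything else is bookkeeping.

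It then remains to verify that $\phi_{\alpha,\pi}$ is an order isomorphism and that it is natural, both of which are essentially automatic. The order on $[\mathbb{N},\mathsf{K}]^J(\underline{\alpha},\pi)$ is, by definition of the enrichment and the fact that $\mathbb{N}$ has a single object, exactly the order inherited from $\mathsf{K}(X,\pi(*))$ on the component arrows, and the order on $\mathsf{End}_{J}^\leq(\mathsf{K})(\alpha,\pi_1)$ is likewise the restriction of the $\mathsf{K}$-order; since $\phi_{\alpha,\pi}$ is the identity on underlying arrows it is an isomorphism of posets. For naturality, I would note that $\underline{(-)}$ and $(-)_1$ both act as the identity on the underlying arrows of morphisms, and that composition in $[\mathbb{N},\mathsf{K}]^J$ and in $\mathsf{End}_{J}^\leq(\mathsf{K})$ is in each case just composition of component arrows in $J$; hence pre- and post-composition on the two sides agree under $\phi$, and every naturality square commutes on the nose.

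I expect the induction in the second paragraph to be the only step requiring genuine work. As a consistency check one can also observe that $(-)_1\circ\underline{(-)}=\mathrm{id}$ on $\mathsf{End}_{J}^\leq(\mathsf{K})$, so the unit of the resulting adjunction is the identity, in agreement with $\underline{(-)}$ being a full and faithful reflective embedding.
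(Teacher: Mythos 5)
Your proposal is correct and follows essentially the same route as the paper: both arguments reduce the adjunction to the observation that the two hom-sets coincide as subsets of $J(X,\pi(*))$, with the crux being the induction deriving $f\circ\alpha^{n}\leq\pi_n\circ f$ from $f\circ\alpha\leq\pi_1\circ f$. The only (immaterial) difference is that the paper's induction first establishes $f\circ\alpha^{n}\leq\pi_1^{n}\circ f$ and then invokes $\pi_1^{n}\leq\pi_n$, whereas you fold the lax-functor inequality $\pi_n\circ\pi_1\leq\pi_{n+1}$ directly into the inductive step.
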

\begin{proof} The statement follows directly by the fact that for any $\alpha:~X\to X$ in $\mathsf{K}$ and any lax functor  $\pi$ we have
${\mathsf{End}_{J}^\leq (\mathsf{K})}(\alpha,{\pi}_1) = { [\mathbb{N},\mathsf{K}]^J}(\underline{\alpha},\pi)$.  
It is clear that ${\mathsf{End}_{J}^\leq (\mathsf{K})}(\alpha,{\pi}_1) \supseteq { [\mathbb{N},\mathsf{K}]^J}(\underline{\alpha},\pi)$. To see  that the opposite inclusion holds take  $f:X\to Y$ between $\alpha:X\to X$ and $\pi_1:Y\to Y$ in $\mathsf{End}_J ^\leq(\mathsf{K})$. This means that $f\circ \alpha \leq \pi_1 \circ f$. Inductively, we prove $f \circ \alpha^n \leq \pi_1^n \circ f$. Since for a lax functor $\pi\in [\mathbb{N},\mathsf{K}]^J$ we have $\pi_1^n\leq \pi_n$, we directly get that $f \circ \alpha^n \leq \pi_n \circ f$. This proves the assertion.
\end{proof}

As a direct corollary of the above we have:
\begin{corollary}\label{coro:embedding_coalgebras}
For any monad $T$ on $\mathsf{C}$ whose Kleisli category is order enriched we have the following 2-adjunction:
$$
\xymatrix{
\mathsf{C}_{T,\leq} \ar@/^1pc/[r]^{\underline{(-)}} \ar@{}[r]|\perp & [\mathbb{N},\mathcal{K}l(T)]^\mathsf{C} \ar@/^1pc/[l]^{{(-)}_1}
}.
$$
\end{corollary}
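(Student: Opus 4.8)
The plan is to obtain the statement as a direct instantiation of Proposition~\ref{proposition:T_omega_adjunction}. First I would specialize the parameters of that proposition by taking $J=\mathsf{C}$ and $\mathsf{K}=\mathcal{K}l(T)$. To apply the proposition I need to confirm its standing assumptions in this case: the domain $\mathbb{N}$ is a (small) monoid category, the hypothesis of the corollary guarantees that $\mathcal{K}l(T)$ is order enriched, and $\mathsf{C}$ embeds into $\mathcal{K}l(T)$ as a subcategory containing all objects via the faithful inclusion ${}^\sharp$ recalled in the subsection on Kleisli categories (it fixes objects and sends $f$ to $\eta\circ f$). Hence all three hypotheses of the proposition hold.

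With these choices, Proposition~\ref{proposition:T_omega_adjunction} yields the $2$-adjunction
$$
\xymatrix{
\mathsf{End}_{\mathsf{C}}^\leq (\mathcal{K}l(T))\ar@/^1pc/[r]^{\underline{(-)}} \ar@{}[r]|\perp & [\mathbb{N},\mathcal{K}l(T)]^\mathsf{C} \ar@/^1pc/[l]^{{(-)}_1}.
}
$$
The second step is then to rewrite the left-hand category: the Example following the definition of $\mathsf{End}_J^\leq(\mathsf{K})$ identifies $\mathsf{End}_{\mathsf{C}}^\leq(\mathcal{K}l(T))$ with $\mathsf{C}_{T,\leq}$. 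Under this identification the embedding $\underline{(-)}$ and its right $2$-adjoint $(-)_1$ are exactly the two functors named in the corollary, so the desired $2$-adjunction follows verbatim.

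I do not expect any genuine obstacle, since the corollary is merely a substitution of parameters into an already-proved proposition together with a definitional identification. The only point that merits a line of verification is that $\mathsf{C}$ is indeed a subcategory of $\mathcal{K}l(T)$ containing every object, which is immediate from the properties of ${}^\sharp$; once this is noted, nothing further is required.
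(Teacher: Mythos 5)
Your proposal is correct and matches the paper's intent exactly: the corollary is stated there as "a direct corollary" of Proposition~\ref{proposition:T_omega_adjunction}, obtained by setting $J=\mathsf{C}$, $\mathsf{K}=\mathcal{K}l(T)$ and invoking the identification $\mathsf{End}_{\mathsf{C}}^\leq(\mathcal{K}l(T))=\mathsf{C}_{T,\leq}$ from the example following that definition. Nothing further is needed.
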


\subsubsection{The category $[1,\mathsf{K}]^J$} \label{subsection:monads_lax}

Any lax functor $\pi$ in $[1,\mathsf{K}]^J$ is determined by the underlying endomorphism $\pi(0):\pi(*)\to \pi(*)$ and any oplax transformation is a morphism in $J$ between the carriers of the underlying endomorphisms. Hence, we will identify lax functors and transformations in $[1,\mathsf{K}]^J$ with their endomorphisms and carrier arrows respectively.

\begin{proposition}\label{proposition:t_bullet}
An endomorphism $\pi$ in $\mathsf{K}$ is a lax functor in $[1,\mathsf{K}]^J$ if and only if it satisfies the following:
$$
id\leq  \pi\text{ and }\pi\circ  \pi \leq \pi. 
$$
 An arrow $f$ in $J$ is a transformation between lax functors $\pi$ and $\pi'$ in $[1,\mathsf{K}]^J$ if and only if $f\circ \pi \leq \pi'\circ f$. Hence, $[1,\mathsf{K}]^J$ is isomorphic to $\mathsf{End}^{\leq \ast}_J(\mathsf{K})$.
\end{proposition}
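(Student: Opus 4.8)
The plan is to unwind the definitions of lax functor and oplax transformation in the degenerate case where the domain is the one-object category induced by the trivial monoid $1 = (\{0\},+,0)$. First I would record the structure of this monoid category: it has a single object $\ast$, whose only endomorphism is the unit $0 = id_\ast$, and composition is forced by $0\circ 0 = 0+0 = 0$. Consequently a lax functor $\pi\colon 1\to \mathsf{K}$ is completely determined by the single endomorphism $\pi(0)\colon \pi(\ast)\to\pi(\ast)$, which justifies the identification of $\pi$ with $\pi(0)$ announced before the statement.

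Next I would instantiate the two defining inequalities of a lax functor on the only data available. The unit axiom $id_{\pi(\ast)}\leq \pi(id_\ast)$ reads, since $id_\ast = 0$, precisely as $id\leq \pi(0)$. The laxity axiom $\pi(d_1)\circ\pi(d_2)\leq \pi(d_1\circ d_2)$ has only the instance $d_1=d_2=0$, and since $0\circ 0 = 0$ this becomes $\pi(0)\circ\pi(0)\leq \pi(0)$. Conversely, any endomorphism $\pi$ satisfying $id\leq\pi$ and $\pi\circ\pi\leq\pi$ manifestly defines a lax functor $1\to\mathsf{K}$ by setting $\pi(0)=\pi$. This yields the first claim, a bijection between objects of $[1,\mathsf{K}]^J$ and endomorphisms obeying the two inequalities.

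For the morphisms I would argue analogously. An oplax transformation $f\colon\pi\to\pi'$ is a family indexed by objects of the domain, here a single component $f_\ast\colon\pi(\ast)\to\pi'(\ast)$ in $J$, so transformations are identified with their carrier arrows. The oplax naturality condition, obtained by reversing the order in the lax one, is $f_\ast\circ\pi(d)\leq \pi'(d)\circ f_\ast$ for every $d$; its only instance is at $d=0$, yielding $f\circ\pi\leq\pi'\circ f$ under the above identifications. This is exactly the condition defining a morphism of $\mathsf{End}^\leq_J(\mathsf{K})$, so the assignments $\pi\mapsto\pi(0)$ and $f\mapsto f_\ast$ constitute a functor into $\mathsf{End}^\leq_J(\mathsf{K})$ whose image is the full subcategory $\mathsf{End}^{\leq\ast}_J(\mathsf{K})$ cut out by $id\leq\pi$ and $\pi\circ\pi\leq\pi$; since composition, identities, and the pointwise order are all preserved on the nose, this is an isomorphism of order enriched categories.

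There is no genuine obstacle here: the content is entirely the bookkeeping of definitions. The only points demanding care are the direction of the inequality in an oplax (as opposed to lax) transformation, and the observation that over a one-object domain the two laxity axioms collapse to exactly the reflexivity and transitivity inequalities (\ref{ineq:monads_in}) characterizing $\mathsf{End}^{\leq\ast}_J(\mathsf{K})$, i.e. to the statement that $\pi$ is a monad \emph{in} $\mathsf{K}$.
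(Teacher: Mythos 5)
Your argument is correct and is exactly the definitional unwinding the paper intends (the proposition is stated there without proof as an immediate consequence of the definitions): over the one-object, one-morphism category $1$ the lax functor axioms collapse to $id\leq\pi$ and $\pi\circ\pi\leq\pi$, and oplax naturality to $f\circ\pi\leq\pi'\circ f$, which is precisely the data of $\mathsf{End}^{\leq\ast}_J(\mathsf{K})$. You also correctly handle the one subtle point, namely the direction of the inequality for \emph{oplax} transformations, so nothing is missing.
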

The above result states that objects of $[1,\mathsf{K}]^J$ are exactly reflexive and transitive endomorphisms with morphisms being oplax transformations from $J$. Another way of looking at the category $[1,\mathsf{K}]^J$ is via the following coincidence. There is a one-to-one correspondence between lax functors $1\to \mathsf{K}$ 
and monads \emph{in} $\mathsf{K}$ for an arbitrary 2-category $\mathsf{K}$ \cite{lack:09,lackstreet:2000}. Here, whenever $\mathsf{K}$ is order enriched, the monad unit and the monad multiplication are 2-cells $1\leq \pi$ and
$\pi \circ \pi\leq \pi$ respectively. 

By Proposition \ref{proposition:t_bullet} the category $[1,\mathsf{K}]^J\cong \mathsf{End}_{J}^{\leq,\ast}(\mathsf{K})$ lies at the heart of coalgebraic bisimulation and saturation recalled in Subsection \ref{subsection:weak_bis_coal}. Below we give some examples important from the point of view of coalgebras with internal moves. Note that none of the examples below is connected to fully probabilistic systems. We decide not to include them here, as fully probabilistic system saturation is carried out in a different category (cf. Subsection \ref{subsub:fully_prob_weak_bis}).

\begin{example}[Reflexive and transitive coalgebras] We will now describe objects of $[1,\mathcal{K}l(T)]^\mathsf{C}$ for three examples of $\mathsf{Set}$-based monads $T$: 
\begin{enumerate}
\item $\mathcal{P}$: since $\mathcal{K}l(\mathcal{P})\cong \mathsf{Rel}$, a relation on a given set is a member of the category $[1,\mathcal{K}l(\mathcal{P})]^\mathsf{Set}$ if and only if it is reflexive and transitive, i.e. it is a preorder. 

\item $\mathcal{P}^{\Sigma,M}$: a coalgebra $\alpha:X\to \mathcal{P}(\Sigma_\tau\times M\times X)$ is an object of this category if and only if satisfies
\begin{align*}
\infer{x\stackrel{(\tau,1)}{\to}_{\alpha} x}{}\qquad \infer{x\stackrel{(a,k\cdot l)}{\to}_{\alpha} x''}{x\stackrel{(a,k)}{\to}_{\alpha} x' \text{ \& } x'\stackrel{(\tau,l)}{\to}_{\alpha} x''} \qquad \infer{x\stackrel{(a,k\cdot l)}{\to}_{\alpha} x''}{x\stackrel{(\tau,k)}{\to}_{\alpha} x' \text{ \& } x'\stackrel{(a,l)}{\to}_{\alpha} x''} 
\end{align*}
In the case of $M=1$ these rules are reduced to the rules (\ref{rules_saturated_coalgebra}).

\item $\mathcal{F}$: this category is isomorphic to the category $\mathsf{Top}$ of topological spaces and continuous maps \cite{gahler91,seal_tolen_hoffman14}.
\end{enumerate}
\end{example}

\subsubsection{Coalgebra flows}
The category $[\mathbb{N},\mathcal{K}l(T)]^\mathsf{C}$ can be thought of as a category whose objects represent a single $T$-coalgebra and its (approximations of) finite iteration. In this case, the monoid $\mathbb{N}$ plays the role of a discrete time domain. We can easily replace $\mathbb{N}$ with an arbitrary monoid $M=(M,\cdot,1)$ and generalize Proposition~\ref{proposition_1_T_omega} and \ref{proposition_2_T_omega}. From now on, we associate any lax functor  $\pi \in [M,\mathsf{K}]^J$ with a family $\pi = \{\pi_m:X\to X\}_{m\in M}$ of endomorphisms with a common carrier $X=\pi(\ast)$ which additionally satisfies:
\begin{align}
id_X \leq \pi_1 \text{ and } \pi_{m} \circ \pi_{ n}\leq \pi_{m\cdot n}. \label{prop:weak_flows}
\end{align}
An arrow $f:X\to Y$ in $J$ is a morphism in $[M,\mathsf{K}]^J$ between two lax functors $\pi=\{\pi_m:X\to X\}_{m\in M}$ and $\pi'=\{\pi'_m:Y\to Y\}_{m\in M}$ provided that 
\begin{align*}
f\circ \pi_m \leq \pi_m'\circ  f \text{ for any }m\in M.
\end{align*}
 We call $[M,\mathsf{K}]^J$  the \emph{category of $M$-flows on} $\mathsf{K}$. We will focus on three interesting examples of coalgebra flow categories for $M=[0,\infty)$, $\mathbb{N}\times [0,\infty)$ and $[0,\infty]$.

\begin{example}[Approach spaces]

For certain types of a $\mathsf{Set}$-based monad $T$ and unital quantale $\mathcal{Q}$, the category $[\mathcal{Q},\mathcal{K}l(T)]^\mathsf{Set}$ is of interest from the point of view of generalized topology \cite{seal_tolen_hoffman14,seal09}. The most prominent example is for $\mathcal{Q}$ put to be  the quantale of non-negative real numbers with infinity $[0,\infty]=([0,\infty],+,0,\leq)$ and $T$ taken to be the filter monad $\mathcal{F}$. In this case, the full subcategory of $[[0,\infty],\mathcal{K}l(\mathcal{F})]^\mathsf{Set}$ whose objects additionally satisfy:
$$
\alpha_{\bigvee A} = \bigwedge_{r\in A} \alpha_{a_r} \text{ for any }A\subseteq [0,\infty],
$$
is called \emph{the category of approach spaces} \cite{lowen89,seal09} and is known to naturally extend the categories $\mathsf{Top}$ and $\mathsf{Met}$ (the category of metric spaces and non-expansive maps) \cite{lowen89,seal_tolen_hoffman14}.
\end{example}

\begin{example}[Semantics of timed processes]\label{example:semantics_of_timed}
Timed processes have been defined and studied in \cite{wang90,Larsen199775}. We refer a curious reader to \emph{loc. cit.} for an explicit definition of timed calculus, which we will not recall here, but only focus on its semantics. Let $X$ denote the set of all timed processes. The semantics of timed calculus is given by a labelled transition system $X\to \mathcal{P}([\Sigma_\tau \cup (0,\infty)]\times X)$ whose transitions will be denoted by ${\to}$.
This system can be viewed as a coalgebra $$\alpha:~X\to \mathcal{P}(\Sigma_\tau \times [0,\infty)\times X)$$ given by:
$
{\alpha}(x) = \{ (a,0,x') \mid x\stackrel{a}{\to} x',a\in \Sigma_\tau\} \cup \{(\tau,r,x') \mid x\stackrel{r}{\to} x', r\in (0,\infty)\}.
$
The main purpose for the change of the type functor is to put the original system into the setting of coalgebras whose type is a monad. Indeed, since $[0,\infty)$ with ordinary addition is a monoid, the functor $\mathcal{P}(\Sigma_\tau\times [0,\infty)\times \mathcal{I}d)=\mathcal{P}^{\Sigma,[0,\infty)}$ is a monad as in Subsection~\ref{subsection:lts_monads}. Hence,
by Corollary \ref{coro:embedding_coalgebras} we can consider $\underline{{\alpha}}=(\alpha^n)_{n\in \mathbb{N}}$, a member of the category $[\mathbb{N},\mathcal{K}l(\mathcal{P}^{\Sigma,[0,\infty)})]^\mathsf{Set}$. The following observation will allow us to slightly change the perspective on these systems.
\begin{theorem}\label{theorem:change_of_monoids_lts_monad}
For any monoids $M$ and $M'$ we have:
$$
[M',\mathcal{K}l(\mathcal{P}^{\Sigma,M})]^\mathsf{Set} \cong [M' \times M, \mathcal{K}l(\mathcal{P}^{\Sigma,1})]^\mathsf{Set}.
$$
\end{theorem}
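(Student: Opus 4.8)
The plan is to exhibit the isomorphism concretely by ``uncurrying'' the monoid component $M$, which is carried internally by the Kleisli composition of $\mathcal{P}^{\Sigma,M}$, into an external index on the domain monoid. First I would record the flow descriptions of both sides, obtained by the evident generalisation of Propositions~\ref{proposition_1_T_omega} and~\ref{proposition_2_T_omega} to an arbitrary domain monoid (as in~(\ref{prop:weak_flows})). Thus an object of $[M',\mathcal{K}l(\mathcal{P}^{\Sigma,M})]^\mathsf{Set}$ is a family $\pi=\{\pi_{m'}:X\to \mathcal{P}(\Sigma_\tau\times M\times X)\}_{m'\in M'}$ satisfying $id_X\leq \pi_1$ and $\pi_{m'}\circ\pi_{n'}\leq \pi_{m'\cdot n'}$, while an object of $[M'\times M,\mathcal{K}l(\mathcal{P}^{\Sigma,1})]^\mathsf{Set}$ is a family $\rho=\{\rho_{(m',m)}:X\to \mathcal{P}(\Sigma_\tau\times X)\}_{(m',m)\in M'\times M}$ satisfying the analogous inequalities for the product monoid $M'\times M$.

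On objects I would define the correspondence by
$$\rho_{(m',m)}(x)=\{(\sigma,x')\mid (\sigma,m,x')\in\pi_{m'}(x)\},$$
with inverse $\pi_{m'}(x)=\{(\sigma,m,x')\mid (\sigma,x')\in\rho_{(m',m)}(x),\ m\in M\}$. For a fixed carrier $X$ both families amount to the same subset of $M'\times M\times X\times\Sigma_\tau\times X$ (up to reordering the factors), so this is a bijection of the underlying data leaving the carrier untouched, and on morphisms the assignment is simply the identity on the underlying $\mathsf{Set}$-map $f:X\to Y$.

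The substance of the argument is checking that this bijection matches the defining inequalities on both sides, and this is the step I expect to be the main obstacle. The unit condition transfers immediately: $id_X=\eta_X$ sends $x$ to $\{(\tau,1_M,x)\}$ on the left and to $\{(\tau,x)\}$ on the right, so $id_X\leq\pi_1$ is equivalent to $id_X\leq\rho_{(1_{M'},1_M)}$. For multiplicativity one must unfold the Kleisli composition of $\mathcal{P}^{\Sigma,M}$ (Subsection~\ref{subsection:lts_monads}), in which two consecutive transitions carrying $M$-labels $p$ and $q$ produce a transition carrying $p\cdot q$, and compare it with the composition in $\mathcal{K}l(\mathcal{P}^{\Sigma,1})$ together with the product-monoid structure on the index. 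The crux is precisely the bookkeeping of this multiplication: on the left the $M$-label is accumulated internally by the Kleisli composition, whereas on the right it is accumulated externally by the lax-functorial structure on the $M$-component of the index. Tracing a single pair of composable transitions through both descriptions shows that an element $(\sigma,p\cdot q,z)$ of $(\pi_{m'}\circ\pi_{n'})(x)$ corresponds exactly to the image of the matching pair under the appropriate composite on the right; hence $\pi_{m'}\circ\pi_{n'}\leq\pi_{m'\cdot n'}$ for all $m',n'\in M'$ is equivalent to the family of inequalities characterising an object of $[M'\times M,\mathcal{K}l(\mathcal{P}^{\Sigma,1})]^\mathsf{Set}$. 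The one point demanding attention is the order in which the two $M$-elements are multiplied; this is transparent for the commutative monoids $M=[0,\infty)$, $\mathbb{N}$ and $[0,\infty]$ relevant to the applications.

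Finally I would treat the morphisms. Expanding the oplax condition $f^\sharp\circ\pi_{m'}\leq\pi'_{m'}\circ f^\sharp$ using $f^\sharp(x)=\{(\tau,1_M,f(x))\}$ shows it reduces to the implication $(\sigma,m,y)\in\pi_{m'}(x)\Rightarrow(\sigma,m,f(y))\in\pi'_{m'}(f(x))$, and here the $M$-label merely passes through, so no reordering is involved; the identical computation in $\mathcal{K}l(\mathcal{P}^{\Sigma,1})$ yields the matching condition for $\rho$. Functoriality of the assignment, together with preservation of identities and composition, is immediate because carriers and underlying maps are left unchanged, and the order enrichment is preserved since the order on hom-sets is determined pointwise by the (preserved) underlying data. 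Assembling these verifications gives the claimed order-enriched isomorphism.
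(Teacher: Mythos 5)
Your proposal is correct and takes essentially the same route as the paper, whose entire proof is the currying chain $M' \to \mathcal{P}(\Sigma_\tau \times M\times X)^{X} \cong X\times M' \to \mathcal{P}(\Sigma_\tau \times M\times X) \cong X\times M' \to \mathcal{P}(\Sigma_\tau \times X)^M \cong X\times M'\times M \to \mathcal{P}(\Sigma_\tau \times X) \cong M'\times M \to \mathcal{P}(\Sigma_\tau \times X)^X$ — i.e., exactly your uncurrying of the internal $M$-component into the domain index — with the verification of the lax-functor and oplax-transformation conditions left implicit, which you carry out explicitly. The order-of-multiplication caveat you flag for non-commutative $M$ (the Kleisli composition of $\mathcal{P}^{\Sigma,M}$ accumulates $M$-labels in diagrammatic order while the lax-functor inequality indexes by $m'_1\cdot m'_2$ in the opposite order) is a genuine subtlety that the paper's one-line proof does not address at all, so your treatment is, if anything, the more careful one.
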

\begin{proof} Follows easily by the sequence of bijective correspondences below:
$$
\infer{\infer{\infer{\infer{M'\times M \to \mathcal{P}(\Sigma_\tau \times X)^X}{X\times M'\times M \to \mathcal{P}(\Sigma_\tau \times X)}}{X\times M' \to \mathcal{P}(\Sigma_\tau \times X)^M}}{X\times M' \to \mathcal{P}(\Sigma_\tau \times M\times X)}}{{M' \to \mathcal{P}(\Sigma_\tau \times M\times X)^{X}}}
$$
\end{proof}
\noindent In particular, this means that $[\mathbb{N},\mathcal{K}l(\mathcal{P}^{\Sigma,[0,\infty)})]^\mathsf{Set} \cong [\mathbb{N}\times [0,\infty),\mathcal{K}l(\mathcal{P}^{\Sigma,1})]^\mathsf{Set}$. Thus, $\underline{\alpha}$ can  be viewed as a lax functor $\mathbb{N}\times [0,\infty)\to \mathcal{K}l(\mathcal{P}(\Sigma_\tau\times \mathcal{I}d))$ which maps any pair $(n,t)\in \mathbb{N}\times [0,\infty)$ to a coalgebra $X\to \mathcal{P}(\Sigma_\tau\times X)$ given by:
\begin{align*}
&x\mapsto 
\{(\sigma,x') \mid x\stackrel{(\tau,t_1)}{\to}_\alpha\circ \ldots \circ \stackrel{(\tau,t_{k-1})}{\to}_\alpha\circ \stackrel{(\sigma,t_k)}{\to}_\alpha\circ \stackrel{(\tau,t_{k+1})}{\to}_\alpha\circ \ldots \circ \stackrel{(\tau,t_{n})}{\to}_\alpha x'\},
\end{align*}
where $t_1+\ldots +t_n =t$. We will now derive two new coalgebras from $\alpha$, namely $\alpha^\ast$ and $\alpha^T$,  and discuss their properties. These two coalgebras will play a crucial role in modelling different types of behavioural equivalences on $\alpha$ (see Example \ref{example:timed_systems_weak} for details).

 Define $\alpha^\ast:X\to \mathcal{P}(\Sigma_\tau\times [0,\infty)\times X)$ whose transitions are:
\begin{align*}
& x\stackrel{(\tau,t)}{\to}_{\alpha^\ast} x' \text{ iff } x \stackrel{(\tau,t_1)}{\to}_{\alpha} \circ \ldots \circ \stackrel{(\tau,t_n)}{\to}_{\alpha}   x' \text{ for }t=\sum_{i=1}^{n} t_i,\\
& x\stackrel{(a,t)}{\to}_{\alpha^\ast} x' \text{ iff } x\stackrel{(\tau,t_1)}{\to}_{\alpha^\ast} x''\stackrel{(a,t_2)}{\to}_{\alpha}x''' \stackrel{(\tau,t_3)}{\to}_{\alpha^\ast} x' \text{ for }a\in \Sigma \text{ and } t=t_1+t_2+t_3.
\end{align*}
The coalgebra $\alpha^\ast$,  viewed as an endomorphism in $\mathcal{K}l(\mathcal{P}^{\Sigma,[0,\infty)})$, is an object of $[1,\mathcal{K}l(\mathcal{P}^{\Sigma,[0,\infty)})]^\mathsf{Set}$. However, since by Theorem \ref{theorem:change_of_monoids_lts_monad} we have $$[1,\mathcal{K}l(\mathcal{P}^{\Sigma,[0,\infty)})]^\mathsf{Set} \cong [[0,\infty), \mathcal{K}l(\mathcal{P}^{\Sigma,1})]^\mathsf{Set},$$ we can also view $\alpha^\ast$ as a lax functor $[0,\infty)\to \mathcal{K}l(\mathcal{P}(\Sigma_\tau\times \mathcal{I}d))$. It maps any $t\in [0,\infty)$ to the coalgebra
$$X\to \mathcal{P}(\Sigma_\tau\times X); x\mapsto \{ (\sigma,x')\mid (\sigma,t,x') \in \alpha^\ast(x)\}.$$
Finally, let $\alpha^T:X\to \mathcal{P}(\Sigma_\tau\times X)$ be the labelled transition system whose transitions are defined by:
$$
x\stackrel{\sigma}{\to}_{\alpha^T} x'\iff x\stackrel{(\sigma,t)}{\to}_{\alpha^\ast} x' \text{ for some }t\in [0,\infty).
$$
A straight forward verification proves that $\alpha^T$ is an object in $[1,\mathcal{K}l(\mathcal{P}(\Sigma_\tau\times \mathcal{I}d))]^\mathsf{Set}$. 

As mentioned before, Example \ref{example:timed_systems_weak} will complete the whole picture on timed processes semantics. We will show that whenever we consider $\underline{\alpha}$ as a lax functor $\mathbb{N}\to\mathcal{K}l(\mathcal{P}^{\Sigma,[0,\infty)})$, weak bisimulation on $\underline{\alpha}$ is the so-called weak timed bisimulation. However, if $\underline{\alpha}$ is viewed as a lax functor $\mathbb{N}\times [0,\infty)\to \mathcal{K}l(\mathcal{P}(\Sigma_\tau\times \mathcal{I}d))$ then its weak bisimulation becomes weak time-abstract bisimulation \cite{Larsen199775}.
\end{example}

\begin{example}[Transition of a continuous time Markov chain] \label{example:markov_chain_transition} The purpose of this example is to show that the transition matrix of a continuous time Markov chain, or CTMC in short, may be viewed as a lax functor $[0,\infty)\to\mathcal{K}l(\mathbb{F}_{[0,\infty]})$. 
%
Here, we only recall some notions from Markov chain theory. The reader is referred to e.g. \cite{books/daglib/0095301} for basic definitions and properties.

Let $(X_t)_{t\geq 0}$ be a CTMC. We call the chain $(X_t)_{t\geq 0}$ \emph{homogeneous} whenever $\mathbb{P}(X_t = j \mid X_s=i) = \mathbb{P}(X_{t-s}=j \mid X_0=i)$.
Any homogeneous CTMC $(X_t)_{t\geq 0}$ on an at most countable state space $S$ gives rise to its \emph{transition matrix}, i.e. a family  $\{P(t):S^2\to [0,1]\}_{t\geq 0}$ whose $ij$-th entry $p_{ij}(t)=P(t)(i,j)$ describes the conditional \emph{transition probabilities}:
$$
p_{ij}(t) = \mathbb{P}(X_t = j \mid X_0=i).
$$ 
The transition matrix satisfies $P(0)=I$ and $P(t+s)=P(t)\cdot P(s)$, where $I$ is the identity matrix and $\cdot$ is the matrix multiplication. The transition matrix $\{P(t)\}_{t\geq 0}$ yields an assignment $\pi:[0,\infty)\to \mathcal{K}l(\mathbb{F}_{[0,\infty]})$ given for any $t\in [0,\infty)$ by:
$$
\pi(\ast) = S, \quad \pi_t:S\to \mathbb{F}_{[0,\infty]} S; \pi_t(i)(j)= p_{ij}(t).
$$
The assignment $\pi = (\pi_t)_{t\in [0,\infty)}$ is a strict functor $[0,\infty)\to \mathcal{K}l(\mathbb{F}_{[0,\infty]})$ and, hence, is a member of $[[0,\infty),\mathcal{K}l(\mathbb{F}_{[0,\infty]})]^\mathsf{Set}$ and will be referred to as a \emph{transition functor} of the homogeneous chain $(X_t)_{t\geq 0}$. We elaborate more on transition functors and their weak bisimulation in Example \ref{example:markov_weak_bisim}.
\end{example}

\subsection{Change-of-base functor and its left adjoint}\label{subsection:change_of_base}
 Before we state the definition of weak bisimulation on lax functors we need one technical result regarding the change-of-base functor and existence of its left adjoint.
 
Any functor $p:\mathbb{D}\to \mathbb{E}$ between small categories $\mathbb{D}$ and $\mathbb{E}$ yields a functor $$[p,\mathsf{K}]^J:[\mathbb{E},\mathsf{K}]^J\to [\mathbb{D},\mathsf{K}]^J$$ defined as follows. For any $\pi\in [\mathbb{E},\mathsf{K}]^J$ put $ [p,\mathsf{K}]^J(\pi) = \pi\circ p$ and for any oplax transformation $f=\{f_E\}_{E\in \mathbb{E}}$ between $\pi$ and $\pi'$ in $[\mathbb{E},\mathsf{K}]^J$ 
the $D$-component of  $[p,\mathsf{K}]^J (f):\pi\circ p \to \pi'\circ p$ is given by  $f_{p D} = \pi(p D) \to \pi'(p D)$. In other words:
$$
 [p,\mathsf{K}]^J(f)_D = f_{p D} .
$$ 
It is easy to check that $[p,\mathsf{K}]^J$ is locally monotonic. 

\subsubsection{The general case}\label{subsection:the_general_case}
In this paragraph we assume the following:
\begin{itemize}
\item $p:\mathbb{D}\to \mathbb{E}$ is a functor between small categories,
\item $J$ and $\mathsf{K}$ have all small coproducts and the inclusion functor $J \hookrightarrow \mathsf{K}$ preserves them,
\item  $\mathsf{K}$ is $\mathsf{Sup}$-enriched (i.e. $\mathsf{K}$ is a \emph{quantaloid} \cite{rosen:quanta96}),
\item all suprema in hom-sets of $\mathsf{K}$ are preserved by arbitrary cotupling, i.e. 
$$
[\{\bigvee_{i_j} f_{i_j}\}_j ] = \bigvee_{j,i_j} [\{ f_{i_j}\}_j ].
$$
\end{itemize}

\begin{example}\label{example:monads-general-assump}
The above assumptions are true for $\mathsf{K}=\mathcal{K}l(\mathcal{Q}^{(-)})$ for an arbitrary quantale $\mathcal{Q}$ and $J=\mathsf{Set}$ or $J=\mathsf{K}$. Hence, in particular, for $\mathsf{K}= \mathcal{K}l(\mathcal{P}^{\Sigma,M})$.
\end{example}

\begin{theorem}\label{theorem:left_adjoint}
The functor $[p,\mathsf{K}]^J:[\mathbb{E},\mathsf{K}]^J\to [\mathbb{D},\mathsf{K}]^J$ admits a left $2$-adjoint $\Sigma_p$.
\end{theorem}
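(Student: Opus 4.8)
The plan is to construct the left adjoint $\Sigma_p$ explicitly as a left Kan extension along $p$, computed pointwise in the hom-sets of $\mathsf{K}$. Given a lax functor $\pi \in [\mathbb{D},\mathsf{K}]^J$, I would first define a candidate assignment $\Sigma_p \pi \in [\mathbb{E},\mathsf{K}]^J$ on objects by setting the carrier of the component at $E \in \mathbb{E}$ to be the coproduct $\coprod_{D : pD = E} \pi(D)$ (or, more robustly, a coproduct indexed over a suitable comma-category $p \downarrow E$), using the standing assumption that both $J$ and $\mathsf{K}$ admit all small coproducts and that the inclusion $J \hookrightarrow \mathsf{K}$ preserves them. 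The action of $\Sigma_p \pi$ on a morphism $e : E \to E'$ in $\mathbb{E}$ is then assembled by cotupling the appropriate composites $\mathsf{in}_{D'} \circ \pi(d)$ over all $d : D \to D'$ in $\mathbb{D}$ with $pd = e$, taking suprema where several such $d$ overlap; this is where the $\mathsf{Sup}$-enrichment is indispensable, since it guarantees the relevant joins exist in each hom-set.

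The second step is to verify that $\Sigma_p \pi$ is genuinely a lax functor, i.e. that it satisfies the two lax-functor inequalities from Proposition~\ref{proposition_1_T_omega}-style bookkeeping: $id \leq (\Sigma_p \pi)(id_E)$ and $(\Sigma_p \pi)(e_1) \circ (\Sigma_p \pi)(e_2) \leq (\Sigma_p \pi)(e_1 \circ e_2)$. The first follows from the lax unit law of $\pi$ together with the coprojections; the second is where the fourth standing assumption does the real work, namely that arbitrary cotupling commutes with the suprema in hom-sets, $[\{\bigvee_{i_j} f_{i_j}\}_j] = \bigvee_{j,i_j}[\{f_{i_j}\}_j]$. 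I expect this compatibility of composition, cotupling, and suprema to be the main obstacle, because one must carefully match the indexing: a composite of two joins over preimages of $e_1$ and $e_2$ must be reindexed as a join over preimages of $e_1 \circ e_2$, and the inequality (rather than equality) comes precisely from the laxness of $\pi$ itself.

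The third step is to establish the universal property, i.e. to produce the natural isomorphism of posets
\[
[\mathbb{E},\mathsf{K}]^J(\Sigma_p \pi, \rho) \;\cong\; [\mathbb{D},\mathsf{K}]^J(\pi, \rho \circ p)
\]
for every $\rho \in [\mathbb{E},\mathsf{K}]^J$, natural in $\pi$ and $\rho$, where $\rho \circ p = [p,\mathsf{K}]^J(\rho)$. I would do this by the standard coproduct-adjunction argument: an oplax transformation $\Sigma_p \pi \to \rho$ is, by the universal property of the coproducts defining the carriers of $\Sigma_p \pi$, the same data as a compatible family of arrows from the summands $\pi(D)$ into $\rho(E)$ (with $pD = E$), and the oplax-naturality squares for $\Sigma_p \pi \to \rho$ translate precisely into the oplax-naturality squares for $\pi \to \rho \circ p$. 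The requirement that the components lie in $J$ is preserved throughout because $J$ is closed under coproducts and under the relevant cotupling. Finally, since $[p,\mathsf{K}]^J$ was already shown to be locally monotonic and the bijection above is visibly order-preserving in each variable (both orderings being pointwise), the correspondence is an isomorphism of posets, which is exactly what a $2$-adjunction in the order-enriched sense requires; hence $\Sigma_p$ is the desired left $2$-adjoint.
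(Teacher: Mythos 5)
Your proposal is correct and follows essentially the same route as the paper: the carrier of $\Sigma_p\pi$ at $E$ is the strict-fiber coproduct $\sum_{D:pD=E}\pi D$, the action on $e$ is the join over $d$ with $pd=e$ of cotupled morphisms (padded by $\perp$ off the relevant summand), lax functoriality is checked using the compatibility of cotupling with suprema, and the universal property is established via the coprojections as unit and the coproduct universal property. The parenthetical suggestion of indexing over a comma category $p\downarrow E$ is unnecessary here --- the strict fibers together with the $\mathsf{Sup}$-enrichment already do the work --- but since your committed construction is the strict-fiber one, this is only an aside.
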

\begin{proof} The proof of this theorem is divided into two parts. In the first part we present an assignment $\Sigma_p$ and show it is a well defined functor between suitable categories. In the last part we show that $\Sigma_p$ is a left $2$-adjoint to the functor $[p,\mathsf{K}]^J:[\mathbb{E},\mathsf{K}]^J\to [\mathbb{D},\mathsf{K}]^J$.

\noindent \textbf{Part 1.} For any two objects $X,Y$ in $\mathsf{K}$ let $\perp_{X,Y}$ denote the least element in the poset $\mathsf{K}(X,Y)$. We will often drop the subscript and write $\perp$ instead.
For any $\pi\in [\mathbb{D},\mathsf{K}]^J$ define an assignment $\Sigma_p(\pi)$ from the category $\mathbb{E}$ to $\mathsf{K}$ on an object $E\in \mathbb{E}$ and a morphism $e:E_1\to E_2\in \mathbb{E}$ by:
\begin{align*}
\Sigma_p(\pi)(E) = \sum_{{D: pD = E}}\pi D \text{ and }\Sigma_p(\pi)(e) = \bigvee_{d: pd = e} \overline{\pi d}, 
\end{align*}
where $\overline{\pi d}:\Sigma_p(\pi)(E_1)\to \Sigma_p(\pi)(E_2)$ is given as follows. Let $d:D_1\to D_2$ and $pD_1 = E_1$, $pD_2=E_2$. We define the morphism  $\overline{\pi d}$ via cotupling in $\mathsf{K}$ by:
\begin{align*}
&\overline{\pi d} = [\{ \delta_{D}\}_{D: pD = E_1} ], \text{ where } \\
&\delta_D = \left \{ \begin{array}{ccc}
\mathsf{in}_{\pi D_2} \circ \pi d &:\pi D_1\to \sum_{D':pD'=E_2}\pi D' & \text{ if $D=D_1$},\\
\perp & :\pi D\to \sum_{D':pD'=E_2}\pi D' & \text{ otherwise.}
\end{array} \right.
\end{align*} 
We will now show that for any $\pi \in[\mathbb{D},\mathsf{K}]^J$ the assignment $\Sigma_{p}(\pi):\mathbb{E}\to \mathsf{K}$ is a lax functor. Indeed, take $id_E:E\to E$ in $\mathbb{E}$. We have:
$
\Sigma_p(\pi)(id_E) = \bigvee_{pd=id_E} \overline{\pi d}$. There can be two cases. If there is no $D$ mapped onto $E$ by the functor $p$ then $\Sigma_p(\pi)(E)$ is the initial object in $\mathsf{K}$. In this case the identity morphism on $\Sigma_p(\pi)(E)$ and the morphism $\Sigma_p(\pi)(id_E)$ are both equal to the least morphism $\perp$. Now, for any object $D$ such that $pD = E$ we have 
\begin{itemize}
\item $p(id_D) = id_E$, 
\item $\pi(id_D) \geq id_{\pi D}$.
\end{itemize} 
By the fact that cotupling preserves suprema we get 
{
$$\Sigma(\pi)(id_E)=\bigvee_{d:pd=id_E} \hspace{-0.3cm} \overline{\pi d} \geq \hspace{-0.3cm}\bigvee_{D: pD=E} \hspace{-0.3cm} \overline{\pi(id_D)}\geq \hspace{-0.3cm} \bigvee_{D: pD=E} \hspace{-0.3cm} \overline{id_{\pi D}} = id_{\Sigma_p(\pi)(E)}.$$} 

Now take $E_1\stackrel{e}{\to} E_2\stackrel{e'}{\to} E_3$ in $\mathbb{E}$. We have:
{
\begin{align*}
& \Sigma_p(\pi)(e'\circ e) = \bigvee_{d: p d = e'\circ e}\hspace{-0.3cm} \overline{\pi d}\geq  \bigvee_{pd_1=e',pd_2=e \text{ and }d_1,d_2 \text{ are composable}}\hspace{-2cm}\overline{\pi(d_1\circ d_2)}\geq \\
&\bigvee_{pd_1=e',pd_2=e \text{ and }d_1,d_2 \text{ are composable}}\hspace{-2cm}\overline{\pi(d_1)\circ  \pi(d_2)} \stackrel{\diamond}{=} \bigvee_{pd_1=e',pd_2=e} \hspace{-0.5cm}\overline{\pi(d_1)} \circ \overline{\pi (d_2)} =\\
&\bigvee_{pd_1=e'}\overline{\pi (d_1)} \circ \bigvee_{pd_2=e}\overline{\pi(d_2)}   = \Sigma_p(\pi)(e')\circ \Sigma_p(\pi)(e).
\end{align*}
}
\noindent The equation marked with $(\diamond)$ requires some explanation. If $d_1$ and $d_2$ are composable then $\overline{\pi(d_2)\circ  \pi(d_1)} =\overline{\pi(d_2)}\circ  \overline{\pi(d_1)}.$  If they are not composable in $\mathbb{D}$ then $\overline{\pi(d_2)}\circ \overline{\pi(d_1)} = \perp$, so clearly this equation holds.

For any oplax transformation $f:~\pi \to \pi'$ between $\pi,\pi'$ in $[\mathbb{D},\mathsf{K}]^J$ we put $\Sigma_p(f):\Sigma_p(\pi)\to \Sigma_p(\pi')$ whose $E$-component is given by:
$$
\Sigma_p(f)_E = \sum_{_{pD = E}}\pi D \stackrel{\sum_D f _D }{\to} \sum_{_{pD = E}}\pi' D. 
$$
Note that the $E$-component of $\Sigma_p(f)$ comes from the base category $J$. This follows by our assumptions about the inclusion functor $J \hookrightarrow \mathsf{K}$ preserving all small coproducts and the fact that all components of $f$ are arrows in $J$. It is clear that $\Sigma_p$ is functorial. 
This part of the proof is now completed.

\noindent \textbf{Part 2.} We will now prove that $\Sigma_p$ is a left 2-adjoint to $[p,\mathsf{K}]^J$. Here we should note that the remaining part of the proof is almost the same as the proof of a similar statement concerning relational persheaves \cite{sobocinski:jcss}. For any  $\pi\in [\mathbb{D},\mathsf{K}]^J$ define a transformation $\eta_\pi:\pi \to [p,\mathsf{K}]^J(\Sigma_p(\pi)) = \Sigma_p(\pi)\circ p$  whose $D$-component is given by the coprojection into the component of the coproduct indexed with $D$:
$$
(\eta_\pi)_D :\pi D \to \Sigma_p(\pi)(pD)=\sum_{D': pD'=pD}\pi D'; \quad  (\eta_\pi)_D = \mathsf{in}_{\pi D}.
$$
We have the following:
\begin{enumerate}[(a)]
\item since $J$ and $\mathsf{K}$ have all small coproducts and since $J\hookrightarrow \mathsf{K}$ preserves these coproducts the $D$-component of $\eta_\pi$ comes from the underlying category $J$;\label{property:eta_a} 
\item $\eta$ is an oplax transformation between lax functors $\pi$ and $\Sigma_p(\pi)\circ p$ in $[\mathbb{D},\mathsf{K}]^J$. \label{property:eta_b} To see this consider any $d:D_1\to D_2$ and note that:
{
$$\xymatrix{
\pi D_1 \ar@{}[drr]|\leq \ar[d]_{\pi d} \ar[rr]^{\mathsf{in}_{\pi D_1}} &  & \sum_{D':pD'=pD_1} \pi D' \ar[d]^{\Sigma_p(\pi)(pd)=\bigvee_{d':pd' = pd} \overline{\pi d'}}\\
\pi D_2 \ar[rr]_{\mathsf{in}_{\pi D_2}} & & \sum_{D':pD'=pD_2} \pi D'
}
$$
}
\item $\eta$ is a natural transformation from the identity functor $[\mathbb{D},\mathsf{K}]^J\to [\mathbb{D},\mathsf{K}]^J$ to the functor $[p,\mathsf{K}]^J\circ \Sigma_p$.
\end{enumerate}
We will check that $\eta$ satisfies the universal property of units. Consider any transformation $f:\pi \to T^p(\pi')=\pi'\circ p$ in $[\mathbb{D},\mathsf{K}]^J$ for $\pi'\in [\mathbb{E},\mathsf{K}]^J$. By the universal properties of the coproduct for any $E\in \mathbb{E}$ there is a unique morphism 
$g_E:\sum_{D: pD=E} \pi D \to ~\pi'(E)$ in $\mathsf{K}$  for which the following diagram commutes:
{$$
\xymatrix{
\pi D \ar[rrr]^{(\eta_\pi)_D=\mathsf{in}_{\pi D}} \ar[drrr]_{f_D} & & & \sum_{ pD'=pD}\pi D' \ar@{-->}[d]^{g_{pD}}\\
&&&  \pi'(pD)
}
$$
}
\noindent By (\ref{property:eta_a}) we directly see that $g_E$ is a morphism in $J$. In order to complete the proof we need to show that the family $g = (g_E)_{E\in \mathsf{E}}$ is a transformation from $\Sigma_p(\pi)$ and $\pi'$ in $[\mathbb{E},\mathsf{K}]^J$. We need to show that for any $e:E\to E'$ in $\mathbb{E}$ we have:
{ 
$$
\xymatrix{
\Sigma_p(\pi)(E) \ar@{}[dr]|\leq \ar[d]_{\Sigma_p(\pi)(e) = \bigvee_{d:pd = e} \overline{\pi d} }\ar[r]^{g_E} & \pi'(E)\ar[d]^{\pi'(e)} \\
\Sigma_p(\pi)(E') \ar[r]_{g_{E'}} & \pi'(E')
}
$$
}
Clearly, it is enough if we focus on morphisms from $\mathbb{E}$ which are images of morphisms from $\mathbb{D}$ under $p$. Indeed, if $e$ is not of this form then the diagram above lax commutes as $\Sigma_p(\pi)(e) = \perp$. By our assumptions about $f$ and by (\ref{property:eta_b})  the front square and the parallelogram on the back in the diagram below lax commute for arbitrary $d:D\to D'$ in $\mathbb{D}$. By the fact that cotupling preserves all suprema the parallelogram on the right also lax commutes. 
{
$$
\xymatrix{
& & \sum_{D'':pD'' = pD} \pi D'' \ar[dl]^{g_{p D}} \ar[d]^{\bigvee_{d:pd = e} \overline{\pi d}} \\
\pi D \ar[urr]^{(\eta_\pi)_D=\mathsf{in}_{\pi D}} \ar[d]_{\pi d}\ar@{}[dr]|\leq  \ar[r]_{f_D} & \pi'(p D) \ar[d]|{\pi'(e)} & \sum_{D'':pD'' = pD'} \pi D'' \ar@/^1pc/[dl]^{g_{p D'}} \\
\pi D'\ar@{-->}[urr]_{\mathsf{in}_{\pi D'} } \ar[r]_{f_D} & \pi'(p D')
}
$$
}
This completes the proof.
\end{proof}

The theorem above encompasses results presented in  \cite{niefield2004} (for $J=\mathsf{Set}$ and  $\mathsf{K}=\mathcal{K}l(\mathcal{P})$) and \cite{sobocinski:jcss} (for $J=\mathsf{K}=\mathcal{K}l(\mathcal{P})$) and, in the light of Example \ref{example:monads-general-assump}, it may be viewed as a generalization of these statements to $\mathcal{K}l(\mathcal{Q}^{(-)})$-valued lax functors.

\subsubsection{The adjunction $[\mathbb{D},\mathsf{K}]\rightleftarrows [1,\mathsf{K}]$}
The most important adjunction between lax functor categories from the point of view of weak bisimulation is the adjunction $[\mathbb{D},\mathsf{K}]\rightleftarrows [1,\mathsf{K}]$. In this case, the restrictive assumptions from the previous paragraph can be relaxed. Here, we assume that:
\begin{itemize}
\item $J$ and $\mathsf{K}$ have arbitrary coproducts of families indexed by objects from $\mathbb{D}$ and the inclusion functor $J \hookrightarrow \mathsf{K}$ preserves them,
\item  $\mathsf{K}$ is left distributive $\mathsf{DCpo}^\vee$-enriched,
\item  cotupling preserves the order, i.e. if $f_i\leq g_i :X_i\to Y$ for any $i\in I$ with $|I|\leq |\mathbb{D}|$ then:
$$
[f_i] \leq [g_i].
$$
\end{itemize}

 In this paragraph we will prove that given the above conditions the change-of-base functor $[!,\mathsf{K}]^J:[1,\mathsf{K}]^J\to [\mathbb{D},\mathsf{K}]^J$ admits a left $2$-adjoint $\Sigma_!$. However, before we do this, we need to define some ingredients necessary to derive its formula.

Assume $\pi\in [\mathbb{D},\mathsf{K}]^J$ is a lax functor and let $d:D_1\to D_2$ be a morphism in $\mathbb{D}$. We define an endomorphism $\overline{\pi(d)}:\sum_{D\in \mathbb{D}} \pi(D)\to \sum_{D\in \mathbb{D}} \pi(D)$ in $\mathsf{K}$ given by $\overline{\pi(d)} = [\delta_{D}]_{D\in \mathbb{D}}$, where 
\begin{align*}
&\delta_D = \left \{ \begin{array}{ccc}
\mathsf{in}_{\pi D_2} \circ \pi (d) &:\pi D_1\to \sum_{D'\in \mathbb{D}}\pi D' & \text{ if $D=D_1$},\\
\mathsf{in}_{\pi D} & :\pi D\to \sum_{D'\in\mathbb{D}}\pi D' & \text{ otherwise.}
\end{array} \right.
\end{align*} 
Since cotupling preserves the order, we have $id\leq \overline{\pi(id_D)}$ for any $D\in \mathbb{D}$. Note that it is not necessary to assume $\mathsf{K}$ is left distributive $\mathsf{DCpo}^\vee$-enriched in order to construct $\overline{\pi(d)}$. This observation will be used in the next section, where we work with $\overline{\pi(d)}$ even though $\mathsf{K}$ does not satisfy this property.

Finally, consider:
$$
\Pi = \{ \overline{\pi(d_1)} \vee \ldots \vee \overline{\pi(d_k)} \mid k\in \{1,2,\ldots \} \text{ and } d_i\text{ is a morphism in } \mathbb{D}\}. 
$$
We are now ready to define $\Sigma_!(\pi):\sum_{D\in \mathbb{D}} \pi(D)\to \sum_{D\in \mathbb{D}} \pi(D)$. We put:
$$
\Sigma_!(\pi) = \bigvee_{l\in \mathbb{N}} (\bigvee \Pi)^l,
$$ 
Note that:
\begin{align*}
& id \leq \overline{\pi(id_D)}\leq \bigvee \Pi \leq \Sigma_!(\pi) \text{ for any }D\in \mathbb{D}, \\
& \Sigma_!(\pi)\circ \Sigma_!(\pi) = \bigvee_{l\in \mathbb{N}} (\bigvee \Pi)^l \circ \bigvee_{l\in \mathbb{N}} (\bigvee\Pi)^l = \bigvee_{l_1,l_2\in \mathbb{N}} (\bigvee\Pi)^{l_1+l_2} =\Sigma_!(\pi).
\end{align*}
Hence, $\Sigma_!(\pi)$ is an object of $[1,\mathsf{K}]^J$. Now, for any oplax transformation $f:\pi\to \pi'$ in $[\mathbb{D},\mathsf{K}]^J$ put $\Sigma_!(f) = \sum_{D\in \mathbb{D}}f_D$.
\begin{lemma}
 $\Sigma_!:[\mathbb{D},\mathsf{K}]^J\to [1,\mathsf{K}]^J$ is a locally monotonic functor. 
\end{lemma}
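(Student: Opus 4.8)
The plan is to verify the three ingredients that make $\Sigma_!$ a locally monotonic functor: that $\Sigma_!(f)$ is a well-defined morphism of $[1,\mathsf{K}]^J$, that $\Sigma_!$ respects identities and composition, and that it preserves the order on hom-sets. On objects nothing remains, since the computation preceding the statement already shows $\Sigma_!(\pi)$ lies in $[1,\mathsf{K}]^J$. The routine parts go as follows. First, $\Sigma_!(f)=\sum_{D\in\mathbb{D}}f_D$ is a morphism in $J$: each $f_D$ is in $J$, the relevant coproducts exist in $J$, and the inclusion $J\hookrightarrow\mathsf{K}$ preserves them, so the cotuple is computed in $J$. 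Functoriality then follows from functoriality of the coproduct, $\sum_D \mathrm{id}_{\pi D}=\mathrm{id}_{\sum_D\pi D}$ and $\sum_D(g_D\circ f_D)=(\sum_D g_D)\circ(\sum_D f_D)$. Local monotonicity is immediate from the assumption that cotupling preserves the order: if $f_D\le f'_D$ for all $D$, then $\mathsf{in}_{\pi'D}\circ f_D\le\mathsf{in}_{\pi'D}\circ f'_D$ for all $D$, whence $\Sigma_!(f)\le\Sigma_!(f')$.

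The real content, and the step I expect to be the main obstacle, is showing that $F:=\Sigma_!(f)$ is a morphism of $[1,\mathsf{K}]^J$, which by Proposition \ref{proposition:t_bullet} amounts to the single inequality
\[
F\circ\Sigma_!(\pi)\;\le\;\Sigma_!(\pi')\circ F.
\]
I would reduce this to a one-step commutation lemma: for every morphism $d\colon D_1\to D_2$ of $\mathbb{D}$,
\[
F\circ\overline{\pi(d)}\;\le\;\overline{\pi'(d)}\circ F.
\]
To prove the lemma I would compare both sides after precomposition with each coprojection $\mathsf{in}_{\pi D}$. On components $D\ne D_1$ both sides equal $\mathsf{in}_{\pi'D}\circ f_D$, while on the component $D_1$ the left side is $\mathsf{in}_{\pi'D_2}\circ f_{D_2}\circ\pi(d)$ and the right side is $\mathsf{in}_{\pi'D_2}\circ\pi'(d)\circ f_{D_1}$; the inequality between these is exactly the oplax-transformation condition $f_{D_2}\circ\pi(d)\le\pi'(d)\circ f_{D_1}$ for $f$, postcomposed with $\mathsf{in}_{\pi'D_2}$. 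Since cotupling preserves the order, the coprojection-wise inequalities assemble into the claimed lemma. This is the only place where the oplaxness of $f$ is used.

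It then remains to propagate this one-step inequality through the three layers in the definition of $\Sigma_!$. First, for a finite join $\sigma=\overline{\pi(d_1)}\vee\cdots\vee\overline{\pi(d_k)}\in\Pi$, left distributivity of $\mathsf{K}$ gives $F\circ\sigma=\bigvee_i F\circ\overline{\pi(d_i)}$, each summand is bounded by $\overline{\pi'(d_i)}\circ F$ by the lemma, and the general inequality $\bigvee_i(\overline{\pi'(d_i)}\circ F)\le(\bigvee_i\overline{\pi'(d_i)})\circ F$ together with $\bigvee_i\overline{\pi'(d_i)}\le\bigvee\Pi'$ yields $F\circ\sigma\le(\bigvee\Pi')\circ F$. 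Second, $\Pi$ is directed, being closed under the finite joins it is built from, so Scott-continuity of composition in a $\mathsf{DCpo}^\vee$-enriched category gives $F\circ\bigvee\Pi=\bigvee_{\sigma\in\Pi}F\circ\sigma\le(\bigvee\Pi')\circ F$; an easy induction on $l$, using that composition preserves the order, upgrades this to $F\circ(\bigvee\Pi)^l\le(\bigvee\Pi')^l\circ F$. Finally, the chain $\{(\bigvee\Pi)^l\}_l$ is increasing, because $\mathrm{id}\le\bigvee\Pi$, hence directed, so Scott-continuity of composition gives $F\circ\Sigma_!(\pi)=\bigvee_l F\circ(\bigvee\Pi)^l\le\bigvee_l(\bigvee\Pi')^l\circ F=\Sigma_!(\pi')\circ F$, completing the argument. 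The main delicacy throughout is keeping track of which distributivity or continuity property is invoked at each join or supremum; once the one-step lemma is in hand, everything after it is bookkeeping in the enrichment axioms.
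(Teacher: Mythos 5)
Your proof is correct and follows essentially the same route as the paper's: reduce to the one-step inequality $\Sigma_!(f)\circ\overline{\pi(d)}\leq\overline{\pi'(d)}\circ\Sigma_!(f)$ (which the paper also derives from oplaxness of $f$ plus order-preservation of cotupling, though you spell out the coprojection-wise check in more detail), then propagate through $\bigvee\Pi$, its powers, and the final directed supremum using left distributivity and the $\mathsf{DCpo}^\vee$-enrichment. Your explicit remarks that $\Pi$ is directed and that the chain $\{(\bigvee\Pi)^l\}_l$ is increasing are slightly more careful justifications of suprema-exchange steps the paper performs without comment.
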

\begin{proof}
In order to prove the statement it is enough to show that $\Sigma_!(f)$ is an oplax transformation between $\Sigma_!(\pi)$ and $\Sigma_!(\pi')$ whenever $f$ is an oplax transformation between $\pi$ and $\pi'$ in $[\mathbb{D},\mathsf{K}]^J$. Thus we have:
\begin{align*}
&\Sigma_!(f)\circ \Sigma_!(\pi) = \Sigma_!(f) \circ \bigvee_{l\in \mathbb{N}}  (\bigvee \Pi)^l =  \bigvee_{l\in \mathbb{N}} \Sigma_!(f) \circ (\bigvee \Pi)^l \stackrel{\dagger}{\leq} \\
&\bigvee_{l\in \mathbb{N}} (\bigvee \Pi')^l \circ \Sigma_!(f) = \Sigma_!(\pi')\circ \Sigma_!(f).
\end{align*}
The inequality marked with $(\dagger)$ follows by $\Sigma_!(f) \circ (\bigvee \Pi)^l\leq (\bigvee \Pi')^l \circ \Sigma_!(f)$ which is proved inductively. For $l=1$ we have:
\begin{align*}
&\Sigma_!(f) \circ (\bigvee \Pi) = \bigvee \Sigma_!(f)\circ \Pi = \\
&\bigvee  \{ \Sigma_!(f)\circ (\overline{\pi(d_1)} \vee \ldots \vee \overline{\pi(d_k)}) \mid k\in \{1,2,\ldots \} \text{ and } d_i\text{ is a morphism in } \mathbb{D}\} = \\
&\bigvee  \{ \Sigma_!(f)\circ \overline{\pi(d_1)} \vee \ldots \vee\Sigma_!(f)\circ \overline{\pi(d_k)} \mid k\in \{1,2,\ldots \} \text{ and } d_i\in \mathbb{D}\} \stackrel{\dagger\dagger}{\leq}\\
&\bigvee  \{ \overline{\pi'(d_1)}\circ \Sigma_!(f) \vee \ldots \vee \overline{\pi'(d_k)}\circ \Sigma_!(f) \mid k\in \{1,2,\ldots \} \text{ and } d_i\in \mathbb{D}\}\leq\\
&\bigvee  \{ (\overline{\pi'(d_1)} \vee \ldots \vee \overline{\pi'(d_k)})\circ \Sigma_!(f) \mid k\in \{1,2,\ldots \} \text{ and } d_i\in \mathbb{D}\} = (\bigvee \Pi')\circ \Sigma_!(f).
\end{align*} 
The inequality marked with $(\dagger\dagger)$ is true since for any morphism $d:D_1\to D_2$ in $\mathbb{D}$ we have $\Sigma_!(f)\circ \overline{\pi(d)}\leq \overline{\pi'(d)}\circ \Sigma_!(f)$. This is a consequence of the fact that $f$ is an oplax natural transformation between $\pi$ and $\pi'$ and that the order is preserved by cotupling.
\end{proof}

\begin{theorem}\label{theorem:dcpo_left_adjoint}
The functor $\Sigma_!:[\mathbb{D},\mathsf{K}]^J\to [1,\mathsf{K}]^J$ is a left $2$-adjoint to the change-of-base functor $[!,\mathsf{K}]:[1,\mathsf{K}]^J \to [\mathbb{D},\mathsf{K}]^J$.
\end{theorem}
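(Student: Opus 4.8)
The plan is to exhibit the required natural isomorphism of posets
$$[1,\mathsf{K}]^J(\Sigma_!(\pi),\beta)\;\cong\;[\mathbb{D},\mathsf{K}]^J(\pi,[!,\mathsf{K}](\beta))$$
explicitly, with the two directions given by cotupling and by precomposition with the coprojections. First I would unwind the right-hand side: since $!:\mathbb{D}\to 1$ collapses every morphism to the unit of the monoid $1$, Proposition~\ref{proposition:t_bullet} identifies $[!,\mathsf{K}](\beta)$ with the lax functor that is constantly $\beta(\ast)$ on objects and constantly $\beta$ on morphisms, so that an oplax transformation $f:\pi\to[!,\mathsf{K}](\beta)$ is a family $\{f_D:\pi D\to\beta(\ast)\}_{D\in\mathbb{D}}$ in $J$ with $f_{D_2}\circ\pi(d)\leq\beta\circ f_{D_1}$ for every $d:D_1\to D_2$. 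Given such an $f$, I set $\Psi(f)=[f_D]_{D\in\mathbb{D}}:\sum_{D}\pi D\to\beta(\ast)$ (which lands in $J$ because $J\hookrightarrow\mathsf{K}$ preserves the coproduct), and conversely, for $g:\Sigma_!(\pi)\to\beta$ in $[1,\mathsf{K}]^J$ I set $\Phi(g)=\{g\circ\mathsf{in}_{\pi D}\}_{D\in\mathbb{D}}$.

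The main obstacle is to verify that $\Psi$ is well defined, i.e. that $g:=[f_D]$ is genuinely a morphism in $[1,\mathsf{K}]^J$, which by Proposition~\ref{proposition:t_bullet} means $g\circ\Sigma_!(\pi)\leq\beta\circ g$. I would first record the single-generator estimate $g\circ\overline{\pi(d)}\leq\beta\circ g$ for each $d:D_1\to D_2$. Post-composing $g$ with the cotuple gives $g\circ\overline{\pi(d)}=[g\circ\delta_{D}]_{D}$, whose $D_1$-component is $f_{D_2}\circ\pi(d)$ and whose remaining components are $f_D$; comparing with $\beta\circ g=[\beta\circ f_D]_D$ component by component, the former is $\leq\beta\circ f_{D_1}$ by oplaxness of $f$ and the latter satisfy $f_D\leq\beta\circ f_D$ because $id\leq\beta$, so the estimate follows since cotupling preserves the order. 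Using left distributivity I then pass from single generators to the finite joins comprising $\Pi$, and using $\mathsf{DCpo}^\vee$-continuity of composition I pass to the directed supremum $\bigvee\Pi$, obtaining $g\circ(\bigvee\Pi)\leq\beta\circ g$. An induction on $l$ (base case $l=0$ from $id\leq\beta$, inductive step from $\beta\circ\beta\leq\beta$ together with monotonicity of composition) yields $g\circ(\bigvee\Pi)^l\leq\beta\circ g$ for every $l$, and a final application of continuity of composition gives $g\circ\Sigma_!(\pi)=\bigvee_l g\circ(\bigvee\Pi)^l\leq\beta\circ g$, as required.

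The remaining steps are routine. For well-definedness of $\Phi$ I would use $\overline{\pi(d)}\leq\bigvee\Pi\leq\Sigma_!(\pi)$ together with $g\circ\Sigma_!(\pi)\leq\beta\circ g$ and the identity $\overline{\pi(d)}\circ\mathsf{in}_{\pi D_1}=\mathsf{in}_{\pi D_2}\circ\pi(d)$ to recover the oplax inequality for $\{g\circ\mathsf{in}_{\pi D}\}$. That $\Phi$ and $\Psi$ are mutually inverse is immediate from the coproduct equations $[f_D]\circ\mathsf{in}_{\pi D}=f_D$ and $[g\circ\mathsf{in}_{\pi D}]=g$; both maps are monotone ($\Phi$ because composition is monotone, $\Psi$ because cotupling preserves the order, by hypothesis), so the bijection is an isomorphism of posets. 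Finally I would check naturality of this isomorphism in $\pi$ and $\beta$, which again reduces to the universal property of the coproduct and is purely formal. This completes the verification that $\Sigma_!\dashv[!,\mathsf{K}]$ is a $2$-adjunction.
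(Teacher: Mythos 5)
Your proposal is correct and follows essentially the same route as the paper: the isomorphism is $f\mapsto[f_D]$ with inverse given by precomposition with coprojections, and the key estimate $[f_D]\circ\Sigma_!(\pi)\leq\beta\circ[f_D]$ is obtained exactly as in the paper's chain of equivalences — the single-generator bound from oplaxness of $f$ together with $id\leq\beta$ and order-preservation of cotupling, then left distributivity for finite joins, $\mathsf{DCpo}^\vee$-continuity of composition for the directed supremum $\bigvee\Pi$, induction using $\beta\circ\beta\leq\beta$ for the powers, and continuity once more for the final supremum. The only difference is presentational: the paper phrases the argument as a sequence of equivalent inequalities for a single $f$, whereas you verify the two directions of the bijection separately.
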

\begin{proof}
We will prove that for any lax functor $\pi:\mathbb{D}\to \mathsf{K}$ and $\pi':1\to \mathsf{K}$ the partially ordered hom-sets  $[\mathbb{D},\mathsf{K}]^J(\pi,[!,\mathsf{K}](\pi'))$ and $[1,\mathsf{K}]^J(\Sigma_!(\pi),\pi')$ are isomorphic. Take an oplax transformation $f:\pi\to[!,\mathsf{K}](\pi')=\pi'\circ !$. We have the following sequence of equivalent statements:
\begin{align}
&f_{D_2}\circ \pi(d) \leq (\pi'\circ !)(d)\circ f_{D_1} = \pi'\circ f_{D_1} \text{ for any }d:D_1\to D_2\in \mathbb{D}\label{ineq:1},\\
&[f_D]\circ \overline{\pi(d)} \leq  \pi'\circ [f_{D}] \text{ for any }d:D_1\to D_2\in \mathbb{D} \label{ineq:2},\\
&\bigvee [f_D]\circ \Pi \leq  \pi'\circ [f_{D}],\label{ineq:3.5}\\
&[f_D]\circ \bigvee \Pi \leq  \pi'\circ [f_{D}]\label{ineq:3},\\
&[f_D]\circ (\bigvee \Pi)^l \leq  \pi'\circ [f_{D}] \text{ for any }l\in \mathbb{N}\label{ineq:4},\\
&\bigvee_{l\in \mathbb{N}}[f_D]\circ (\bigvee \Pi)^l \leq  \pi'\circ [f_{D}],\\
&[f_D]\circ \Sigma_!(\pi) \leq  \pi'\circ [f_{D}].
\end{align}
The implication (\ref{ineq:1})$\implies$ (\ref{ineq:2}) is a consequence of the fact that cotupling preserves the order and $id\leq \pi'$ (hence, $f_D\leq \pi'\circ f_D$ for any $D\in \mathbb{D}$).  (\ref{ineq:2}$\iff$\ref{ineq:3.5}) follows by left distributivity of $\mathsf{K}$. The implication  (\ref{ineq:3})$\implies$ (\ref{ineq:4}) follows by induction and $\pi'\circ \pi'\leq \pi'$. Therefore, the isomorphism  between $[\mathbb{D},\mathsf{K}]^J(\pi,[!,\mathsf{K}](\pi'))$ and $[1,\mathsf{K}]^J(\Sigma_!(\pi),\pi')$ is given by $f\mapsto [f_D]$. It is natural in $\pi$ and $\pi'$ which ends the proof.
\end{proof}

\begin{theorem}
We have:
\begin{align}
\Sigma_!(\pi) = \mu x. (id_{\sum_D \pi(D)} \vee\hspace{-0,5cm} \bigvee_{d:D_1\to D_2\in \mathbb{D}}\hspace{-0,5cm} x\circ \overline{\pi(d)}).\label{id:lax_functor_saturation}
\end{align}
\end{theorem}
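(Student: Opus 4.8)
The plan is to reduce the right-hand side to the explicit form of $\Sigma_!(\pi)$ already at hand and then verify the two defining properties of a least fixed point directly, so that existence of $\mu$ and its value are established simultaneously. Write $\Phi := \bigvee \Pi$. Since $\Pi$ is nonempty and closed under binary joins it is directed, so this directed supremum exists in the $\mathsf{DCpo}^\vee$-enriched $\mathsf{K}$, and unwinding the definition of $\Pi$ gives $\Phi = \bigvee_{d}\overline{\pi(d)}$, the join ranging over all morphisms of $\mathbb{D}$. The first step is to rewrite the operator being iterated: using left distributivity of $\mathsf{K}$ on the finite joins occurring in $\Pi$ together with Scott-continuity of composition on the directed supremum $\Phi$, one obtains
\begin{align*}
\bigvee_{d:D_1\to D_2\in \mathbb{D}} x\circ \overline{\pi(d)} = x\circ \Bigl(\bigvee_{d}\overline{\pi(d)}\Bigr) = x\circ \Phi,
\end{align*}
so the map whose least fixed point we seek is $F(x) = id \vee x\circ \Phi$, and $\mu x.(id\vee \bigvee_{d} x\circ\overline{\pi(d)}) = \mu x.\, F(x)$.

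Next I would show that $\Sigma_!(\pi) = \bigvee_{l\in\mathbb{N}}\Phi^l$ (with $\Phi^0 = id$) is a fixed point of $F$. The inequality $id \leq \overline{\pi(id_D)}\leq \Phi$ recorded just before the statement gives $\Phi^l = \Phi^l\circ id \leq \Phi^l\circ\Phi = \Phi^{l+1}$, so $(\Phi^l)_{l}$ is an ascending chain and in particular $\Sigma_!(\pi)$ is a directed supremum. Applying Scott-continuity of $(-)\circ \Phi$ to this chain yields $\Sigma_!(\pi)\circ \Phi = \bigvee_l (\Phi^l\circ\Phi) = \bigvee_l \Phi^{l+1}$, whence
\begin{align*}
F(\Sigma_!(\pi)) = id \vee \bigvee_{l}\Phi^{l+1} = \bigvee_{l}\Phi^{l} = \Sigma_!(\pi),
\end{align*}
the middle equality holding because $id = \Phi^0$ sits at the bottom of the chain. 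Thus $\Sigma_!(\pi)$ is a fixed point.

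Finally I would check minimality against an arbitrary prefixed point. If $y$ satisfies $F(y)\leq y$, i.e. $id\leq y$ and $y\circ\Phi \leq y$, then an easy induction shows $\Phi^l\leq y$ for every $l$: the base case is $\Phi^0 = id\leq y$, and from $\Phi^l\leq y$ we get $\Phi^{l+1} = \Phi^l\circ \Phi \leq y\circ\Phi\leq y$ by monotonicity of composition. Hence $\Sigma_!(\pi) = \bigvee_l \Phi^l \leq y$. Combined with the previous step this proves that $\Sigma_!(\pi)$ is the least (pre)fixed point of $F$, which is exactly the asserted identity. I expect the only delicate point to be the bookkeeping in the first step: one must apply \emph{left} distributivity only to the finite joins inside $\Pi$ and invoke Scott-continuity (rather than a distributivity law) when passing the supremum defining $\Phi$---and later the chain $(\Phi^l)_l$---through composition, since $\mathsf{K}$ is assumed left distributive whereas its joins over all morphisms of $\mathbb{D}$ are only guaranteed to exist as directed suprema.
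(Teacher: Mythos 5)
Your proof is correct. The opening reduction---using left distributivity on the finite joins constituting $\Pi$ and Scott-continuity of composition on the directed supremum $\bigvee\Pi$ to rewrite $\bigvee_{d}x\circ\overline{\pi(d)}$ as $x\circ\bigvee\Pi$---is exactly the paper's first step, and you are right that it is the only delicate point. Where you diverge is in how the identity is closed: the paper proves by induction that the Kleene iterates satisfy $F^l(id)=(\bigvee\Pi)^l$ and then appeals to the formula $\mu x.\,F(x)=\bigvee_{l}F^l(id)$, whereas you verify the universal property of the least fixed point directly, showing that $\bigvee_{l}(\bigvee\Pi)^l$ is a fixed point of $F$ (via Scott-continuity of $(-)\circ\bigvee\Pi$ applied to the ascending chain of powers, which ascends because $id\leq\bigvee\Pi$) and that it lies below every prefixed point (by an induction on $l$ using only monotonicity of composition). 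Your route is marginally more self-contained, since it does not require justifying the Kleene formula for $\mu F$---in particular no continuity of $x\mapsto id\vee x$ is needed, only continuity of composition; the paper's route has the small advantage of exhibiting the iterates $F^l(id)$ explicitly, a computation whose pattern it reuses in the proofs of Lemma \ref{lemma:identity_theta} and Theorem \ref{theorem:weak_bis_left_distributive}. Both arguments ultimately rest on the same two facts: $id\leq\overline{\pi(id_D)}\leq\bigvee\Pi$ and left distributivity of $\mathsf{K}$.
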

\begin{proof}
At first observe that the assignment 
\begin{align*}
F:\mathsf{K}(X,X)\to \mathsf{K}(X,X); F(x)= id_{X} \vee \bigvee_{d\in \mathbb{D}} x\circ \overline{\pi(d)} 
\end{align*}
is well defined for any morphism $x\in \mathsf{K}(X,X)$. Indeed, $\bigvee_{d\in \mathbb{D}} x\circ \overline{\pi(d)}$ exists since it can be rewritten as $\bigvee x\circ \Pi = x\circ \bigvee\Pi.$
Moreover, its least fixed point is given by $\bigvee_{l\in \mathbb{N}} F^l(id)$. Additionally, we have:
$$
(\bigvee \Pi)^l = F^l(id) \text{ for any }l\in \mathbb{N}.
$$
The above assertion is true for $l=1$. Now, by induction, assume it holds for $l$ and consider:
\begin{align*}
&(\bigvee \Pi)^{l+1} =  (\bigvee \Pi)^{l}\circ \bigvee \Pi=  F^{l}(id)\circ (\bigvee \Pi)  =id \vee F^{l}(id)\circ (\bigvee \Pi)=\\
&id \vee \bigvee F^{l}(id)\circ  \Pi  = id \vee \bigvee \{ F^{l}(id)\circ (\overline{\pi(d_1)}\vee \ldots \vee \overline{\pi(d_k)})\mid k\in \{1,2,\ldots\}, d_i\in \mathbb{D}\}=\\
&id \vee \bigvee \{ F^{l}(id)\circ \overline{\pi(d_1)}\vee \ldots \vee F^{l}(id)\circ \overline{\pi(d_k)}\mid k\in \{1,2,\ldots\}, d_i\in \mathbb{D}\}=\\
& id \vee \bigvee_{d\in \mathbb{D}} F^{l}(id)\circ \overline{\pi(d)} = F^{l+1}(id).
\end{align*}
This completes the proof, as $\Sigma_!(\pi) = \bigvee_{l\in \mathbb{N}}(\bigvee \Pi)^l = \bigvee_l F^l(id)$.
\end{proof}

\begin{theorem} \label{theorem:omega_cpo_enrichment_adjunction}
If $\mathsf{K}$ is a left-distributive $\omega\mathsf{Cpo}^\vee$-enriched category then the functor $[!,\mathsf{K}]^J:[1,\mathsf{K}]^J\to [\mathbb{N},\mathsf{K}]^J$ admits a left 2-adjoint $\Sigma_!$.
\end{theorem}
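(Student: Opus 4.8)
The plan is to reuse, essentially verbatim, the construction and the argument behind Theorem~\ref{theorem:dcpo_left_adjoint}; the only obstruction is that $\omega\mathsf{Cpo}^\vee$-enrichment guarantees suprema of ascending $\omega$-chains rather than of arbitrary directed families. The decisive observation is that for $\mathbb{D}=\mathbb{N}$ every supremum appearing in that proof can in fact be realised as an $\omega$-chain supremum. Since $\mathbb{N}$ is a one-object category, $\sum_{D\in\mathbb{N}}\pi(D)=\pi(\ast)=X$, and because the single object is both source and target of every morphism $n\in\mathbb{N}$, the endomorphism $\overline{\pi(n)}$ collapses to $\pi_n$ itself; hence $\Pi=\{\pi_{n_1}\vee\dots\vee\pi_{n_k}\mid k\geq 1,\ n_i\in\mathbb{N}\}$ and $\bigvee\Pi=\bigvee_{n\in\mathbb{N}}\pi_n$.

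First I would establish that $\bigvee\Pi$ exists in the $\omega$-cpo hom-set $\mathsf{K}(X,X)$. Setting $p_k=\pi_0\vee\pi_1\vee\dots\vee\pi_k$ (a finite binary join, available in $\omega\mathsf{Cpo}^\vee$), the $p_k$ form an ascending $\omega$-chain, whose supremum therefore exists; and since every element of $\Pi$ lies below some $p_k$ while each $p_k$ itself belongs to the directed set $\Pi$, the chain $(p_k)$ is cofinal in $\Pi$, so $\bigvee\Pi=\bigvee_k p_k$. Because $id\leq\pi_0\leq\bigvee\Pi$ (cf. Proposition~\ref{proposition_1_T_omega}), the powers $(\bigvee\Pi)^l$ form an ascending $\omega$-chain, so $\Sigma_!(\pi)=\bigvee_{l\in\mathbb{N}}(\bigvee\Pi)^l$ is again a legitimate $\omega$-chain supremum. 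I would then check, exactly as in Theorem~\ref{theorem:dcpo_left_adjoint}, that $id\leq\Sigma_!(\pi)$ and $\Sigma_!(\pi)\circ\Sigma_!(\pi)=\Sigma_!(\pi)$, whence $\Sigma_!(\pi)\in[1,\mathsf{K}]^J$; the interchange $\bigvee_{l_1}(\bigvee\Pi)^{l_1}\circ\bigvee_{l_2}(\bigvee\Pi)^{l_2}=\bigvee_l(\bigvee\Pi)^l$ is valid because $\omega\mathsf{Cpo}$-enrichment makes composition preserve $\omega$-chain suprema in each argument.

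Next I would verify that $\Sigma_!$ is a locally monotonic functor by putting $\Sigma_!(f)=f_\ast$ and transcribing the lemma preceding Theorem~\ref{theorem:dcpo_left_adjoint}: the inductive inequality $\Sigma_!(f)\circ(\bigvee\Pi)^l\leq(\bigvee\Pi')^l\circ\Sigma_!(f)$ rests only on oplax naturality of $f$ (which gives $\Sigma_!(f)\circ\pi_n\leq\pi'_n\circ\Sigma_!(f)$), on left distributivity (to push $\Sigma_!(f)$ through the finite joins defining the $p_k$), and on composition preserving $\omega$-chain suprema. Finally, the adjunction itself is obtained as in Theorem~\ref{theorem:dcpo_left_adjoint} through the chain of equivalences (\ref{ineq:1})--(\ref{ineq:4}): the passage (\ref{ineq:2})$\iff$(\ref{ineq:3}) uses left distributivity together with $\bigvee_k[f_D]\circ p_k=[f_D]\circ\bigvee_k p_k$, the implication into (\ref{ineq:4}) uses $\pi'\circ\pi'\leq\pi'$, and the concluding supremum is once more an $\omega$-chain supremum. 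The isomorphism $f\mapsto[f_D]=f_\ast$ is natural in both variables, yielding the desired $2$-adjunction.

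The only genuinely new point, and the step I expect to require the most care, is the cofinality argument establishing $\bigvee\Pi=\bigvee_k p_k$ as an $\omega$-chain supremum; everything downstream is a faithful replay of Theorem~\ref{theorem:dcpo_left_adjoint} in which ``directed supremum'' is read uniformly as ``$\omega$-chain supremum''. It is precisely here that the hypothesis $\mathbb{D}=\mathbb{N}$, i.e.\ countably many morphisms, is what allows the enrichment to be weakened from $\mathsf{DCpo}^\vee$ to $\omega\mathsf{Cpo}^\vee$.
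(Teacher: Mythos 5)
Your proposal is correct and follows essentially the same route as the paper: the paper's own proof consists precisely of the observation that for $\mathbb{D}=\mathbb{N}$ one has $\bigvee\Pi=\bigvee_{n\in\mathbb{N}}\bigvee_{k=1}^{n}\pi_k=\bigvee_{n\in\mathbb{N}}\pi_n$, so every supremum in the proof of Theorem~\ref{theorem:dcpo_left_adjoint} is an $\omega$-chain supremum and no cotupling is needed, which is exactly your cofinality argument via the chain $p_k=\pi_0\vee\dots\vee\pi_k$. Your write-up merely spells out in more detail what the paper leaves as ``very similar to the proof of Theorem~\ref{theorem:dcpo_left_adjoint}''.
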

\begin{proof} The proof of this theorem is very similar to the proof of Theorem \ref{theorem:dcpo_left_adjoint}. In this case, however, we have $\bigvee\Pi  =\bigvee_{n\in \mathbb{N}} \bigvee_{k=1}^n \pi_k = \bigvee_{n\in\mathbb{N}}\pi_n$ for $\pi\in [\mathbb{N},\mathsf{K}]$. Hence, the assumption of $\mathsf{DCpo}^\vee$-enrichment can be naturally replaced with $\omega\mathsf{Cpo}^\vee$-enrichment, as only the suprema of $\omega$-chains are considered and no cotupling is used.
\end{proof}

\section{Weak bisimulation}\label{section:weak_bisimulation_lax_functors}

The primary purpose of this section is to introduce the notion of weak bisimulation for lax functors. We believe that the lax functor weak bisimulation can serve as an extension of coalgebra weak bisimulation in future applications. In order to justify this statement we use an example of weak bisimulations of timed processes and Markov chain transition functors (see Example \ref{example:timed_systems_weak} and \ref{example:markov_weak_bisim} below for details).
In the second part of this section we revisit coalgebraic weak bisimulation from Subsection~\ref{subsection:weak_bis_coal} and argue that it is, in fact, a consequence of weak bisimulation for lax functors.

\subsection{Weak bisimulation for lax functors} \label{subsection:flow_weak_bisimulation} Here, we assume the following:
\begin{itemize}
\item  $\mathbb{D}$ is a small category,
\item  $\mathsf{K}$ is a small category\footnote{See Remark \ref{remark:smallness_problems} on smallness of $\mathsf{K}$.},
\item $J$ is a subcategory of $\mathsf{K}$ with all objects from $\mathsf{K}$,
\item $\mathsf{K}$ is $\mathsf{DCpo}^\vee$-enriched\footnote{whenever $\mathbb{D}=\mathbb{N}$ then all occurences of $\mathsf{DCpo}^\vee$ in this subsection can be replaced with $\omega\mathsf{Cpo}^\vee$ and all theorems remain true. See also Theorem \ref{theorem:dcpo_left_adjoint} and \ref{theorem:omega_cpo_enrichment_adjunction} for comparison.}, 
\item $\mathsf{K}$ admits arbitrary coproducts of families indexed by objects of $\mathbb{D}$ with $\mathbb{D}$-indexed cotupling preserving the order.
\end{itemize}
The first step to define weak bisimulation for members of $[\mathbb{D},\mathsf{K}]^J$ is to embed $\mathsf{K}$ into a category which yields saturation (cf. Subsection \ref{subsection:weak_bis_coal}).
\subsubsection{Locally reflective embedding of $\mathsf{K}$ into a left distributive category} \label{subsection:local_embedding} The main goal of this subsection is to describe a supercategory of $\mathsf{K}$ which is $\mathsf{DCpo}^\vee$-enriched and, additionally, left distributive. This construction  is entirely based on an idea presented in \cite[Sec. 3.1]{brengos2015:jlamp}. However, in \emph{loc. cit.} it is carried out in the context of a small $\omega\mathsf{Cpo}^\vee$-enriched category $\mathsf{K}$. Here, $\omega\mathsf{Cpo}^\vee$-enrichment is replaced with $\mathsf{DCpo}^\vee$-enrichment. Nevertheless, all properties of this category are proved in the same, straightforward, manner. 

Consider the category $\widetilde{\mathsf{K}}=[\mathsf{K},\mathsf{DCpo}^\vee]$ of lax functors $\mathsf{K}\to \mathsf{DCpo}^\vee$ and oplax transformations. For $\pi,\pi'\in [\mathsf{K},\mathsf{DCpo}^\vee]$ and two oplax transformations $f,g$ from $\pi$ to $\pi'$ define:
$$
f\leq g \iff f_X(x) \leq g_X(x) \text{ for any }X\in \mathsf{K} \text{ and }x\in \pi X.
$$
Since the order on hom-sets of $\widetilde{\mathsf{K}}$ is imposed by the component-wise order from $\mathsf{K}$ binary joins are given by $(f\vee g)_X:\pi X\to \pi' X; x\mapsto f_X(x) \vee g_X(x)$. It is easy to check that $f\vee g$ is an oplax transformation.
Similarly, the suprema of directed families of oplax transformations are component-wise suprema. A straightforward verification proves that such suprema are oplax transformations. Since $\mathsf{K}$ is $\mathsf{DCpo}$-enriched, suprema of directed families are preserved by the composition in $\widetilde{\mathsf{K}}$. Hence, $\widetilde{\mathsf{K}}$ is $\mathsf{DCpo}^\vee$-enriched. Moreover, we have the following theorem.
\begin{theorem}\label{theorem:left_dist_hat_K} 
The category $\widetilde{\mathsf{K}}$ is right distributive $\mathsf{DCpo}^\vee$-enriched. As a consequence, $\widetilde{\mathsf{K}}^{op}$ is left distributive $\mathsf{DCpo}^\vee$-enriched.
\end{theorem}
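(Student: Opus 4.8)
The plan is to reduce the whole statement to a pointwise computation in the base category $\mathsf{DCpo}^\vee$, exploiting the fact — already established in the paragraphs above — that both composition and binary joins in $\widetilde{\mathsf{K}}=[\mathsf{K},\mathsf{DCpo}^\vee]$ are computed component-wise. The conceptual point I want to surface is the asymmetry of $\mathsf{DCpo}^\vee$: its morphisms are required to preserve directed suprema but \emph{not} binary joins, and this is exactly what makes the category right- but not left-distributive.

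First I would record the relevant structure. For oplax transformations $h:\pi\to\pi'$ and $f,g:\pi'\to\pi''$ in $\widetilde{\mathsf{K}}$, vertical composition is given component-wise by $(f\circ h)_X = f_X\circ h_X$, and the binary join by $(f\vee g)_X(y)=f_X(y)\vee g_X(y)$. The core step is then to verify the identity $(f\vee g)\circ h = f\circ h\vee g\circ h$, which establishes right distributivity. Evaluating both sides at an object $X\in\mathsf{K}$ and a point $x\in\pi X$, the left-hand side yields $(f\vee g)_X(h_X(x)) = f_X(h_X(x))\vee g_X(h_X(x))$, while the right-hand side yields $f_X(h_X(x))\vee g_X(h_X(x))$ as well; the two agree because precomposition with the \emph{fixed} map $h_X$ commutes with the pointwise join in $\mathsf{DCpo}^\vee$. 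Note that no join-preservation property of $h_X$ is required here — only that $h_X$ is applied to the argument before the join is taken. I would remark in passing that the dual identity $h\circ(f\vee g)=h\circ f\vee h\circ g$ would force $h_X$ to preserve binary joins, which Scott-continuity does not guarantee; this explains why only the right-hand inequality becomes an equality.

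Finally, the ``as a consequence'' clause should follow immediately: passing to $\widetilde{\mathsf{K}}^{op}$ reverses the order of composition, so the right-distributivity identity in $\widetilde{\mathsf{K}}$ becomes precisely the left-distributivity identity in $\widetilde{\mathsf{K}}^{op}$. Since the hom-posets of $\widetilde{\mathsf{K}}^{op}$ coincide verbatim with those of $\widetilde{\mathsf{K}}$ (only sources and targets are swapped), the $\mathsf{DCpo}^\vee$-enrichment transfers without change. I do not expect a genuine obstacle in this argument; the only things worth double-checking are that vertical composition in $\widetilde{\mathsf{K}}$ really lands in $\mathsf{DCpo}^\vee$ and respects the enrichment, both of which are direct consequences of the component-wise descriptions already verified, and that the join $f\vee g$ is again an oplax transformation, which was observed earlier.
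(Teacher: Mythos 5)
Your proposal is correct and follows essentially the same route as the paper, whose proof is a one-line appeal to the fact that the order and the composition of oplax transformations are defined pointwise; you simply spell out the component-wise computation $((f\vee g)\circ h)_X(x)=f_X(h_X(x))\vee g_X(h_X(x))=(f\circ h\vee g\circ h)_X(x)$ that this remark is gesturing at. Your added observation that left distributivity would require each $h_X$ to preserve binary joins, which Scott-continuity does not guarantee, is a correct and useful clarification of why only the right-hand identity holds.
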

\begin{proof}
Right distributivity of $\widetilde{\mathsf{K}}$ follows from the fact that the order on and the composition of oplax transformations is
defined pointwise.
\end{proof}

For an object $X\in \mathsf{K}$ and a morphism $f:X\to X'\in \mathsf{K}$ define
$\widehat{X} = \mathsf{K}(X,-)$ and $\widehat{f}=\mathsf{K}(f,-)$. Explicitly, the functor $\widehat{X}:\mathsf{K}~\to~\mathsf{DCpo}^\vee$ maps any $Y\in \mathsf{K}$ to $\mathsf{K}(X,Y)$ and any $g:Y\to Y'$ is mapped onto $\widehat{X}(g):K(X,Y)\to K(X,Y'); h\mapsto g\circ h$. Moreover, $\widehat{f}:\widehat{X}'\to \widehat{X}$ is the natural transformation whose $Y$-component is:
$$
\widehat{f}_Y:\mathsf{K}(X',Y)\to \mathsf{K}(X,Y); h\mapsto h\circ f. 
$$

Put $\widehat{\mathsf{K}}$ to be the full subcategory of $\widetilde{\mathsf{K}}^{op}$ consisting only of objects of the form $\widehat{X}$ for some $X\in \mathsf{K}$. The assignment $\widehat{(-)}$ is a locally monotonic functor $\widehat{(-)}:\mathsf{K}\to \widehat{\mathsf{K}}$. For any two objects $X,Y\in \mathsf{K}$ define $\Theta:~\widehat{\mathsf{K}}(\widehat{X},\widehat{Y})\to \mathsf{K}(X,Y)$ by
$
\Theta(f) = f_Y(id_Y)$  for $f\in \widehat{\mathsf{K}}(\widehat{X},\widehat{Y})=\widetilde{\mathsf{K}}(\widehat{Y},\widehat{X})$.

\begin{theorem}
The assignment $\Theta$ is the left adjoint to the hom-object restriction $\widehat{(-)}:\mathsf{K}(X,Y)\to \widehat{\mathsf{K}}(\widehat{X},\widehat{Y})$ of $\widehat{(-)}:\mathsf{K}\to \widehat{\mathsf{K}}$. Hence, the functor $\widehat{(-)}:\mathsf{K}\to \widehat{\mathsf{K}}$ is a locally reflective embedding.
\end{theorem}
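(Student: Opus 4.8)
The plan is to verify the defining property of a poset adjunction directly. Both $\mathsf{K}(X,Y)$ and $\widehat{\mathsf{K}}(\widehat{X},\widehat{Y})=\widetilde{\mathsf{K}}(\widehat{Y},\widehat{X})$ are posets, the latter carrying the pointwise order inherited from $\widetilde{\mathsf{K}}$ (passage to the opposite reverses arrows but, as already used in the right-/left-distributivity remark, preserves the order). So establishing that $\Theta$ is left adjoint to $\widehat{(-)}_{X,Y}$ amounts to proving, for every $f\in\widehat{\mathsf{K}}(\widehat{X},\widehat{Y})$ and every $g\in\mathsf{K}(X,Y)$, the equivalence
\[
\Theta(f)\leq g \iff f\leq\widehat{g}.
\]
Here $f$ has components $f_Z\colon\mathsf{K}(Y,Z)\to\mathsf{K}(X,Z)$, the transformation $\widehat{g}$ satisfies $\widehat{g}_Z(h)=h\circ g$, and $\Theta(f)=f_Y(id_Y)$. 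I would first note that $\Theta$ and $\widehat{(-)}_{X,Y}$ are monotone, so both sides of the equivalence compare genuine functors of posets.

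For the easy ($\Leftarrow$) direction, assuming $f\leq\widehat{g}$, i.e. $f_Z(h)\leq h\circ g$ for all $Z$ and $h$, I instantiate at $Z=Y$ and $h=id_Y$ to obtain $\Theta(f)=f_Y(id_Y)\leq id_Y\circ g=g$.

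The heart of the argument is the ($\Rightarrow$) direction, and this is where the Yoneda-style use of oplax naturality enters; keeping this step straight is the main obstacle. Assuming $\Theta(f)=f_Y(id_Y)\leq g$, I want $f_Z(h)\leq h\circ g=\widehat{g}_Z(h)$ for an arbitrary $h\in\mathsf{K}(Y,Z)$. Oplax naturality of $f$ applied to the morphism $h\colon Y\to Z$ reads $f_Z\circ\widehat{Y}(h)\leq\widehat{X}(h)\circ f_Y$; evaluating at $id_Y\in\mathsf{K}(Y,Y)$ and using $\widehat{Y}(h)(id_Y)=h\circ id_Y=h$ gives
\[
f_Z(h)\leq h\circ f_Y(id_Y)=h\circ\Theta(f)\leq h\circ g,
\]
where the last step is monotonicity of composition in the order-enriched $\mathsf{K}$. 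This yields $f\leq\widehat{g}$ and completes the adjunction.

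Finally, to upgrade ``$\Theta$ is a local left adjoint'' to ``$\widehat{(-)}\colon\mathsf{K}\to\widehat{\mathsf{K}}$ is a locally reflective embedding'', I would record that $\Theta(\widehat{g})=\widehat{g}_Y(id_Y)=id_Y\circ g=g$, so that $\Theta\circ\widehat{(-)}_{X,Y}=\mathrm{id}_{\mathsf{K}(X,Y)}$. Hence each hom-restriction is injective (split monic), whence $\widehat{(-)}$ is faithful; combined with the local left adjoints $\Theta$ this is exactly the definition of locally reflective, and injectivity on objects follows from the Yoneda lemma. The only genuine care needed throughout is the bookkeeping of the two variance switches — the contravariance of $\widehat{(-)}$ on morphisms and the passage to $\widetilde{\mathsf{K}}^{op}$ — together with the fact that this ``op'' preserves rather than reverses the pointwise order.
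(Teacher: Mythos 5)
Your proof is correct and follows exactly the route of the paper, which simply asserts the Galois-connection equivalence $\Theta(f)=f_Y(id_Y)\leq g\iff f\leq\widehat{g}$ as ``straightforward''; you supply the missing verification, with the key step (oplax naturality of $f$ at $h\colon Y\to Z$ evaluated at $id_Y$) carried out correctly and the variance/order bookkeeping handled properly.
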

\begin{proof}
It is straightforward to show that for any $g\in \mathsf{K}(X,Y)$ and $f \in \widehat{\mathsf{K}} (\widehat{X},\widehat{Y})$ we have:
$\Theta_{X,Y}(f) =  f_Y(id_Y) \leq g \iff \phi \leq \widehat{g}$. This ends the proof.
\end{proof}

\begin{theorem}\label{theorem:coproducts_in_hat_K}
The category $\widehat{\mathsf{K}}$ admits arbitrary coproducts of families indexed by objects of $\mathbb{D}$ with cotupling preserving the order.
\end{theorem}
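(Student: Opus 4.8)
The plan is to identify the coproduct in $\widehat{\mathsf{K}}$ of a family $\{\widehat{X_i}\}_i$ (indexed by the objects of $\mathbb{D}$, with each $X_i\in\mathsf{K}$) as the representable $\widehat{\sum_i X_i}$, where $\sum_i X_i$ is the coproduct of $\{X_i\}_i$ in $\mathsf{K}$ — which exists by hypothesis — equipped with the coprojections $\widehat{\mathsf{in}_i}$. Since $\widehat{\mathsf{K}}$ is by definition a \emph{full} subcategory of $\widetilde{\mathsf{K}}^{op}$, a coproduct there is the same as a coproduct in $\widetilde{\mathsf{K}}^{op}$ whose apex and all factors are representable; dually, this is a \emph{product} in $\widetilde{\mathsf{K}}$ of the representables, with the apex itself representable. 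So I would reduce the statement to two facts about $\widetilde{\mathsf{K}}=[\mathsf{K},\mathsf{DCpo}^\vee]$: that it has $\mathbb{D}$-indexed products computed pointwise in the $2$-categorical sense, and that the representable $\widehat{\sum_i X_i}$ is such a product.

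First I would verify that the pointwise product $P$, given by $P(Y)=\prod_i \widehat{X_i}(Y)=\prod_i \mathsf{K}(X_i,Y)$ using the small products of $\mathsf{DCpo}^\vee$ (which exist, as a pointwise-ordered product of dcpos admitting binary joins is again one), is a $2$-product in $\widetilde{\mathsf{K}}$. Because the order on hom-posets and the composition of oplax transformations in $\widetilde{\mathsf{K}}$ are both defined componentwise, the usual argument goes through with every inequality checked coordinate by coordinate: $P$ is a lax functor (laxness of each $\widehat{X_i}$ is inherited pointwise), the projections $p_i\colon P\Rightarrow\widehat{X_i}$ are strict and hence oplax, and for any lax functor $\sigma$ the tupling map yields an order isomorphism $\widetilde{\mathsf{K}}(\sigma,P)\cong\prod_i\widetilde{\mathsf{K}}(\sigma,\widehat{X_i})$, since an $g\colon\sigma\Rightarrow P$ is oplax iff each $p_i\circ g$ is, the defining inequality holding componentwise. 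This is the step where the oplax (rather than strict) nature of the transformations must be handled with care, and is the main place any subtlety could hide; it causes no real trouble precisely because order and composition in $\widetilde{\mathsf{K}}$ are pointwise.

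Next I would show $\widehat{\sum_i X_i}\cong P$ in $\widetilde{\mathsf{K}}$. The universal property of $\sum_i X_i$ in $\mathsf{K}$ gives, for each $Y$, a bijection $\Phi_Y\colon \mathsf{K}(\sum_i X_i,Y)\to\prod_i\mathsf{K}(X_i,Y)$, $g\mapsto(g\circ\mathsf{in}_i)_i$, natural in $Y$. This $\Phi_Y$ is monotone, and its inverse is the cotupling $[\,\cdot\,]$, which is monotone by hypothesis; a bijection that is monotone in both directions is an order isomorphism, and an order isomorphism of dcpos automatically preserves directed suprema and binary joins, so $\Phi_Y$ is an isomorphism in $\mathsf{DCpo}^\vee$. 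Naturality in $Y$ then makes $\Phi$ an isomorphism $\widehat{\sum_i X_i}\cong P$ of (strict, hence lax) functors, under which $\mathrm{pr}_i\circ\Phi=\widehat{\mathsf{in}_i}$.

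Finally I would assemble these observations. By Steps~2 and~3, $\widehat{\sum_i X_i}$ is a $2$-product of $\{\widehat{X_i}\}_i$ in $\widetilde{\mathsf{K}}$ with projections $\widehat{\mathsf{in}_i}$, hence a $2$-coproduct of $\{\widehat{X_i}\}_i$ in $\widetilde{\mathsf{K}}^{op}$ with coprojections $\widehat{\mathsf{in}_i}$; since every object involved is representable and $\widehat{\mathsf{K}}$ is full in $\widetilde{\mathsf{K}}^{op}$, it is a coproduct in $\widehat{\mathsf{K}}$. Moreover, the order isomorphism $\widehat{\mathsf{K}}(\widehat{\sum_i X_i},\widehat{W})=\widetilde{\mathsf{K}}(\widehat{W},P)\cong\prod_i\widetilde{\mathsf{K}}(\widehat{W},\widehat{X_i})=\prod_i\widehat{\mathsf{K}}(\widehat{X_i},\widehat{W})$ supplied by the $2$-product is precisely the assertion that cotupling in $\widehat{\mathsf{K}}$ is an order isomorphism, in particular order-preserving, which completes the proof.
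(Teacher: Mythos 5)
Your proposal is correct and follows essentially the same route as the paper: both take the coproduct to be $\widehat{\sum_D X_D}$ with coprojections $\widehat{\mathsf{in}_D}$, and both reduce the existence and order-preservation of cotupling in $\widehat{\mathsf{K}}$ to the assumed monotone cotupling in $\mathsf{K}$. The only difference is organizational: you package the verification as ``$\widehat{(-)}$ turns the coproduct into the pointwise $2$-product in $\widetilde{\mathsf{K}}$,'' whereas the paper checks the universal property directly against representable test objects; the underlying computation (the inverse of $g\mapsto(g\circ\mathsf{in}_D)_D$ is the cotupling) is the same.
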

\begin{proof}
The coproducts in $\widehat{\mathsf{K}}$ come from $\mathsf{K}$. Indeed, let $\{\widehat{X}_D\}_{D\in \mathbb{D}}$ be a family of objects in $\widehat{\mathsf{K}}$. We will now show that its coproduct $\sum_D \widehat{X}_D$ in $\widehat{\mathsf{K}}$ is given by $\widehat{\sum_D X_D}$. At first, observe that for $\mathsf{in}_{D}:X_D\to \sum_D X_D$ the transformation $\widehat{\mathsf{in}_D}$ is an oplax transformation from $\widehat{\sum_D X_D}$ to $\widehat{X}_D$. Secondly, consider $\widehat{X}$ with a family morphisms $\psi_D\in \widehat{\mathsf{K}}(\widehat{X}_D, \widehat{X})$. By the definition of $\mathsf{\widehat{K}}$ these morphisms are oplax transformations $\psi_D:\widehat{X}\to \widehat{X}_D$. The $Y$-components $(\psi_D)_Y:\mathsf{K}(X,Y)\to\mathsf{K}(X_D,Y)$ of $\psi_D$ satisfy for any $h:X\to Y$ and $g:Y\to Y'$
$$
(\psi_D)_Y(g\circ h) \leq g\circ (\psi_D)_Y(h).
$$
Consider a family of maps $\psi_Y:\mathsf{K}(X,Y)\to \mathsf{K}(\sum_D X_D,Y)$ indexed by $Y\in \mathsf{K}$ which is defined for any $h:X\to Y$ by:
$$\psi_Y(h) = [(\psi_{D})_Y(h)].$$  
The family $\psi = \{\psi_Y\}_{Y\in \mathsf{K}}$ is an oplax transformation from $\widehat{X}$ to $\widehat{\sum_D X_D}$ since cotupling in $\mathsf{K}$ preserves the order. Moreover, it is a unique oplax transformation making $\psi_D=\widehat{\mathsf{in}_D}\circ \psi$. This precisely means that $\widehat{\sum_D X_D}$ is the product in the full subcategory of $\widetilde{\mathsf{K}}=[\mathsf{K},\mathsf{DCpo}^\vee]$ whose objects are functors of the form $\widehat{X}$. Hence, it is a coproduct in $\widehat{\mathsf{K}}$. The cotupling in $\widehat{\mathsf{K}}$ preserves the order as the order in $\widehat{\mathsf{K}}$ is inherited from $\mathsf{K}$.
\end{proof}

We are now ready to summarize this paragraph. By Theorem \ref{theorem:left_dist_hat_K}, \ref{theorem:coproducts_in_hat_K} and \ref{theorem:dcpo_left_adjoint} the functor $[!,\widehat{\mathsf{K}}]:[1,\widehat{\mathsf{K}}]\to [\mathbb{D},\widehat{\mathsf{K}}]$ admits the left $2$-adjoint $\Sigma_!$:
\begin{align}
\xymatrix{
 [\mathbb{D},\widehat{\mathsf{K}}] \ar@/^1pc/[r]^{\Sigma_!} \ar@{}[r]|\perp & [1,\widehat{\mathsf{K}}]\ar@/^1pc/[l]^{[!,\widehat{\mathsf{K}}]}.
} \label{adjunction_weak_bis_lax}
\end{align}
This observation allows us to introduce the notion of weak bisimulation for lax functors in $[\mathbb{D},\mathsf{K}]^J$.

\subsubsection{Weak behavioural morphisms and weak bisimulation} \label{subsubsection:weak_bisim:lax} We will now define weak bisimulation on lax functors in $[\mathbb{D},\mathsf{K}]^J$. As in Subsection \ref{subsection:weak_bis_coal}, weak bisimulation will be defined as a kernel pair of a weak behavioural morphism. Hence, we start with the definition of the latter. Let  $\pi\in [\mathbb{D},\mathsf{K}]^J$ be a lax functor. Put $\widehat{\pi}=\widehat{(-)}\circ \pi:\mathbb{D}\to \widehat{\mathsf{K}}$ and $X=\sum_{D\in \mathbb{D}} \pi(D)$, where the coproduct is calculated in $\mathsf{K}$. Note that weak behavioural morphisms and weak bisimulation on $\pi$ considered below are defined on the carrier of $\Sigma_!(\pi)$, i.e. on the object $X$.
\begin{definition}
 We say that an arrow $f:X\to Y$ in $J$ is \emph{weak behavioural morphism} on $\pi$  provided that there is an endomorphism  $\beta:Y\to Y \in \mathsf{K}$ such that:
\begin{align}
\Theta(\widehat{f}\circ \Sigma_!(\widehat{\pi})) = \Theta(\widehat{\beta}\circ \widehat{f}). \label{equation:wb_lax}
\end{align}
A relation $R\rightrightarrows X$ is called \emph{weak bisimulation} on $\pi$ provided that it is a kernel pair of a weak behavioural morphism on $\pi$.
\end{definition}

The following results will lead us to a  simplification of the equation (\ref{equation:wb_lax}). 
\begin{lemma}\label{lemma:identity_theta}
$$
\Theta(\widehat{f}\circ \Sigma_!(\widehat{\pi})) = \mu x. (f\vee \hspace{-0.5cm} \bigvee_{d:D_1\to D_2\in \mathbb{D}}\hspace{-0.5cm}x\circ \overline{\pi(d)}).
$$
\end{lemma}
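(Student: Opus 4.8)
The plan is to transport the least fixed point defining $\Sigma_!(\widehat\pi)$ in $\widehat{\mathsf{K}}$ through the reflection $\Theta$ onto the least fixed point on the right-hand side, which lives in $\mathsf{K}(X,Y)$. First I would record the two fixed-point presentations. Since $\widehat{(-)}$ preserves $\mathbb{D}$-indexed coproducts, and hence cotupling (Theorem \ref{theorem:coproducts_in_hat_K}), one has $\sum_{D}\widehat{\pi D}=\widehat{X}$ and $\overline{\widehat\pi(d)}=\widehat{\overline{\pi(d)}}$ for every morphism $d$ of $\mathbb{D}$. Applying the formula (\ref{id:lax_functor_saturation}) to $\widehat\pi$ (legitimate, as $\widehat{\mathsf{K}}$ is left distributive $\mathsf{DCpo}^\vee$-enriched with order-preserving $\mathbb{D}$-cotupling by Theorems \ref{theorem:left_dist_hat_K} and \ref{theorem:coproducts_in_hat_K}) gives $\Sigma_!(\widehat\pi)=\bigvee_{n}\widehat G^{\,n}(id_{\widehat X})$, the least fixed point of $\widehat G(\xi)=id_{\widehat X}\vee\bigvee_{d}\xi\circ\overline{\widehat\pi(d)}$. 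On the other side, $G(x)=f\vee\bigvee_{d}x\circ\overline{\pi(d)}$ is a well-defined Scott-continuous self-map of $\mathsf{K}(X,Y)$: the inner supremum exists as a directed join of finite joins, and $G(x)\geq f$, so its least fixed point is $\bigvee_{n}G^{n}(f)$.

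The heart of the argument is the map $\Phi:=\Theta(\widehat f\circ-):\widehat{\mathsf{K}}(\widehat X,\widehat X)\to\mathsf{K}(X,Y)$. Unwinding $\widehat{\mathsf{K}}\subseteq\widetilde{\mathsf{K}}^{op}$ together with the formula $\Theta(g)=g_Y(id_Y)$, I would show $\Phi(\xi)=\xi_Y(f)$, where $\xi_Y:\mathsf{K}(X,Y)\to\mathsf{K}(X,Y)$ is the $Y$-component of the natural transformation underlying $\xi$. Because the order and all suprema in $\widehat{\mathsf{K}}$ are computed pointwise, $\Phi$ preserves binary joins and directed suprema, and clearly $\Phi(id_{\widehat X})=f$.

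Next I would establish the intertwining $\Phi\circ\widehat G=G\circ\Phi$. Using join-preservation of $\Phi$ and $\Phi(id_{\widehat X})=f$, this reduces to the single identity $\Phi(\xi\circ\overline{\widehat\pi(d)})=\Phi(\xi)\circ\overline{\pi(d)}$ for each $d$. This is the step needing the most care: since composition in $\widehat{\mathsf{K}}$ is the reverse of composition in $\widetilde{\mathsf{K}}$, the composite $\xi\circ\overline{\widehat\pi(d)}=\xi\circ\widehat{\overline{\pi(d)}}$ corresponds in $\widetilde{\mathsf{K}}$ to $\widehat{\overline{\pi(d)}}\circ\xi$, whose $Z$-component sends $k\mapsto\xi_Z(k)\circ\overline{\pi(d)}$; evaluating at $Z=Y$ and $k=f$ yields exactly $\Phi(\xi)\circ\overline{\pi(d)}$.

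Finally, continuity of $\Phi$ and induction give $\Phi(\widehat G^{\,n}(id_{\widehat X}))=G^{n}(f)$, whence $\Theta(\widehat f\circ\Sigma_!(\widehat\pi))=\Phi\bigl(\bigvee_n\widehat G^{\,n}(id_{\widehat X})\bigr)=\bigvee_n G^{n}(f)=\mu x.(f\vee\bigvee_{d}x\circ\overline{\pi(d)})$. I expect the main obstacle to be the $(-)^{op}$ bookkeeping in $\widehat{\mathsf{K}}$ needed for the intertwining identity, together with a careful justification that both least fixed points are reached by $\omega$-iteration started from $f$ (respectively $id_{\widehat X}$) rather than from a bottom element, which $\mathsf{DCpo}^\vee$-enrichment need not provide.
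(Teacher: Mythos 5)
Your proposal is correct and follows essentially the same route as the paper: both arguments present $\Sigma_!(\widehat{\pi})$ as the supremum of the iterates of the saturation operator, use the fact that $\Theta$ (equivalently your $\Phi(\xi)=\Theta(\widehat{f}\circ\xi)=\xi_Y(f)$) preserves the relevant suprema because it is a left adjoint and the order on $\widehat{\mathsf{K}}$ is pointwise, and reduce everything to the intertwining identity $\Theta(\widehat{f}\circ\xi\circ\widehat{\overline{\pi(d)}})=\Theta(\widehat{f}\circ\xi)\circ\overline{\pi(d)}$ (the paper's step $(c)$, via $\overline{\widehat{\pi}(d)}=\widehat{\overline{\pi(d)}}$), concluding by induction on the iterates. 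Your packaging of the induction as $\Phi\circ\widehat{G}=G\circ\Phi$ is only a cosmetic reorganization of the paper's computation $\Theta(\widehat{f}\circ(\bigvee_d\overline{\widehat{\pi}(d)})^l)=F^l(f)$, and your closing caveat about reaching the least fixed point by $\omega$-iteration from $f$ rather than from a bottom element is handled identically (and equally briefly) in the paper, since $f\le F(f)$ makes the iterates an ascending chain below any fixed point.
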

\begin{proof}
At first observe that the assignment $F:\mathsf{K}(X,Y)\to \mathsf{K}(X,Y)$ given by $F(x)=f\vee  \bigvee_{d:D_1\to D_2\in \mathbb{D}}x\circ \overline{\pi(d)}$ for $x\in \mathsf{K}(X,Y)$ is well defined. Indeed, the supremum $\bigvee_{d:D_1\to D_2\in \mathbb{D}}x\circ \overline{\pi(d)}$ exists since it can be rewritten as: $$\bigvee\{x\circ \overline{\pi(d_1)}\vee\ldots \vee x\circ \overline{\pi(d_k)}\mid k\in \{1,2,\ldots\}, d_i\in \mathbb{D}\}.$$
It is easy to see that the least fixed point of $F$ is $\bigvee_{n\in \mathbb{N}} F^n(f)$. Let $$\widehat{\Pi} = \{\overline{\widehat{\pi}(d_1)} \vee \ldots \vee \overline{\widehat{\pi}(d_k)}\mid k\in \{1,2,\ldots\}, d_i \text{ is a morphism in } \mathbb{D}\}.$$ Since $\bigvee\widehat{\Pi} = \bigvee_{d\in \mathbb{D}}\overline{ \widehat{\pi}(d)}$ we have:
\begin{align*}
&\Theta(\widehat{f}\circ \Sigma_!(\widehat{\pi})) = \Theta( \widehat{f} \circ \bigvee_{l\in \mathbb{N}}(\bigvee\widehat{\Pi})^l)=\Theta ( \bigvee_{l\in \mathbb{N} }\widehat{f} \circ (\bigvee_{d\in \mathbb{D}}\overline{ \widehat{\pi}(d)} )^l) \stackrel{(i)}{=}\\
&\bigvee_{l\in \mathbb{N}} \Theta(\widehat{f} \circ (\bigvee_{d\in \mathbb{D}}\overline{ \widehat{\pi}(d)} )^l)    \stackrel{(ii)}{=} 
\bigvee_{l\in \mathbb{N}} F^l(f) = \mu x. (f\vee \hspace{-0.5cm}\bigvee_{d:D_1\to D_2\in \mathbb{D}}\hspace{-0.5cm}x\circ \overline{\pi(d)}).
\end{align*}
The equation  $(i)$ follows by the fact that $\Theta$ preserves arbitrary suprema (as it is a left adjoint). The identity $(ii)$ follow by induction. For $l=0$ it is vacuously true. Assume that $\Theta(\widehat{f} \circ (\bigvee_{d\in \mathbb{D}}\overline{ \widehat{\pi}(d)} )^l)  = F^l(f)$ for a natural number $l$ and consider:
\begin{align*}
&\Theta(\widehat{f} \circ (\bigvee_{d\in \mathbb{D}}\overline{ \widehat{\pi}(d)} )^{l+1}) = \Theta(\widehat{f} \circ (\bigvee_{d\in \mathbb{D}}\overline{ \widehat{\pi}(d)} )^{l}\circ (\bigvee_{d\in \mathbb{D}}\overline{ \widehat{\pi}(d)} )) \stackrel{(a)}{=} \\
&\Theta(\bigvee_{d'\in \mathbb{D}} \widehat{f} \circ (\bigvee_{d\in \mathbb{D}} \overline{ \widehat{\pi}(d)} ))^{l}\circ \overline{ \widehat{\pi}(d')} )) \stackrel{(b)}{=}\bigvee_{d'\in \mathbb{D}} \Theta( \widehat{f} \circ (\bigvee_{d\in \mathbb{D}} \overline{ \widehat{\pi}(d)} ))^{l}\circ \overline{ \widehat{\pi}(d')} ))  \stackrel{(c)}{=}\\
& =\bigvee_{d'\in \mathbb{D}} F^l(f) \circ \overline{\pi(d')}\stackrel{(d)}{=}f\vee \bigvee_{d'\in \mathbb{D}} F^l(f) \circ \overline{\pi(d')}=F^{l+1}(f).
\end{align*}
The identity $(a)$ follows by left distributivity and $\mathsf{DCpo}^\vee$-enrichment of $\widehat{\mathsf{K}}$. The equation $(b)$ is a consequence of $\Theta$ being a left adjoint. The equality $(c)$ follows by $\overline{\widehat{\pi}(d)} = \widehat{\overline{\pi(d)}}$. Finally, the identity $(d)$ is a consequence of $F^l(f)\leq F^{l+1}(f)$, $F^0(f) = f$ and $id_X\leq \overline{\pi(id_D)}$ for any $D\in \mathbb{D}$:
$$
f\leq F^l(f) \leq F^l(f)\circ \overline{\pi(id_D)} \leq \bigvee_{d\in \mathbb{D}} F^l(f)\circ \overline{\pi(d)}.
$$
\end{proof}
Finally, since $\Theta(\widehat{\beta}\circ \widehat{f})=\beta\circ f$, by Lemma \ref{lemma:identity_theta} the equation (\ref{equation:wb_lax}) becomes:
\begin{align}
\mu x. (f \vee \hspace{-0.5cm} \bigvee_{d:D_1\to D_2\in \mathbb{D}}\hspace{-0.5cm}x\circ \overline{\pi(d)}) = \beta\circ f.\label{equation:lfps}
\end{align}
If we additionally assume $\mathsf{K}$ satisfies left distributivity then the equation (\ref{equation:lfps}) can be simplified even further. In this case we have the following. 
\begin{theorem} \label{theorem:weak_bis_left_distributive} If $\mathsf{K}$ is left distributive then (\ref{equation:lfps}) becomes:
\begin{align}
f\circ \mu x. (id_X \vee \hspace{-0.5cm} \bigvee_{d:D_1\to D_2\in \mathbb{D}}\hspace{-0.5cm}x\circ \overline{\pi(d)}) = \beta \circ f.\label{equation:left_dist_ps}
\end{align}
\end{theorem}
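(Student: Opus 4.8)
The plan is to reduce the claimed equivalence of (\ref{equation:lfps}) and (\ref{equation:left_dist_ps}) to a single identity between the two least fixed points appearing on their left-hand sides. Writing $G(x) = f\vee\bigvee_{d}x\circ\overline{\pi(d)}$ and $H(x) = id_X\vee\bigvee_{d}x\circ\overline{\pi(d)}$, and abbreviating $P = \bigvee_{d\in\mathbb{D}}\overline{\pi(d)}$ so that $\bigvee_{d}x\circ\overline{\pi(d)} = x\circ P$ (by Scott-continuity of composition, exactly as in Lemma \ref{lemma:identity_theta}), it suffices to prove $\mu G = f\circ\mu H$; since the two equations share the right-hand side $\beta\circ f$, this identity makes them equivalent. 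Recall from (\ref{id:lax_functor_saturation}) that $\mu H = \Sigma_!(\pi) = \bigvee_{l}P^{l}$, where the chain is ascending because $id_X\leq\overline{\pi(id_D)}\leq P$.

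I would first establish $\mu G\leq f\circ\mu H$ by showing that $f\circ\mu H$ is a fixed point of $G$. Using the fixed-point identity $\mu H = id_X\vee \mu H\circ P$, associativity, and crucially the left distributivity of $\mathsf{K}$, I compute $G(f\circ\mu H) = f\vee f\circ(\mu H\circ P) = f\circ id_X\vee f\circ(\mu H\circ P) = f\circ(id_X\vee\mu H\circ P) = f\circ\mu H$. Since $\mu G$ is the least fixed point of $G$, this yields $\mu G\leq f\circ\mu H$.

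For the reverse inequality I would argue that $f\circ P^{l}\leq\mu G$ for every $l\in\mathbb{N}$ by induction. The base case is $f\circ P^{0} = f\leq\mu G$, which holds because $\mu G = f\vee\mu G\circ P\geq f$. For the step, the same fixed-point identity gives $\mu G\circ P\leq\mu G$, so if $f\circ P^{l}\leq\mu G$ then $f\circ P^{l+1} = (f\circ P^{l})\circ P\leq\mu G\circ P\leq\mu G$. Taking the supremum over $l$ and using Scott-continuity of $f\circ(-)$ together with $\mu H = \bigvee_{l}P^{l}$, I obtain $f\circ\mu H = \bigvee_{l}f\circ P^{l}\leq\mu G$. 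Combining the two inequalities proves $\mu G = f\circ\mu H$.

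The point worth emphasizing is that only left distributivity is available, so the tempting route of expanding $\mu G = \bigvee_{n}G^{n}(f)$ term-by-term fails: unfolding $G$ produces factors of the form $(f\vee f\circ P\vee\cdots)\circ P$ whose expansion would require right distributivity, which is not assumed. The fixed-point argument above sidesteps this entirely: establishing that $f\circ\mu H$ is a fixed point needs only left distributivity (to pull $f$ out on the left), while the reverse inequality needs only monotonicity of composition and the defining fixed-point inequalities $f\leq\mu G$ and $\mu G\circ P\leq\mu G$. This asymmetry is the main subtlety to handle correctly.
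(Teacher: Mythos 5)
Your argument is correct, but it follows a genuinely different route from the paper's. The paper proves the identity $\mu x.(f\vee\bigvee_d x\circ\overline{\pi(d)}) = f\circ\mu x.(id_X\vee\bigvee_d x\circ\overline{\pi(d)})$ by synchronizing the Kleene approximants: writing $F(x)=f\vee\bigvee_d x\circ\overline{\pi(d)}$ and $G(x)=id_X\vee\bigvee_d x\circ\overline{\pi(d)}$, it shows $F^n(f)=f\circ G^n(id_X)$ for every $n$ by induction and then takes suprema of the two ascending chains, using left distributivity once per induction step and Scott-continuity at the end. You instead prove the same identity by a two-sided fixed-point argument: $f\circ\mu G$ is a fixed point of $F$ (one application of left distributivity plus associativity), giving $\mu F\leq f\circ\mu G$, and the reverse inequality comes from $f\circ P^l\leq\mu F$ by induction on $l$ together with $\mu G=\bigvee_l P^l$ from (\ref{id:lax_functor_saturation}). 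Both arguments are sound and invoke left distributivity in essentially one place; yours is arguably cleaner in that the second inequality needs only monotonicity and the defining fixed-point (in)equalities, at the cost of relying on the explicit description $\Sigma_!(\pi)=\bigvee_l(\bigvee\Pi)^l$, which the paper's proof does not need.

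One correction to your closing commentary: the term-by-term expansion of the approximants does \emph{not} fail under left distributivity alone --- it is exactly what the paper does. In the induction step the factor $F^n(f)\circ P$ is handled not by distributing $P$ over the join on the right (which would indeed need right distributivity), but by first invoking the induction hypothesis $F^n(f)=f\circ G^n(id_X)$, so that $F^n(f)\circ P=f\circ(G^n(id_X)\circ P)$ by associativity, after which a single left-distributive step factors $f$ out of $f\vee f\circ(G^n(id_X)\circ P)$. Also, your identification $\bigvee_d x\circ\overline{\pi(d)}=x\circ P$ is attributed to Scott-continuity alone; since the family $\{\overline{\pi(d)}\}_d$ need not be directed, this step also uses left distributivity (on finite joins) before passing to the directed supremum --- harmless here since left distributivity is the hypothesis of the theorem, but worth stating precisely.
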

\begin{proof}
Consider assignments $F:\mathsf{K}(X,Y)\to \mathsf{K}(X,Y);x\mapsto f\vee \bigvee_{d:D_1\to D_2\in \mathbb{D}} x\circ \overline{\pi(d)})$ and $G:\mathsf{K}(X,X)\to \mathsf{K}(X,X);x\mapsto id_X \vee  \bigvee_{d:D_1\to D_2\in \mathbb{D}}x\circ \overline{\pi(d)}$. The left hand side of (\ref{equation:lfps}) is given by $\bigvee_{n\in \mathbb{N}} F^n(f)$ and the left hand side of (\ref{equation:left_dist_ps}) is $f\circ \bigvee_{n\in \mathbb{N}}G^n(id_X) = \bigvee_{n\in \mathbb{N}}f\circ G^n(id_X)$. We will now inductively show that for any $n\in \mathbb{N}$:
$$
F^n(f) = f\circ G^n(id_X).
$$
The assertion is true for $n=0$. Assume it holds for $n$ and consider $F^{n+1}(f) = f \vee \bigvee_{d\in \mathbb{D}}F^n(f)\circ \overline{\pi(d)}= f \vee \bigvee_{d\in \mathbb{D}}f\circ G^n(id_X)\circ \overline{\pi(d)}\stackrel{(i)}{=} f \vee f\circ \bigvee_{d\in \mathbb{D}} G^n(id_X)\circ \overline{\pi(d)} = f\circ(id_X\vee \bigvee_{d\in \mathbb{D}} G^n(id_X)\circ \overline{\pi(d)}) = f\circ G^{n+1}(id_X)$. The identity $(i)$ follows by left distributivity and $\mathsf{DCpo}^\vee$-enrichment of $\mathsf{K}$.
\end{proof}

\begin{remark}\label{remark:saturation_left_distrib}
Here, we want to discuss Theorem \ref{theorem:weak_bis_left_distributive} and its interpretation. If $\mathsf{K}$ is additionally left distributive then by Theorem~\ref{theorem:dcpo_left_adjoint} the change-of-base functor $[!,\mathsf{K}]$ admits a left adjoint $\Sigma_!$ with $\Sigma_!(\pi) = \mu x.(id_X\vee \bigvee_{d\in \mathbb{D}} x\circ \overline{\pi(d)})$ for any $\pi\in [\mathbb{D},\mathsf{K}]$. Therefore, in the light of the above, Theorem \ref{theorem:weak_bis_left_distributive} states that whenever $\mathsf{K}$ is left distributive weak bisimulation saturation can be performed on the level of $\mathsf{K}$.  In this case, weak behavioural morphisms on $\pi$ are simply strong homomorphisms whose domain is the endomorphism $\Sigma_!(\pi):X\to X$.
\end{remark}
\subsection{Weak bisimulation on $M$-flows}
Whenever $\mathbb{D}=M$ is a one-object category induced by a monoid $M=(M,\cdot,1)$ then the last assumption from Subsection~\ref{subsection:flow_weak_bisimulation} is vacously true. Moreover, for any weak behavioural morphism on a lax functor $\pi\in [M,\mathsf{K}]^J$ its domain matches the carrier of $\pi$. Hence,  $\overline{\pi_m} = \pi_m$ for any $m\in M$ and (\ref{equation:lfps}) becomes:

\begin{align}
\mu x. (f \vee  \bigvee_{m\in M}x\circ \pi_m) = \beta\circ f.\label{equation:lfps_flows}
\end{align}

\subsubsection{Examples of coalgebra flow weak bisimulations} From the point of view of coalgebra, the most prominent application of the theory presented above is weak bisimulation for elements of the category of $\mathbb{N}$-flows $[\mathbb{N},\mathcal{K}l(T)]^\mathsf{C}$. We devote a separate subsection to it (see Subsection \ref{subsec:weak_lax_coalgebra}). Below we present two examples of weak bisimulation for coalgebra $M$-flows with $M\neq \mathbb{N}$. In the first example we show that weak bisimulation for  $\underline{\alpha}$ from Example \ref{example:semantics_of_timed} viewed as a member of $[\mathbb{N}\times [0,\infty),\mathcal{K}l(\mathcal{P}(\Sigma_\tau\times \mathcal{I}d))]^\mathsf{Set}$ coincides with time-abstract bisimulation for timed processes \cite{Larsen199775}. In the second example we instantiate the definition of weak bisimulation on transition functors of homogeneous continuous Markov chains from Example \ref{example:markov_chain_transition}.

\begin{example}[Weak bisimulation(s) for timed processes]\label{example:timed_systems_weak}
Let us now go back to the semantics of timed processes from Example \ref{example:semantics_of_timed}. We will introduce two definitions of bisimulation for the semantics LTS $X\to \mathcal{P}([\Sigma_\tau \cup (0,\infty)]\times X)$ \cite{Larsen199775} and show that they coincide with our, lax functorial, weak bisimulations. We need to introduce the following notation first:
\begin{enumerate}
\item $x\stackrel{\tau}{\implies} y$ if $x(\stackrel{\tau}{\to})^\ast y$,
\item $x\stackrel{a}{\implies} y$ if $x\stackrel{\tau}{\implies} \circ \stackrel{a}{\to} \circ \stackrel{\tau}{\implies} y$ for $a\in \Sigma$,
\item $x\stackrel{t}{\implies} y$ if $x\stackrel{\tau}{\implies} \circ \stackrel{t_1}{\to} \circ \stackrel{\tau}{\implies} \circ \ldots \stackrel{\tau}{\implies} \circ \stackrel{t_n}{\to} \circ \stackrel{\tau}{\implies} y$ for $t,t_i\in (0,\infty)$ and $t=t_1+\ldots+t_n$,
\item $x\stackrel{\tau}{\multimap} y$ if $x\stackrel{t}{\implies} y$ for some $t\in (0,\infty)$,
\item $x\stackrel{a}{\multimap} y$ if $x\stackrel{\tau}{\multimap} \circ \stackrel{a}{\to} \circ \stackrel{\tau}{\multimap} y$ for $a\in \Sigma$.
\end{enumerate}
An equivalence relation $R$ on the set of timed processes $X$ is called \emph{weak timed bisimulation} \cite{Larsen199775} provided that whenever $(x,x')\in R$ then for all $\sigma \in \Sigma_\tau\cup (0,\infty)$:
$$
x\stackrel{\sigma}{\implies} y \text{ implies } \exists y' \text{ s.t. } x'\stackrel{\sigma}{\implies} y' \text{ and }(y,y')\in R. 
$$ 
The relation $R$ is \emph{weak time-abstract bisimulation} \cite{Larsen199775} provided that whenever $(x,x')\in R$ then for all $\sigma\in \Sigma_\tau$:
$$
x\stackrel{\sigma}{\multimap} y \text{ implies } \exists y' \text{ s.t. } x'\stackrel{\sigma}{\multimap} y' \text{ and }(y,y')\in R. 
$$ 
\begin{theorem}
An equivalence relation $R$ on the set of timed processes $X$   is a weak bisimulation on the lax functor $\underline{\alpha}\in [\mathbb{N},\mathcal{K}l(\mathcal{P}^{\Sigma,[0,\infty)})]$ if and only if it is a weak timed bisimulation.  The relation $R$ is a weak bisimulation on $\underline{\alpha}$ viewed as a member of $[\mathbb{N}\times [0,\infty),\mathcal{K}l(\mathcal{P}(\Sigma_\tau\times \mathcal{I}d))]^\mathsf{Set}$ if and only if it is a weak time-abstract bisimulation.
\end{theorem}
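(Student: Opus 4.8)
The plan is to reduce both statements to a single principle: for a flow over a left-distributive Kleisli category, weak bisimulation coincides with strong (kernel) bisimulation on its saturation $\Sigma_!(\underline{\alpha})$. I would then identify the two saturations with the coalgebras $\alpha^\ast$ and $\alpha^T$ of Example \ref{example:semantics_of_timed}, and finally translate ``strong bisimulation on the saturation'' into the two Larsen--Wang definitions. Both $\mathcal{K}l(\mathcal{P}^{\Sigma,[0,\infty)})$ and $\mathcal{K}l(\mathcal{P}(\Sigma_\tau\times\mathcal{I}d))$ are quantale-valued, hence $\mathsf{Sup}$-enriched, hence left distributive and $\mathsf{DCpo}^\vee$-enriched, so the standing hypotheses of Subsection \ref{subsection:flow_weak_bisimulation} hold after restricting to a small full subcategory as in Remark \ref{remark:smallness_problems}. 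Since both indexing categories $\mathbb{N}$ and $\mathbb{N}\times[0,\infty)$ are one-object, the coproduct $\sum_D\pi(D)$ collapses to the carrier $X$, $\overline{\pi_m}=\pi_m$, and equation (\ref{equation:lfps_flows}) applies. As $\mathsf{K}$ is left distributive, Theorem \ref{theorem:weak_bis_left_distributive} together with Remark \ref{remark:saturation_left_distrib} shows that a weak behavioural morphism on a flow $\pi$ is exactly a strong homomorphism out of $\Sigma_!(\pi)=\mu x.(id_X\vee\bigvee_m x\circ\pi_m)$, so weak bisimulation on $\pi$ is strong kernel bisimulation on $\Sigma_!(\pi)$.

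The second step is to compute the two saturations explicitly. For $\underline{\alpha}\in[\mathbb{N},\mathcal{K}l(\mathcal{P}^{\Sigma,[0,\infty)})]$ I would evaluate $\Sigma_!(\underline{\alpha})=\bigvee_{n}\alpha^{n}$, the reflexive--transitive closure of $\alpha$ taken in $\mathcal{K}l(\mathcal{P}^{\Sigma,[0,\infty)})$, and check against the explicit Kleisli composition of $\mathcal{P}^{\Sigma,M}$ that, read back as a coalgebra $X\to\mathcal{P}(\Sigma_\tau\times[0,\infty)\times X)$, it is precisely $\alpha^\ast$. For $\underline{\alpha}$ seen through Theorem \ref{theorem:change_of_monoids_lts_monad} as an object of $[\mathbb{N}\times[0,\infty),\mathcal{K}l(\mathcal{P}(\Sigma_\tau\times\mathcal{I}d))]$ with components $\pi_{(n,t)}$ the time-$t$, $n$-step LTS of Example \ref{example:semantics_of_timed}, I would show that $\bigvee_{(n,t)}\pi_{(n,t)}$ is already reflexive and transitive in $\mathcal{K}l(\mathcal{P}(\Sigma_\tau\times\mathcal{I}d))$ --- concatenating two $\alpha$-paths of arbitrary length and total duration is again such a path, and $\pi_{(0,0)}=id$ --- so that $\Sigma_!(\underline{\alpha})=\bigvee_{(n,t)}\pi_{(n,t)}$, which unwinds exactly to $\alpha^T$.

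With the saturations identified, each assertion becomes the standard statement that weak bisimulation is a strong bisimulation on the saturated system. I would set up the dictionary between the transitions of $\alpha^\ast$ and the relations $\stackrel{\sigma}{\implies}$, namely $x\stackrel{(\tau,0)}{\to}_{\alpha^\ast}y\iff x\stackrel{\tau}{\implies}y$, $x\stackrel{(\tau,t)}{\to}_{\alpha^\ast}y\iff x\stackrel{t}{\implies}y$ for $t>0$, and $x\stackrel{(a,0)}{\to}_{\alpha^\ast}y\iff x\stackrel{a}{\implies}y$ for $a\in\Sigma$, and likewise between $\alpha^T$ and the relations $\stackrel{\sigma}{\multimap}$. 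For an equivalence $R$, matching all transitions of the saturation is equivalent to matching the generating weak transitions: the forward direction is immediate since the generators are particular saturated transitions, while the converse follows by composing matched weak moves, using that a visible transition $x\stackrel{(a,t)}{\to}_{\alpha^\ast}y$ factors through $\stackrel{(\tau,t_1)}{\to}_{\alpha^\ast}\circ\stackrel{a}{\to}_\alpha\circ\stackrel{(\tau,t_3)}{\to}_{\alpha^\ast}$ and that $R$ is transitive and symmetric.

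The main obstacle is this last translation, and in particular its time-abstract instance. Proving that strong bisimulation on $\alpha^T$ coincides with the $\multimap$-based weak time-abstract bisimulation requires careful book-keeping of the interaction between genuinely silent actions (duration $0$) and delay steps (positive duration), together with the zero-duration boundary implicit in the definitions of $\stackrel{t}{\implies}$ and $\stackrel{\tau}{\multimap}$; here one must invoke the closure properties of the timed-process semantics to ensure that the $\multimap$-relations generate exactly the transitions of $\alpha^T$. The enrichment and saturation machinery carries the conceptual load --- reducing weak bisimulation to strong bisimulation on a closure --- while all of the delicate work lies in reconciling the categorical saturation with the hand-crafted double-arrow relations.
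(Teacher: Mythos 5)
Your proposal follows essentially the same route as the paper's own (sketched) proof: invoke Theorem \ref{theorem:weak_bis_left_distributive} and Remark \ref{remark:saturation_left_distrib} to reduce weak bisimulation on the flow to strong bisimulation on its saturation $\Sigma_!(\underline{\alpha})$, identify that saturation with $\alpha^\ast$ (resp.\ $\alpha^T$) from Example \ref{example:semantics_of_timed}, and then match the saturated transitions against the $\stackrel{\sigma}{\implies}$ and $\stackrel{\sigma}{\multimap}$ relations by decomposing and recomposing paths. The dictionary and composition argument you outline for the final translation is exactly the bookkeeping the paper carries out for the time-abstract case, so the proposal is correct and not a genuinely different proof.
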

\begin{proof} We only sketch the proof of the second statement. The first follows in an analogous manner. In the light of Theorem \ref{theorem:weak_bis_left_distributive} and Remark \ref{remark:saturation_left_distrib} it is enough to show that $R$ is a weak time-abstract bisimulation if and only if $R$ is a strong bisimulation on $\alpha^T:X\to \mathcal{P}(\Sigma_\tau\times X)$ as the functor $\Sigma_!$ from:
$$
\xymatrix{
 [\mathbb{N}\times [0,\infty),\mathcal{K}l(\mathcal{P}^{\Sigma,1})] \ar@/^1pc/[r]^{\Sigma_!} \ar@{}[r]|\perp &   [1,\mathcal{K}l(\mathcal{P}^{\Sigma,1})]\ar@/^1pc/[l]^{(-)\circ !}  
}
$$
maps $\underline{\alpha}$ to $\alpha^T$.
Assume $R$ is a weak time-abstract bisimulation on $X$. Take $(x,y)\in R$ and consider $x\stackrel{\sigma}{\to}_{\alpha^T} x'$ for $\sigma\in \Sigma_\tau$. This means that $x\stackrel{(\sigma,t)}{\to}_{\alpha^\ast} y$ which implies that either  $x\stackrel{(\tau,t_1)}{\to}_{\alpha} x_1 \stackrel{(\tau,t_2)}{\to}_{\alpha} \ldots \stackrel{(\tau,t_n)}{\to}_{\alpha} x_n  = x'$ with $t=t_1+\ldots + t_{n}$ and $x_i\in X$ for $\sigma=\tau$ or
$x\stackrel{(\tau,t_1)}{\to}_{\alpha^\ast} x_1 \stackrel{(\sigma,t_2)}{\to}_{\alpha} x_2 \stackrel{(\tau,t_3)}{\to}_{\alpha^\ast} x'$ for some $x_1,x_2\in X$, $\sigma \in \Sigma$ and $t=t_1+t_2+t_3$. In both cases this implies $x\stackrel{\sigma}{\multimap}x'$. Since $R$ is a time-abstract bisimulation there is $y'$ such that $y\stackrel{\sigma}{\multimap} y'$ and $(y,y')\in R$. But this also means that $y\stackrel{\sigma}{\to}_{\alpha^T} y'$. Hence, $R$ is a strong bisimulation on $\alpha^T$. The implication in the opposite direction is proved similarly. 
\end{proof}
\end{example}

\begin{example}[Weak bisimulation for CTMC's transition functors]\label{example:markov_weak_bisim} Here, we continue Example~\ref{example:markov_chain_transition} and characterize weak bisimulation on the transition functor $\pi= (\pi_t)_{t\geq 0}$ of the homogeneous CTMC $(X_t)_{t\geq 0}$.  Consider an equivalence relation $R$ on the state space $S$. For an abstract class $C$ of $R$ let us denote:
$$
p_{i,C}^t=\mathbb{P}(X_r\in C \text{ for some } r\geq t\mid X_0=i) \text{ and } p_{i,C} = p_{i,C}^0.
$$
\begin{lemma}\label{lemma:homogen_chain}
If $(X_t)_{t\geq 0}$ is homogeneous then $\{p_{i,C}\}_{i\in S}$ satisfies:
\begin{align}
p_{i,C}= \left \{\begin{array}{cc} 1 & \text{ if } i\in C,\\ \sup_{t\geq 0} \sum_{j\in S} p_{j,C}\cdot p_{i,j}(t) & \text{ otherwise.}  \end{array}\right.
\end{align}
\end{lemma}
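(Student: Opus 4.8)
The plan is to verify the two branches of the claimed recursion by reducing the ``otherwise'' case to two elementary facts about the reach probabilities $p_{i,C}^t$: a one-step decomposition coming from the Markov property together with homogeneity, and the monotonicity of $t\mapsto p_{i,C}^t$. First I would dispose of the case $i\in C$. Conditionally on $X_0=i$ with $i\in C$, the event $\{X_r\in C\text{ for some }r\geq 0\}$ already occurs at $r=0$, so $p_{i,C}=p_{i,C}^0=1$, which is the first branch.

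For the remaining case I would first establish the identity
\begin{equation*}
\sum_{j\in S} p_{j,C}\cdot p_{i,j}(t) = p_{i,C}^t \qquad \text{for every } t\geq 0.
\end{equation*}
The idea is to condition on the value of $X_t$. Decomposing $p_{i,C}^t=\mathbb{P}(X_r\in C\text{ for some }r\geq t\mid X_0=i)$ over the disjoint events $\{X_t=j\}$, the conditional probability $\mathbb{P}(X_r\in C\text{ for some }r\geq t\mid X_t=j,\,X_0=i)$ equals $\mathbb{P}(X_r\in C\text{ for some }r\geq t\mid X_t=j)$ by the Markov property, since $\{X_r\in C\text{ for some }r\geq t\}$ is an event of the post-$t$ process. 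By homogeneity, shifting time by $t$, this last quantity is precisely $p_{j,C}$. Summing against $p_{i,j}(t)=\mathbb{P}(X_t=j\mid X_0=i)$ then yields the displayed identity.

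The final ingredient is that $t\mapsto p_{i,C}^t$ is non-increasing: for $t_1\leq t_2$ the event $\{X_r\in C\text{ for some }r\geq t_2\}$ is contained in $\{X_r\in C\text{ for some }r\geq t_1\}$, whence $p_{i,C}^{t_2}\leq p_{i,C}^{t_1}$. Therefore $\sup_{t\geq 0}p_{i,C}^t=p_{i,C}^0=p_{i,C}$, and combining this with the identity above gives $\sup_{t\geq 0}\sum_{j\in S}p_{j,C}\cdot p_{i,j}(t)=p_{i,C}$, which is exactly the second branch. (Note that this computation in fact remains valid when $i\in C$ as well, so the two branches agree; the case split only serves to match the fixed-point form of Subsection~\ref{subsub:fully_prob_weak_bis}.)

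I expect the main obstacle to be the rigorous justification of the Markov step: one must argue that $\{X_r\in C\text{ for some }r\geq t\}$ is measurable with respect to the post-$t$ information and that the Markov property decouples it from $X_0$ given $X_t$, with homogeneity supplying the time-translation to $p_{j,C}$. For a chain on an at most countable state space this is standard, and once it is in place the monotonicity argument and the evaluation at $t=0$ are routine.
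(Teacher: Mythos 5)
Your proposal is correct and follows essentially the same route as the paper: condition $p_{i,C}^t$ on the value of $X_t$, use the Markov property to drop the conditioning on $X_0=i$, use homogeneity to shift time and recognize $p_{j,C}$, and identify $\sup_{t\geq 0}p_{i,C}^t$ with $p_{i,C}$. Your explicit monotonicity justification for the last step is a small point the paper leaves implicit, but the argument is the same.
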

\begin{proof}
It is clear that if $i\in C$ then $p_{i,C}=1$. For $i\notin C$ we have:
\begin{align*}
&p_{i,C} = \sup_{t\geq 0} p_{i,C}^t = \sup_{t\geq 0} \sum_{j\in S}\mathbb{P}(X_r\in C, r\geq t\mid X_t=j,X_0=i) \cdot \mathbb{P}(X_t=j\mid X_0=i) \stackrel{\dagger}{=}\\
&\sup_{t\geq 0} \sum_{j\in S}\mathbb{P}(X_r\in C, r\geq t\mid X_t=j) \cdot \mathbb{P}(X_t=j\mid X_0=i) \stackrel{\dagger\dagger}{=}\\
&\sup_{t\geq 0} \sum_{j\in S}\mathbb{P}(X_r\in C, r\geq 0\mid X_0=j) \cdot \mathbb{P}(X_t=j\mid X_0=i) =\sup_{t\geq 0} \sum_{j\in S}p_{j,C}\cdot p_{i,j}(t).
\end{align*}
The identity $(\dagger)$ follows by $(X_t)_{t\geq 0}$ being markovian and $(\dagger\dagger)$ by homogeneity of the given process.
\end{proof}
\begin{theorem}
 The relation $R$ is a weak bisimulation on the transition functor $\pi$  of a homogeneous CTMC $(X_t)_{t\geq 0}$ provided that for any $(i,j)\in R$ and any  abstract class $C$ of $R$ we have:
\begin{align}
\mathbb{P}(X_t\in C \text{ for some } t\geq 0\mid X_0=i) = \mathbb{P}(X_t\in C \text{ for some } t\geq 0\mid X_0=j).\label{id:equal_probab}
\end{align}
\end{theorem}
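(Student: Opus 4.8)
The plan is to unwind the definition of weak bisimulation for the $[0,\infty)$-flow $\pi=(\pi_t)_{t\geq 0}$ and reduce it to an explicit least-fixed-point computation. Since the monad $\mathbb{F}_{[0,\infty]}$ is neither left distributive nor $\mathsf{Sup}$-enriched, I cannot invoke Theorem~\ref{theorem:weak_bis_left_distributive} and must work directly with equation~(\ref{equation:lfps_flows}): a weak behavioural morphism on $\pi$ is a map $f\colon S\to Y$ in $\mathsf{Set}$ admitting an endomorphism $\beta$ with $\mu x.(f\vee\bigvee_{t\geq 0} x\circ\pi_t)=\beta\circ f$, and $R$ is a weak bisimulation exactly when it is the kernel pair of such an $f$. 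By the general theory of Subsection~\ref{subsection:weak_bis_coal} this amounts to testing whether the quotient map $f\colon S\to S_{/R}$, $i\mapsto i_{/R}$, is a weak behavioural morphism, i.e.\ whether $\alpha^\ast_f:=\mu x.(f\vee\bigvee_{t\geq 0} x\circ\pi_t)$ factors as $\beta\circ f$; equivalently, whether $\alpha^\ast_f(i)(C)$ depends only on the $R$-class of $i$, for every class $C$.

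The second step is to compute $\alpha^\ast_f$ explicitly. Here $f$ also denotes the Kleisli image of the quotient map, so $f(i)$ is the point mass $1\cdot i_{/R}$, with $f(i)(C)=1$ if $i\in C$ and $0$ otherwise, and Kleisli composition in $\mathcal{K}l(\mathbb{F}_{[0,\infty]})$ gives $(\alpha^\ast_f\circ\pi_t)(i)(C)=\sum_{j\in S}\alpha^\ast_f(j)(C)\cdot p_{ij}(t)$. Since joins in $\mathbb{F}_{[0,\infty]}$ are pointwise, the fixed-point equation for $q_{i,C}:=\alpha^\ast_f(i)(C)$ reads
\begin{align*}
q_{i,C}=\max\Big(f(i)(C),\ \sup_{t\geq 0}\sum_{j\in S}q_{j,C}\cdot p_{ij}(t)\Big),
\end{align*}
so that $q_{i,C}=1$ for $i\in C$ and $q_{i,C}=\sup_{t\geq 0}\sum_{j\in S}q_{j,C}\cdot p_{ij}(t)$ otherwise. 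This is precisely the system solved by the reachability probabilities $p_{i,C}$ in Lemma~\ref{lemma:homogen_chain}.

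The crux is to identify this least fixed point with the reachability probability $p_{i,C}=\mathbb{P}(X_t\in C\text{ for some }t\geq 0\mid X_0=i)$. Writing $F(x)=f\vee\bigvee_{t\geq 0}x\circ\pi_t$, so that $q=\bigvee_{n}F^n(f)$, the inequality $q_{i,C}\leq p_{i,C}$ is immediate from Lemma~\ref{lemma:homogen_chain}, which exhibits $p$ as a fixed point of $F$ while $q$ is the least one. The reverse inequality $q_{i,C}\geq p_{i,C}$ is the step I expect to be the main obstacle: the iterate $F^n(f)$ at $(i,C)$ computes the supremum, over all choices of finitely many time increments, of the probability that the chain is observed in $C$ at one of the corresponding instants, and one must show that refining these observation grids makes the finite-stage values increase up to the genuine continuous-time reachability probability. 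I would settle this by a monotone-convergence argument on the finite-dimensional distributions of $(X_t)_{t\geq 0}$, using right-continuity of the sample paths (and non-explosiveness of the chain) to approximate the event $\{\exists t\geq 0: X_t\in C\}$ from within by its restrictions to finite time grids. Once $q_{i,C}=p_{i,C}$ is established, $\alpha^\ast_f$ factors through $f$ if and only if $p_{i,C}=p_{j,C}$ for all $(i,j)\in R$ and all classes $C$, which is exactly condition~(\ref{id:equal_probab}); this proves the theorem.
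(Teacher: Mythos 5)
Your proposal follows the same route as the paper: reduce weak bisimulation on the $[0,\infty)$-flow $\pi$ to equation (\ref{equation:lfps_flows}) for the quotient map $f^\sharp$, compute the resulting fixed-point system for $q_{i,C}=\mu x.(f^\sharp\vee\bigvee_{t}x\circ\pi_t)(i)(C)$, identify it with the system of Lemma \ref{lemma:homogen_chain}, and observe that factoring through $f^\sharp$ is exactly condition (\ref{id:equal_probab}). The one substantive difference is that you are more careful at the step the paper compresses into the phrase ``By Lemma \ref{lemma:homogen_chain}'': that lemma only exhibits $(p_{i,C})_i$ as \emph{a} fixed point of the operator $F(x)=f^\sharp\vee\bigvee_t x\circ\pi_t$, which yields $q_{i,C}\leq p_{i,C}$ since $q$ is the \emph{least} fixed point; it does not by itself give $q_{i,C}\geq p_{i,C}$. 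Your proposed remedy --- bounding $\mathbb{P}(\exists t\in\{t_1,\dots,t_n\}:X_t\in C\mid X_0=i)$ by $F^n(f^\sharp)(i)(C)$ via the Markov property and homogeneity, and then letting the finite observation grids refine, using right-continuity (and non-explosiveness) of the sample paths to exhaust the event $\{\exists t\geq 0:X_t\in C\}$ --- is exactly the missing ingredient, and it works. So your argument is correct and, if anything, more complete than the paper's; the paper buys brevity by leaving the identification of the least fixed point with the reachability probabilities implicit, while you make visible that this identification rests on probabilistic path-regularity assumptions that are not purely order-theoretic.
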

\begin{proof}
Let $f:S\to S_{/R}; i\mapsto [i]_{/R}$. By Lemma \ref{lemma:homogen_chain}:
$$
\mu x. (f^\sharp \vee \bigvee_{r\in [0,\infty)} x\circ \pi_r):S\to \mathbb{F}_{[0,\infty]}(S_{/R}),i\mapsto \sum_{C\in S_{/R}}p_{i,C}\cdot C.
$$
In other words, $\mu x. (f^\sharp \vee \bigvee_{r\in [0,\infty)} x\circ \pi_r)(i)(C) = p_{i,C}$. Satisfaction of the identity (\ref{id:equal_probab}) is equivalent to existence of an $\mathbb{F}_{[0,\infty]}$-coalgebra $\beta:S_{/R}\to \mathbb{F}_{[0,\infty]}S_{/R}$ which makes $\mu x. (f^\sharp \vee \bigvee_{r\in [0,\infty)} x\circ \pi_r) = \beta \circ f^\sharp$ hold.
\end{proof}

\end{example}

\subsubsection{Cumulative behaviour between members of different flow categories} Although in this paper we consider only $M$-flows as examples of lax functors, weak bisimulation from Subsection \ref{subsubsection:weak_bisim:lax} is the defined in a more general setting (i.e. for lax functors whose domain is arbitrary small category $\mathbb{D}$). This level of generality can be easily justified. It is interesting to note that, in particular, the setting allows us to compare cumulative behaviour between lax functors on not necessarily the same domains. To see this consider two lax functors $\pi_1\in [\mathbb{N},\mathsf{K}]$ and $\pi_2\in [[0,\infty),\mathsf{K}]$ whose carriers are $X$ and $Y$ respectively. They naturally impose a lax functor $\pi$ from the category $\mathbb{N}+[0,\infty)$ on two objects to $\mathsf{K}$. Weak bisimulation on $\pi$ is a relation on $X+Y$ which compares cumulative behaviour of $\pi_1$ and $\pi_2$ simultaneously.

\subsection{Weak bisimulation for coalgebras revisited}\label{subsec:weak_lax_coalgebra} This subsection is devoted to the connection between coalgebraic weak bisimulation from Subsection \ref{subsection:weak_bis_coal} and weak bisimulation on members of $[\mathbb{N},\mathcal{K}l(T)]^\mathsf{C}$ or, in general, of $[\mathbb{N},\mathsf{K}]^J$.

Coalgebraic saturation described in Subsection \ref{subsection:weak_bis_coal} is given by the adjunction (\ref{adjunction_coalgebraic_saturation}). It is interesting to note that it can be considered a consequence of an adjunction between suitable lax functor categories. Indeed, if $\mathsf{K}$ is a small $\omega\mathsf{Cpo}^\vee$-enriched category then by Proposition \ref{proposition_2_T_omega} we have:
$$
\xymatrix{
\mathsf{End}^\leq (\widehat{\mathsf{K}})\ar@/^1pc/[r]^{\underline{(-)}} \ar@{}[r]|\perp & [\mathbb{N},\widehat{\mathsf{K}}] \ar@/^1pc/[l]^{{(-)}_1}\ar@/^1pc/[r]^{\Sigma_!} \ar@{}[r]|\perp & [1,\widehat{\mathsf{K}}]\ar@/^1pc/[l]^{[!,\widehat{\mathsf{K}}]}
}.
$$
Since $[1,\widehat{\mathsf{K}}]\cong\mathsf{End}^{\leq\ast} (\widehat{\mathsf{K}})$ the composition of the above adjunctions yields (\ref{adjunction_coalgebraic_saturation}).

\begin{theorem}
Let $\alpha:X\to X$ be an endomorphism in $\mathsf{K}$. A relation $R\rightrightarrows X$ is a weak bisimulation on $\alpha$ if and only if $R$ is a weak bisimulation on $\underline{\alpha}:\mathbb{N}\to \mathsf{K}$. 
\end{theorem}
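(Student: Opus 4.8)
The plan is to reduce the statement to a comparison of weak behavioural morphisms. In both settings a weak bisimulation is by definition the kernel pair of a weak behavioural morphism, so it suffices to prove that the arrows $f:X\to Y$ in $J$ that are weak behavioural morphisms on the endomorphism $\alpha$ are exactly those that are weak behavioural morphisms on the lax functor $\underline{\alpha}$; the kernel pairs then automatically agree. Before doing so I would observe that the carriers match: since $\underline{\alpha}$ lives over the one-object category $\mathbb{N}=(\mathbb{N},+,0)$, we have $\sum_{D\in\mathbb{N}}\underline{\alpha}(D)=\underline{\alpha}(*)=X$, so a weak behavioural morphism on $\underline{\alpha}$ has the same domain $X$ as a weak behavioural morphism on $\alpha$, and the two candidate classes of morphisms share the same domain and codomain data.

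Next I would invoke the two fixed-point characterisations already available. By (\ref{equation:ps}), $f$ is a weak behavioural morphism on $\alpha$ iff there is $\beta:Y\to Y$ with $\mu x.(f\vee x\circ\alpha)=\beta\circ f$. Since $\underline{\alpha}$ is the $\mathbb{N}$-flow with $\underline{\alpha}_n=\alpha^n$ and, for a one-object domain, $\overline{\underline{\alpha}_n}=\underline{\alpha}_n=\alpha^n$, equation (\ref{equation:lfps_flows}) (with $M=\mathbb{N}$) says that $f$ is a weak behavioural morphism on $\underline{\alpha}$ iff there is $\beta:Y\to Y$ with $\mu x.\bigl(f\vee\bigvee_{n\in\mathbb{N}}x\circ\alpha^n\bigr)=\beta\circ f$. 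Hence the entire theorem reduces to the single identity of least fixed points
$$
\mu x.(f\vee x\circ\alpha)=\mu x.\Bigl(f\vee\textstyle\bigvee_{n\in\mathbb{N}}x\circ\alpha^{\,n}\Bigr),
$$
for then the same $\beta$ witnesses both conditions and the two classes of weak behavioural morphisms, together with their kernel pairs, coincide.

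The core step is to establish this identity, which I expect to be the only substantive point. Write $a$ and $b$ for the left- and right-hand least fixed points, computed as the suprema of the ascending chains $F_1^k(f)$ and $F_2^k(f)$, where $F_1(x)=f\vee x\circ\alpha$ and $F_2(x)=f\vee\bigvee_n x\circ\alpha^n$ (these suprema are well defined exactly as in the proof of Lemma \ref{lemma:identity_theta}). For $a\leq b$ I would note that the $n=1$ summand gives $b=f\vee\bigvee_n b\circ\alpha^n\geq f\vee b\circ\alpha=F_1(b)$, so $b$ is a pre-fixed point of $F_1$ and $a\leq b$ by leastness. For $b\leq a$ I would use that $a=f\vee a\circ\alpha$ yields $a\geq a\circ\alpha$; composing on the right with $\alpha$ and iterating gives the descending chain $a\geq a\circ\alpha\geq a\circ\alpha^2\geq\cdots$, so $a\circ\alpha^n\leq a$ for every $n$, with equality at $n=0$. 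Consequently $\bigvee_n a\circ\alpha^n=a$, whence $F_2(a)=f\vee a=a$, so $a$ is a fixed point of $F_2$ and $b\leq a$.

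The argument carries no real obstacle; the only place needing a little care is the monotonicity step $a\geq a\circ\alpha^n$, which relies on right composition being order preserving and on the supremum $\bigvee_n x\circ\alpha^n$ existing in $\mathsf{K}$ — both guaranteed by the standing $\mathsf{DCpo}^\vee$- (equivalently, for $\mathbb{D}=\mathbb{N}$, $\omega\mathsf{Cpo}^\vee$-) enrichment.
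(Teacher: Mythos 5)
Your proposal is correct and follows essentially the same route as the paper: the paper's proof also reduces the statement to the single identity $\mu x.(f\vee x\circ\alpha)=\mu x.(f\vee\bigvee_{n}x\circ\alpha^{n})$ via the characterisations (\ref{equation:ps}) and (\ref{equation:lfps_flows}), though it merely asserts that identity, whereas you supply the (correct) two-inequality verification.
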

\begin{proof}
This follows directly by the definition of weak bisimulation for endomorphisms and lax functors and the fact that
$$
\alpha^\ast_f =\mu x . (f \vee x\circ \alpha)= \mu x . (f \vee \bigvee_{n\in \mathbb{N}} x\circ \alpha^n).
$$
\end{proof}

\section{Summary}\label{section:summary}
We presented the framework of lax functors as a setting that generalizes the setting of endomorphisms in which we can introduce the notion of weak bisimulation. Just like a single endomorphism (understood here as a coalgebra with silent moves) is a process with discrete time, a lax functor can
be viewed as e.g. continuous time process or a system of processes. We showed that in many cases, the change-of-base functor between lax functor categories admits a left adjoint and that the
adjunction $[\mathbb{N},\mathsf{K}]\leftrightarrows [1,{\mathsf{K}}]$ plays an important role in coalgebraic saturation and  weak bisimulation. Using the adjunction (\ref{adjunction_weak_bis_lax}) we introduced the
notion of weak bisimulation on a lax functor. This relation
takes into account its cumulative behaviour. 

We plan to investigate to what extent the setting of lax functors is applicable. Indeed, it seems there is plethora of examples of timed structures found in the literature ranging from stochastic timed automata semantics \cite{BBBMBGJ-lmcs14} to generalized flow systems \cite{davoren07} that could possibly fit it. 
\subsubsection*{Acknowledgements}
I express my gratitude to Marco Peressotti for fruitful discussions on timed processes. I am very grateful to Agnieszka Piliszek for helping me sort out the Markov chain example. I would like to thank Tony Barrett for his linguistic support. Finally, I truly appreciate the anonymous referees for their valuable remarks and comments.

\bibliographystyle{abbrv}
\bibliography{biblio}

\end{document}